\documentclass[11pt]{article}
\pdfoutput=1

\usepackage[utf8]{inputenc}
\usepackage[T1]{fontenc}
\usepackage[letterpaper,margin=1in]{geometry}
\usepackage{microtype}
\usepackage[usenames,dvipsnames]{xcolor}
\usepackage{amsmath,amssymb,amsthm,mathtools,mathrsfs}
\usepackage{enumitem}
\usepackage{algpseudocode}
\algrenewcommand\algorithmicrequire{\textbf{Input:}}
\algrenewcommand\algorithmicensure{\textbf{Output:}}
\usepackage{booktabs,array,needspace,longtable,calc}
\usepackage[numbers,sort&compress]{natbib}
\usepackage{hyperref}
\usepackage[nameinlink,noabbrev]{cleveref}
\crefname{theorem}{Theorem}{Theorems}
\crefname{lemma}{Lemma}{Lemmas}
\crefname{corollary}{Corollary}{Corollaries}
\crefname{proposition}{Proposition}{Propositions}
\crefname{definition}{Definition}{Definitions}
\crefname{algorithm}{Algorithm}{Algorithms}
\crefname{section}{Section}{Sections}
\crefname{subsection}{Section}{Sections}
\crefname{subsubsection}{Section}{Sections}
\crefname{appendix}{Appendix}{Appendices}
\crefname{table}{Table}{Tables}

\hypersetup{
  pdftitle={Kadison--Singer partitions and Bilu--Linial graph signings in polynomial time},
  pdfauthor={Ali Jadbabaie, Amin Saberi, Suvrit Sra},
  pdfsubject={Kadison--Singer partitions and Bilu--Linial-scale graph signings},
  pdfkeywords={Kadison-Singer, Bilu-Linial, matrix discrepancy, deterministic rounding, spectral potential, graph signings, randomized repair}
}

\usepackage{ss}

\definecolor{cdarkblue}{RGB}{30,75,170}
\definecolor{cdarkred}{RGB}{180,0,0}
\definecolor{cdarkgreen}{RGB}{0,130,0}

\setlist[itemize]{topsep=0.35em,itemsep=0.2em,leftmargin=2em}
\setlist[enumerate]{topsep=0.35em,itemsep=0.2em,leftmargin=2.2em}
\allowdisplaybreaks[1]
\newcommand{\R}{\mathbb R}

\newcommand{\E}{\mathbb E}

\DeclareMathOperator{\Tr}{tr}
\DeclareMathOperator{\Sym}{Sym}

\DeclareMathOperator{\diag}{diag}
\newcommand{\norm}[1]{\left\|#1\right\|}

\DeclareMathOperator{\adj}{adj}
\DeclareMathOperator{\ran}{range}
\newcommand{\HH}{\mathbb H}
\newcommand{\Eop}{\mathcal E}
\newcommand{\Lop}{\mathcal L}
\newcommand{\Bcal}{\mathcal B}
\newcommand{\ip}[2]{\langle#1,#2\rangle}
\newcommand{\Cstar}{C_*}
\newcommand{\Czero}{C_0}

\newtheorem{theorem}{Theorem}[section]
\newtheorem{proposition}[theorem]{Proposition}
\newtheorem{lemma}[theorem]{Lemma}
\newtheorem{corollary}[theorem]{Corollary}
\theoremstyle{definition}

\newtheorem{algorithm}[theorem]{Algorithm}
\theoremstyle{remark}
\newtheorem{remark}[theorem]{Remark}

\numberwithin{equation}{section}

\makeatletter
\g@addto@macro\@maketitle{\vskip 0.10in\noindent\normalsize\itshape\@date}
\makeatother

\title{Kadison--Singer partitions and Bilu--Linial graph signings in polynomial time}
\author{\name Ali Jadbabaie \email{jadbabai@mit.edu}\\
\addr Massachusetts Institute of Technology\\[2pt]
\name Amin Saberi \email{saberi@stanford.edu}\\
\addr Stanford University\\[2pt]
\name Suvrit Sra \email{s.sra@tum.de}\\
\addr Technical University of Munich\\[2pt]
}
\date{19 September 2026}

\begin{document}
\maketitle

\begin{abstract}
\small
We prove two main algorithmic results in spectral discrepancy.
First, we give a deterministic polynomial-time rounding theorem for rational
positive semidefinite matrices of arbitrary rank. The algorithm starts from any rational fractional signing and assigns one
sign per original matrix. Its discrepancy is less than $3.37\,\|\sum_i \Tr(A_i)A_i\|^{1/2}$.
This yields Kadison--Singer half-partitions with error below
$1.69\sqrt{\varepsilon}$, as well as deterministic graph signings that control
signed adjacency and signed degrees simultaneously.  The proof builds on the spectral-potential method of Ezeunala and Jiang
(2026) and introduces a new way to choose rounding directions.
We prove polynomial bit complexity for the rounding procedure.

Second, we give a Las Vegas algorithm for the Bilu--Linial signing problem on
an arbitrary prescribed graph.  If $G$ has $n$ vertices and maximum degree $\Delta\ge3$, the algorithm
terminates almost surely. It uses fewer than $100n^3$ insertion attempts in
expectation and returns a signing with $\|A_s\|<2\sqrt{2(\Delta-1)}$.
For bipartite graphs its one-sided form gives the sharp universal bound
$\|A_s\|<2\sqrt{\Delta-1}$.  The algorithm builds the signing by inserting vertices and recursively
deleting and restoring neighbors after rejected insertions. 
In the analysis, the $\sqrt2$ gap to the Bilu--Linial conjecture comes from
a factor of two in the bound for vertex deletions in the two-sided case.  On a $d$-regular bipartite Ramanujan base the same signing produces a
Ramanujan $2$-lift of that prescribed base.
\end{abstract}

\section{Introduction}\label{sec:intro}

Spectral discrepancy asks whether discrete signs can balance a matrix-valued
sum with a bound independent of the ambient dimension. We consider two forms
of this problem. In Kadison--Singer discrepancy, one signs positive
semidefinite summands to balance every direction simultaneously. In graph
signing, one signs the edges of a prescribed graph to make the spectral radius
of its signed adjacency matrix small. We give an algorithm for each problem,
using different methods.

Our first main result is a deterministic algorithm for matrix discrepancy
with higher-rank inputs. At rank one, Weaver's formulation~\cite{Weaver} connected vector discrepancy
to the Kadison--Singer problem, and Marcus, Spielman, and
Srivastava~\cite{MSS2} proved the required rank-one existence theorem.  In the
matrix-variance scale, Kyng, Luh, and Song~\cite{KLS} obtained coefficient
$4$, and Xie, Xu, and Zhu~\cite{XXZ} improved the existential coefficient to
$3$.  Ezeunala and Jiang~\cite{EJ} gave a deterministic polynomial-time
rank-one rounding algorithm with coefficient $13$.  In contemporaneous work, Song and Yue~\cite{SongYue} obtain coefficient
$3.3443$ for rank-one inputs in deterministic polynomial time.
Kathuria~\cite{KathuriaKS} gives a different algorithm for
Weaver's problem in the real-arithmetic model.

Our coefficient below is therefore not the best rank-one constant. But our
theorem addresses the more general setting of matrices of arbitrary rank signing. It also allows arbitrary rational fractional starts, and we prove polynomial bit complexity in the trace scale.

To handle higher-rank inputs, we adapt the potential of
Ezeunala and Jiang~\cite{EJ}. The potential is an upper bound on the discrepancy,
obtained by minimizing over two auxiliary positive definite matrices and a
scalar. We introduce a way to choose local rounding directions that decrease
this potential. A completion-of-squares identity reduces the required
second-derivative estimate to a fixed two-dimensional scalar inequality. Verifying this inequality is the only
computer-assisted step. The verification uses exact integer arithmetic; the
certificate and a program that checks it will be made available on
GitHub~\cite{certificate}.

To state the rounding theorem, take positive semidefinite matrices
$A_1,\ldots,A_N$ and a fractional signing $x^0\in[-1,1]^N$.  The relevant
scale (``variance'') is
\begin{equation}\label{eq:scale}
 V=\Big\|\sum_{i=1}^N\Tr(A_i)A_i\Big\|.
\end{equation}
Here $\|\cdot\|$ denotes the operator norm.  The trace scale equals
$\|\sum_iA_i^2\|$ at rank one and dominates it in general, since
$A_i^2\preceq\Tr(A_i)A_i$.

\begin{theorem}[Deterministic PSD rounding]\label{thm:main}
Let $A_1,\ldots,A_N$ be rational Hermitian positive semidefinite $d\times d$
matrices of arbitrary rank, and let $x^0\in[-1,1]^N$ be rational. If $V>0$, a
deterministic algorithm returns $s\in\{-1,1\}^N$ such that
\begin{equation}\label{eq:main}
 \Big\|\sum_i(s_i-x_i^0)A_i\Big\|<3.37\sqrt V.
\end{equation}
Its running time is polynomial in $N$, $d$, and the input bit length. Each
original matrix receives one sign. If $V=0$, every input matrix is zero.
\end{theorem}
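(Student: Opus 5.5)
We prove \cref{thm:main} by deterministic iterated rounding driven by a spectral potential, following the template of Ezeunala and Jiang~\cite{EJ} but with a potential and step rule adapted to higher-rank summands. First, the trivial case: if $V=0$ then $\sum_i\Tr(A_i)A_i=0$. Each term is positive semidefinite, so $\Tr(A_i)A_i=0$ for every $i$. Hence either $\Tr(A_i)=0$, which forces $A_i=0$ because $A_i\succeq0$, or $A_i=0$ directly. For $V>0$ we rescale so that $V=1$.

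For the main case we maintain a fractional point $x\in[-1,1]^N$, starting at $x^0$, and write $M(x)=\sum_i(x_i-x_i^0)A_i$. We define a potential $\Phi(x)$ as an upper bound on $\lambda_{\max}(M)$ and $\lambda_{\max}(-M)$ simultaneously. It is obtained by minimizing over a scalar $u$ and two positive definite matrices $P,Q$, through barrier-type expressions such as $\Tr\log$ or resolvent terms in $uI\mp M$, plus a variance correction. That correction should be $\sum_i(1-x_i^2)\Tr(A_i)A_i$ rather than $\sum_i(1-x_i^2)A_i^2$. The point is that a rank-$r$ update $\delta A_i$ has second-order effect controlled by $\Tr(A_i)A_i$, via $A_iXA_i\preceq \Tr(A_iX)\,A_i$-type bounds and $A_i^2\preceq\Tr(A_i)A_i$. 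At the start, $M=0$ and the variance term is at most $V=1$, so $\Phi(x^0)\le c_0$ for an explicit constant $c_0$. At the end, with every coordinate at $\pm1$, the variance term vanishes and $\Phi$ dominates $\|M\|$. The target is therefore $c_0<3.37$ together with monotonicity of $\Phi$ along the walk.

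The walk proceeds as follows. Among the fractional coordinates $F$, we choose a direction $y$ supported on $F$ lying in the kernel of a few linear constraints. These constraints are the first-order terms of $\Phi$ (traces $\Tr(A_iG)$ for the relevant gradient matrices $G$), together with a normalization that makes the second-order term in both directions $\pm y$ nonpositive. Since $\Phi$ is a minimum over $(u,P,Q)$, we fix the optimizers at the current point and bound the potential at $x\pm ty$ by plugging in the same parameters. A Taylor expansion with exact resolvent identities gives the change. The first-order part is killed by the kernel constraints. The second-order part combines the term $\Tr\big((\sum_iy_iA_i)R(\sum_iy_iA_i)R\big)$ with the decrease $-\sum_i 2x_iy_i\Tr(A_i)A_i-\sum_iy_i^2\Tr(A_i)A_i$ of the variance term. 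Completing the square, the net second-order term should reduce to a scalar inequality in two parameters, namely the ratio of the barrier level to the eigenvalue gap and the step size. We then verify that inequality on a compact box by exact interval arithmetic, which is the computer-assisted step. We move along $y$ until a coordinate hits $\pm1$, or until the potential-safe step ends and we re-solve. Each full step freezes at least one coordinate, so there are at most $N$ phases. The dimension count needs $|F|$ larger than the number of constraints. Once $|F|$ drops to $O(1)$, we round the remaining coordinates by brute force and absorb their contribution into the slack between $c_0$ and $3.37$. This is also where the trace scale helps, since each single $A_i$ has $\|A_i\|^2\le\|\Tr(A_i)A_i\|\le 1$.

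For bit complexity, the optimizers $(u,P,Q)$ need not be exact. We round them to rational approximations with polynomially many bits, which loses an additive $2^{-\mathrm{poly}}$ in the potential at each of the $N$ steps; the strict inequality and the slack absorb this. Step lengths are computed from rational data truncated to keep sizes polynomial, and each truncation is chosen conservatively so that monotonicity still holds. I expect the main obstacle to be the second-order estimate. The difficulty is finding the right direction choice and completion of squares so that the higher-rank cross terms collapse to a fixed two-variable scalar inequality with enough margin to reach the constant $3.37$. I would first try choosing $y$ orthogonal to $\Tr(A_iR_+)$ and $\Tr(A_iR_-)$ and weighted by $\Tr(A_i)$, and see whether the resulting quadratic form is dominated by the variance decrease.
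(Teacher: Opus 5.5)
Your $V=0$ argument is fine, and bounding the higher-rank coupling by $A_iXA_i\preceq\Tr(A_iX)A_i$ is exactly the paper's trace map $\Eop_x$. The gap is the step you flag yourself. The proposal has no mechanism that produces a direction whose second-order change is nonpositive, and the paper needs it uniformly negative.

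Freezing the optimizers does not supply such a direction:
\begin{itemize}
\item For a constrained potential such as \eqref{eq:potential}, freezing $(t,X,Y)$ is not feasible to first order in both directions $\pm y$. The two constraint matrices move by $s\sum_iy_i(1+c\psi_i'b_i)M_i$ and $-s\sum_iy_i(1-c\psi_i'a_i)M_i$, so both must vanish. That is $2p=\Theta(d^2)$ scalar constraints, and once the active inputs are linearly independent, as after preprocessing, the only solution is $y=0$.
\item In any formulation, the second-order term couples coordinates. In the paper the responses satisfy $(I-\mathsf T)y=JD_xv$, with $\mathsf T$ built from $\Tr(M_iXM_jX)$ and $\Tr(M_iYM_jY)$. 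A $y$ taken from the kernel of a few linear constraints gives no control over these off-diagonal interactions.
\item The curvature from the weight function is certified only near coordinatewise balance $\norm{\nabla R}_\infty\le g_*$, because \eqref{eq:variation} contains $-\sum_i(\psi_i'/\psi_i)(\partial_iR)v_i^2$. Balance is a property of the current point, so no constraint on $y$ can supply it. When the gradient is large, the paper instead steps along a frame direction of definite slope.
\end{itemize}
The decisive idea you are missing is the response frame. Its directions have covariance $\Gamma^{-1}/\Tr\Gamma^{-1}$, with $\Gamma=\Bcal^*\Bcal$ and $\Bcal=W^{-1}(I-\mathsf T)^{-1}JD_x$, so the metric-weighted responses are isotropic. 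For this covariance, the two-slot completion of Lemma~\ref{lem:completion} bounds the coupled terms by local $2\times2$ conditions. Combined with the purity bounds of Lemma~\ref{lem:density}, these become a scalar inequality certified for the specific weight $\psi=\sqrt{1-x^2}P_b(x^2)$ at $c=567/200$. That is where $2\sqrt c<3.37$ comes from; your plan has nothing that could fix a constant.

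The outer loop also fails as written. Rounding the last $O(1)$ coordinates by brute force cannot be absorbed into a fixed slack, because the loss is not small. Suppose $x_i=0$, the current $S$ has extreme eigenvectors $e_{+},e_{-}$ with eigenvalues $\pm a$, and $A_i=\mu ww^*$ with $w=(e_{+}+e_{-})/\sqrt2$. Then $\norm{S\pm A_i}\ge a+\mu/2$ for both signs. Taking $\mu=\sqrt V/2$ still leaves three quarters of the variance budget to the other inputs. Absorbing such losses would require $3.37-c_0$ to be of order one per leftover coordinate, in units of $\sqrt V$. For the paper's potential that slack is about $0.0025$.

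Similarly, ``each full step freezes a coordinate, so at most $N$ phases'' ignores steps cut short when the local expansion expires. Those steps freeze nothing. With a merely nonpositive second-order term, no quantity decreases by a definite amount, so neither termination nor your per-step precision accounting follows.

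The paper avoids both problems:
\begin{itemize}
\item It needs no dimension count: the frame has exactly $m$ directions for every $m\ge1$.
\item Coordinates are fixed either at no cost or within small allowances. The no-cost cases are preprocessing and the endpoint test, which keeps the current $(t,X,Y)$ feasible. Otherwise they are fixed within the $\tau,\sigma,\delta_E$ allowances, costing $M(\tau+\sigma+\delta_E)$ in total.
\item Averaged curvature at most $-a_*$, together with the step floor $s_c$, makes each accepted move decrease $R$ by at least $\Delta_*=a_*s_c^2/128$. Hence there are at most $4/\Delta_*$ local moves.
\end{itemize}
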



Applying the theorem to $\sum_i A_i=T$ with
$\Tr(A_i)\le\varepsilon$ gives two parts, each within
$1.69\sqrt{\varepsilon\norm T}$ of $T/2$.  This covers Kadison--Singer and
higher-rank partitions without requiring their total to be the identity.
Repeated halving then yields multi-way partitions and small-diagonal paving.
For multi-way partitions, related work includes the asymptotically optimal
multi-paving theorem of Ravichandran and Srivastava~\cite{RS}. For balanced
signings more broadly, Boolean small-ball methods~\cite{BSB} provide another
analytic approach. Our discrepancy bound comes from controlling the potential throughout
rounding. Each fractional coordinate has a positive weight that vanishes when the coordinate becomes a sign. Two coupled matrix inequalities use these weights to bound the largest and smallest eigenvalues
of the signed sum. Rounding a coordinate removes its weighted terms from both
inequalities. If they remain feasible with the same auxiliary variables, the
potential cannot increase. If every coordinate fails this test, we instead
decrease the potential by a small change in the fractional signing.

To choose this change, we linearize the two inequalities at the current minimizer, holding
their common scalar bound fixed. The resulting first-order changes in the
auxiliary matrices are their \emph{responses}. We construct a finite family
of coordinate directions whose responses are orthogonal in a suitable inner
product, and call it the \emph{response frame}. A completion-of-squares
identity bounds the potential's second derivative averaged over these
directions, retaining the interaction between the two responses. For
higher-rank inputs, trace inequalities for positive matrices replace the
rank-one identities. The resulting two-dimensional inequality holds with a
strict margin.

This extension to higher rank gives graph signings by encoding each edge
as one rank-two PSD matrix. Each matrix receives one sign, so the theorem
controls signed adjacency and signed-degree matrices simultaneously. The construction also
allows weights and fractional starts; Section~\ref{sec:applications} gives
the details. These bounds are of order $\sqrt\Delta$, but their constants
remain far from the Bilu--Linial bound. This motivates our second main result,
which treats signing a prescribed graph directly.

For this prescribed-graph problem, our second main result is a Las Vegas
algorithm.  Bilu and Linial~\cite{BL} related edge signings to graph lifts and
conjectured that every $\Delta$-regular graph admits a signing with
$\|A_s\|\le2\sqrt{\Delta-1}$.  Marcus, Spielman, and Srivastava~\cite{MSS1} proved the sharp one-sided
existence theorem, which bounds the largest eigenvalue. For bipartite graphs,
spectral symmetry gives the same bound on the magnitude of the smallest
eigenvalue.  For an
arbitrary prescribed graph of maximum degree at most $\Delta$, our randomized
repair algorithm finds in expected polynomial time a signing satisfying
\[
        \|A_s\|<2\sqrt{2(\Delta-1)}.
\]
For a prescribed bipartite graph it reaches the sharp universal bound
$\|A_s\|<2\sqrt{\Delta-1}$.  The expected number of insertion attempts is less
than $100n^3$, with an explicit exponential tail bound.

The algorithm obtains these bounds by inserting vertices one at a time. Determinants of matrices
encoding the spectral constraints determine the probabilities of accepting an
insertion or deleting a neighbor. After a deletion, the algorithm restores
that neighbor recursively before retrying the insertion. It does not search
over interlacing polynomials. Section~\ref{sec:repair} gives the theorem and
proof. The $\sqrt2$ loss in the general case comes from a factor of two in the
bound for deletions when both ends of the spectrum are constrained.  For a prescribed $d$-regular bipartite Ramanujan
graph, the output signing specifies a Ramanujan $2$-lift of that base.

The prescribed-base signing problem is distinct from constructing a
Ramanujan graph of a requested size and degree.  Cohen~\cite{Cohen} gives a deterministic polynomial-time algorithm for the
construction of Marcus, Spielman, and Srivastava~\cite{MSS4}, which uses
matchings and allows all sizes. His algorithm does not take an arbitrary
prescribed bipartite graph and sign its existing edges.

\subsection{Contributions}
We summarize the two algorithms, their analyses, and their
consequences below.
\begin{itemize}
\item \textbf{Main result I: rounding intact higher-rank summands.}
Theorem~\ref{thm:main} gives deterministic rounding in the trace scale from
any rational fractional start. It assigns one sign to each original PSD
matrix, regardless of rank.  No isotropic normalization or vector factorization is
required.  The proof establishes polynomial bit complexity.

\item \textbf{Main result II: randomized repair at the Bilu--Linial scale.}
Theorem~\ref{thm:repair} gives a Las Vegas algorithm that runs in expected
polynomial time. It signs every graph of maximum degree $\Delta\ge3$ with
$\|A_s\|<2\sqrt{2(\Delta-1)}$. For every prescribed bipartite graph, it gives
the sharp universal bound $\|A_s\|<2\sqrt{\Delta-1}$.  Analyzing the sequences of insertions and deletions gives an explicit tail
bound and fewer than $100n^3$ insertion attempts in expectation.

\item \textbf{Choosing rounding directions.}
Theorem~\ref{lem:core} uses the completion identity of
Lemma~\ref{lem:completion} to construct the response frame from local
two-dimensional inequalities. Algorithm~\ref{alg:main} tests both signs of
each direction in this frame to find a step that decreases the potential.

\item \textbf{Certified constant and bit-complexity proof.}
The constant rests on a fixed scalar inequality verified by an exact integer
certificate~\cite{certificate}.  Theorem~\ref{thm:complexity} and
Appendix~\ref{sec:precision} prove polynomial bit complexity for the rounding
procedure.  

\item \textbf{Partitions, paving, deterministic graph signings, and lifts.}
Corollary~\ref{cor:applications}, Propositions~\ref{prop:halving}
and~\ref{prop:paving}, and Corollary~\ref{cor:lifts} give spectral partitions,
small-diagonal paving, simultaneous adjacency/degree signings, and iterated
lift families from the deterministic higher-rank theorem.  Section
\ref{sec:repair} gives sharper adjacency-only randomized signings and
their lift consequences.
\end{itemize}

Sections~\ref{sec:preliminaries}--\ref{sec:applications} prove the rounding
theorem, establish its polynomial bit complexity, and give its applications.
Section~\ref{sec:repair} proves the randomized repair theorem by counting
the possible sequences of insertions and deletions. The appendices contain the scalar certificate and the
estimates needed to analyze deterministic rounding at finite precision.

\section{Mathematical preliminaries}\label{sec:preliminaries}
We work in the real inner-product space $\HH_d$ of Hermitian matrices, with
$\ip{U}{V}=\Tr(UV)$. The notation $U\preceq V$ means that $V-U$ is positive
semidefinite. Unsubscripted matrix norms are operator norms; $\norm{\cdot}_F$
denotes the Frobenius norm; vector norms are Euclidean. A rational Hermitian matrix has rational real and
imaginary parts. The real dimension of the input space is $p=d^2$, or
$p=d(d+1)/2$ for real symmetric inputs. A coordinate of $x\in[-1,1]^N$ is
\emph{active} when $|x_i|<1$. Let $M$ count the inputs retained after preprocessing and $m\le M$ the active coordinates.

Two trace inequalities let us extend the proof beyond rank one. The first
bounds the input map independently of dimension. The second replaces the
identities available for rank-one matrices.

\begin{lemma}[Trace contraction]\label{lem:trace}
If $M_i\succeq0$ and $\sum_i\Tr(M_i)M_i\preceq I$, then, for every Hermitian $Z$,
\begin{equation}\label{eq:tracecontraction}
 \sum_i|\Tr(M_iZ)|^2\le\norm Z_F^2.
\end{equation}
Consequently $\mathcal A v=\sum_iv_iM_i$ is a contraction from Euclidean to
Frobenius norm, and $\norm{M_i}\le1$.
\end{lemma}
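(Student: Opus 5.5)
The plan is to bound each term $|\Tr(M_iZ)|^2$ by Cauchy--Schwarz in the weighted inner product $\ip{U}{V}_{M_i}=\Tr(M_iU^*V)$, and then sum using the hypothesis. Fix $i$ and write $M_i^{1/2}$ for the positive square root. Then
\[
 \Tr(M_iZ)=\Tr\bigl(M_i^{1/2}\cdot M_i^{1/2}Z\bigr)=\ip{M_i^{1/2}}{M_i^{1/2}Z},
\]
where the inner product is the Frobenius one. Cauchy--Schwarz gives
\[
 |\Tr(M_iZ)|^2\le \norm{M_i^{1/2}}_F^2\,\norm{M_i^{1/2}Z}_F^2=\Tr(M_i)\,\Tr(Z M_i Z).
\]
Here we used $\norm{M_i^{1/2}Z}_F^2=\Tr(Z^*M_iZ)$ together with the fact that $Z$ is Hermitian.

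Summing over $i$ and using linearity of trace gives
\[
 \sum_i|\Tr(M_iZ)|^2\le\Tr\Bigl(Z\Bigl(\sum_i\Tr(M_i)M_i\Bigr)Z\Bigr)\le\Tr(Z I Z)=\norm Z_F^2.
\]
The second inequality holds because $X\preceq I$ implies $ZXZ\preceq Z^2$ for Hermitian $Z$ (congruence preserves $\preceq$), and trace is monotone under $\preceq$. This proves \eqref{eq:tracecontraction}.

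For the consequences, note that the adjoint of $\mathcal A$ with respect to the Euclidean and Frobenius inner products is $\mathcal A^*Z=(\Tr(M_iZ))_i$. Indeed, $\ip{\mathcal Av}{Z}=\sum_iv_i\Tr(M_iZ)$, with everything real since $M_i$ and $Z$ are Hermitian. So \eqref{eq:tracecontraction} states exactly that $\norm{\mathcal A^*}\le1$, and hence $\norm{\mathcal A}=\norm{\mathcal A^*}\le1$.

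For $\norm{M_i}\le1$, first observe that $\Tr(M_i)M_i\preceq\sum_j\Tr(M_j)M_j\preceq I$, since every summand is PSD. This gives $\Tr(M_i)\norm{M_i}\le1$. Combined with $\norm{M_i}\le\Tr(M_i)$, which holds for PSD matrices, we get $\norm{M_i}^2\le1$. Alternatively, take $Z=uu^*$ for a top eigenvector $u$ of $M_i$ in \eqref{eq:tracecontraction}.

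There is no real obstacle here. The only step needing care is choosing the factorization in Cauchy--Schwarz so that the trace weight $\Tr(M_i)$ appears exactly, rather than the weaker operator-norm or rank weights. The square-root splitting above does this.
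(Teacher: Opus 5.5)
Your proof is correct and matches the paper's argument. The paper applies the same weighted Cauchy--Schwarz bound $|\Tr(M_iZ)|^2\le\Tr(M_i)\Tr(M_iZ^2)$, sums it against the hypothesis, and passes to the adjoint to get the bound on $\mathcal A$. It then obtains $\norm{M_i}\le1$ from $M_i^2\preceq\Tr(M_i)M_i\preceq I$, which is the same eigenvalue estimate you use.
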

\begin{proof}
Weighted Cauchy--Schwarz gives
$|\Tr(M_iZ)|^2\le\Tr(M_i)\Tr(M_iZ^2)$; summing proves the inequality.
Taking adjoints gives the bound on $\mathcal A$; $M_i^2\preceq\Tr(M_i)M_i\preceq I$ gives $\norm{M_i}\le1$.
\end{proof}

To state the second inequality, consider a trace-one positive matrix $H$.
Its \emph{purity} $\Tr(H^2)$ measures
how concentrated its spectrum is; it equals one at rank one. The overlap $\Tr(HL)$ is controlled
by this concentration even when both matrices have higher rank.

\begin{lemma}[Purity bounds]\label{lem:density}
If $H,L\succeq0$ have trace one, then $f=\Tr(HL)$ and $u=\Tr(H^2)$ satisfy
$0\le f\le1$ and $f^2\le u\le1$. In particular, for $a\ge0$ and $e\in\R$,
\begin{equation}\label{eq:support}
 \mathcal H(a,e):=\max_{0\le f\le1,\ f^2\le u\le1}(af-eu)
 =\begin{cases}a-e,&2e\le a,\\a^2/(4e),&2e>a.\end{cases}
\end{equation}
\end{lemma}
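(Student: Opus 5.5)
The argument has two parts: first the pointwise bounds on the pair $(f,u)$, then an explicit maximization of a concave function over the region they cut out.

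\emph{Pointwise bounds.} Write $H=\sum_k h_k P_k$ spectrally, with $h_k\ge0$ and $\sum_k h_k=1$. Then $u=\Tr(H^2)=\sum_k h_k^2$. Since each $h_k\le1$, this gives $u\le\sum_k h_k=1$.

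For $f$, note that $\Tr(HL)=\Tr(L^{1/2}HL^{1/2})\ge0$. Also $H\preceq\|H\|I\preceq\Tr(H)I=I$, so $f\le\Tr(L)=1$.

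For $f^2\le u$, I apply Cauchy--Schwarz in the Hilbert--Schmidt inner product:
\[
f^2=\Tr(HL)^2\le\Tr(H^2)\Tr(L^2)\le u,
\]
because $\Tr(L^2)\le1$ by the same argument as for $H$.

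\emph{The maximization.} I maximize $g(f,u)=af-eu$ over the region $R=\{0\le f\le1,\ f^2\le u\le1\}$. The function $g$ is linear in $u$, so I split on the sign of $e$.

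\textbf{Case $e\le0$.} Then $g$ is nondecreasing in $u$ and in $f$ (recall $a\ge0$). The maximum is attained at $(f,u)=(1,1)$ and equals $a-e$. This is consistent with the formula, since $2e\le0\le a$.

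\textbf{Case $e>0$.} Now it pays to take $u$ as small as possible, namely $u=f^2$. The problem becomes maximizing the concave quadratic $af-ef^2$ over $f\in[0,1]$. Its unconstrained maximizer is $f^*=a/(2e)\ge0$.
\begin{itemize}
\item If $f^*\le1$, i.e.\ $2e\ge a$, the maximum value is $a^2/(4e)$.
\item If $f^*>1$, i.e.\ $2e<a$, the maximum sits at the endpoint $f=1$ and equals $a-e$.
\end{itemize}

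\emph{The boundary $2e=a$.} Here the two expressions agree, since $a^2/(4e)=2e\cdot a/(4e)=a/2=a-e$. So the formula is consistent on the boundary and matches the stated split.

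\emph{Where care is needed.} The only delicate point is the sign split. When $e\le0$ one must not substitute $u=f^2$, because the maximum then occurs at $u=1$. When $e>0$ one has $2e>0$, so the case $2e>a$ with $a=0$ gives value $0$. This is correct, since $g\le0$ on $R$ in that case and $g(0,0)=0$.

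Finally, I verify that the relevant points actually lie in $R$ and are attained: $(1,1)$ and $(f^*,f^{*2})$ with $f^*\le1$. Together with the case analysis above, this identifies $\mathcal H(a,e)$ as stated.
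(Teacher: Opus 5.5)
Your proposal is correct and follows the paper's proof: Hilbert--Schmidt Cauchy--Schwarz gives $f^2\le\Tr(H^2)\Tr(L^2)\le u$, and a sign split on $e$ then finishes, taking $f=u=1$ when $e$ is nonpositive and otherwise reducing to the concave quadratic $af-ef^2$ on $[0,1]$. You also spell out the bounds $0\le f\le1$ and $u\le1$, which the paper leaves implicit, and by putting $e=0$ in the first case you avoid ever evaluating the undefined expression $a^2/(4e)$.
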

\begin{proof}
Trace Cauchy--Schwarz gives $f^2\le\Tr(H^2)\Tr(L^2)\le u$.
For $e\ge0$, maximize $af-ef^2$ over $[0,1]$; for $e<0$, take $f=u=1$.
\end{proof}

The trace inequalities will bound the potential's derivatives. To turn
these bounds into rounding steps, we use a linear map that records the
resulting first-order changes in the auxiliary matrices. We call it the \emph{response
map}. From this map we construct a finite family of directions, the response
frame. If a full-column-rank matrix $\Bcal$ maps directions to response
vectors, put
$\Gamma=\Bcal^*\Bcal$ and factor $\Gamma=L_\Gamma L_\Gamma^*$. The directions
\begin{equation}\label{eq:frame}
 h^{(j)}=\sqrt{m/Z}\,L_\Gamma^{-*}e_j,\qquad
 Z=\Tr(\Gamma^{-1}),\qquad
 \Sigma=\frac1m\sum_jh^{(j)}h^{(j)*}=\frac{\Gamma^{-1}}Z
\end{equation}
have $\Tr\Sigma=1$ and $\norm{h^{(j)}}\le\sqrt m$. The vectors $\Bcal h^{(j)}$
are mutually orthogonal and have equal norm. The inverse Gram matrix therefore records the covariance of directions whose
responses are isotropic on their subspace. Once we choose a suitable response map, a Cholesky factorization gives the
directions.

\section{The main proof}\label{sec:core}
Algorithm~\ref{alg:main} moves from the prescribed fractional point to a
vertex of the cube in two ways: it fixes a coordinate at a sign when possible
and otherwise takes a local step from the response frame. We first define
its potential and initialization, then
justify the endpoint test, the response frame, and the local search.
Section~\ref{sec:complexity} proves arithmetic and bit-complexity bounds for each step.

\subsection{Normalization and the potential}\label{sec:potential}
The preprocessing in line~\ref{line:preprocess} of Algorithm~\ref{alg:main}
reduces the number of active coordinates while preserving the signed sum
exactly. If the active inputs are linearly dependent,
move in a rational direction $h$
with $\sum_i h_iA_i=0$ until a coordinate reaches an endpoint. Repetition leaves
at most $M\le p$ active coordinates and a point $\bar x$ with
$\sum_i(\bar x_i-x_i^0)A_i=0$. Coordinates fixed in this way already have their
final signs. If $M=0$, the proof is complete. We keep this value of $M$ fixed
when choosing all subsequent tolerances.

After preprocessing, choose a rational scale $b$ slightly above $\sqrt V$ and put $M_i=A_i/b$.
Then $\sum_i\Tr(M_i)M_i\preceq I$, including for every retained subfamily.
Inputs with trace below a cutoff $\tau$ are rounded to their nearer endpoint
and set aside. During the subsequent rounding steps, coordinates within distance $\sigma$ of an
endpoint are also rounded there. We allow an increase of $\delta_E$ in an
endpoint test. The total error from these steps can be made arbitrarily small.
Appendix~\ref{sec:constantaccount} specifies the cutoffs and allowances;
Appendix~\ref{sec:precision} bounds their bit costs by a polynomial.
With $\bar x$ fixed, the retained inputs contribute discrepancy
$S(x)=\sum_i(x_i-\bar x_i)M_i$. 

To control this discrepancy, let $\psi$ be an even, strictly concave function on $(-1,1)$, with
$\psi(0)=1$, $\psi(\pm1)=0$, and $0\le\psi\le1$. We call this the \emph{weight function}: its value is the weight assigned to
a coordinate in the following constraints. For constants $c,\rho>0$, define the positive trace map
$\Eop_x(Z)=c\sum_i\psi(x_i)\Tr(M_iZ)M_i$ and the potential
\begin{equation}\label{eq:potential}
 R(x)=\min_{t\in\R,\ X,Y\succ0}
 \left\{t+\rho\Tr(X+Y):
 \begin{array}{l}
 X^{-1}+S(x)+\Eop_x(Y)\preceq tI,\\
 Y^{-1}-S(x)+\Eop_x(X)\preceq tI
 \end{array}\right\}.
\end{equation}
This adapts the regularized potential of Ezeunala and Jiang~\cite{EJ}.
For rank-one inputs, $\Tr(M_iZ)M_i=M_iZM_i$; the trace map extends their coupling terms to higher rank, with a different
weight function. The two inequalities bound the largest and smallest
eigenvalues of $S(x)$. Through $\Eop_x$, each coordinate weight enters both
inequalities. Rounding a coordinate sets its weight to zero and removes its
contribution to these coupling terms. The small trace penalty makes the minimizer unique and
smooth on each open face of the cube (Lemma~\ref{lem:kkt}).

The scalar in the constraints is a common upper bound for the two matrix
expressions; we call it the \emph{spectral level}. Positivity gives the lower
bound below. For the upper bound, set $X=Y=aI$ and minimize over $a$:
\begin{equation}\label{eq:initial}
 \norm{S(x)}\le R(x)\le\norm{S(x)}+2\sqrt{c+2d\rho}.
\end{equation}
Initially $S(\bar x)=0$. Thus the leading discrepancy coefficient is $2\sqrt c$;
the rounding procedure aims to preserve this initial bound. The error allowances
are chosen so that the initial value and all permitted increases together
remain below $4$. To obtain local descent, we choose a weight function of
the form $\psi(x)=\sqrt{1-x^2}P_b(x^2)$, with a fixed degree-six polynomial.
Its coefficients enter the local inequality in
Appendix~\ref{sec:certificate}. Integration by parts shows that normalization
fixes the weighted curvature: $\int_0^1(1-x)(-\psi''(x))\,dx=1$.
The choice of weight function therefore redistributes its negative second
derivative across the interval without changing the initial potential bound.

To implement the potential comparisons at finite precision, the pseudocode
uses certified bounds $L_x\le R(x)$ and $\overline U_y\ge R(y)$.
An \emph{upper witness} is a feasible choice of the auxiliary variables in
the potential; its objective value gives an upper bound on the potential.
The positive threshold $\widehat a(x)$ is a certified estimate of the local
curvature margin; its construction and the fixed working precision are given
in Appendix~\ref{sec:precision}. Frozen coordinates retain their assigned signs.

\begin{figure}[t]
\centering
\setlength{\fboxsep}{8pt}
\setlength{\fboxrule}{0.5pt}
\fcolorbox{darkblue!35}{black!1}{\begin{minipage}{\dimexpr\linewidth-2\fboxsep-2\fboxrule\relax}
\begin{algorithm}[Deterministic rounding]\label{alg:main}
\leavevmode\par\nobreak
\small
\begin{algorithmic}[1]
\Require Rational PSD matrices $A_1,\ldots,A_N$ and rational $x^0\in[-1,1]^N$.
\Ensure A signing satisfying Theorem~\ref{thm:main}.
\State Eliminate active linear dependencies while preserving the matrix sum;
       call the resulting state $\bar x$ and record endpoint signs.
       If no active coordinate remains, \Return the recorded signs.
       \label{line:preprocess}
\State Set $x\gets\bar x$, choose the parameters in~\eqref{eq:parameters},
       and normalize $M_i=A_i/b$. Freeze inputs with $\Tr(M_i)<\tau$
       at a nearest endpoint. Set $q\gets0$. \label{line:initialize}
\While{some coordinate is active}\label{line:outer}
  \State Freeze coordinates within $\sigma$ of an endpoint.
         If none remain, \textbf{break}. \label{line:boundary}
  \State Compute certified primal--dual data and $L_x$
         (Section~\ref{sec:currentstate}). \label{line:state}
  \If{an endpoint witness certifies an increase at most $\delta_E$}
         \label{line:endpoint}
    \State Take that endpoint and \textbf{continue} the outer loop,
           keeping $q$ (Section~\ref{sec:endpoint}).
  \EndIf
  \State Form the metric and response frame
         \eqref{eq:dictionary} and~\eqref{eq:frame}; cache the full optimizer
         derivatives and compute $\widehat a(x)$
         (Sections~\ref{sec:descent} and~\ref{sec:witness}). \label{line:frame}
  \Loop
    \State Set $s\gets2^{-q}$. \label{line:scale}
    \State Test both signs of every frame direction:
           form $y$ by rounding $x\pm sh^{(j)}$ to the fixed coordinate grid,
           reject candidates outside $|y_i|\le1-\sigma/2$,
           and compute valid upper witnesses $\overline U_y$
           (Lemma~\ref{lem:witness}). \label{line:trials}
    \If{some candidate satisfies
        $\overline U_y\le L_x-\widehat a(x)s^2/16$}
        \label{line:accept}
      \State Set $x\gets y$ and $q\gets\max\{0,q-1\}$;
             \textbf{break} to the outer loop. \label{line:restart}
    \EndIf
    \State Set $q\gets q+1$. \label{line:halve}
  \EndLoop
\EndWhile
\State \Return $x$, including the signs recorded during preprocessing.
       \label{line:return}
\end{algorithmic}
\end{algorithm}
\end{minipage}}
\end{figure}
All scans and tie breaks use a fixed index order. Lines~\ref{line:scale}--\ref{line:halve}
reuse the data at $x$; a rejected trial triggers no new optimization.
Section~\ref{sec:paired} proves acceptance and termination.

\subsection{Endpoint moves and compatible responses}\label{sec:endpoint}
We justify the endpoint test in line~\ref{line:endpoint}.
At the optimizer write $a_i=\Tr(M_iX)$ and $b_i=\Tr(M_iY)$.
If $c\psi(x_i)b_i\ge1-x_i$, changing $x_i$ to $1$ leaves the same primal
triple feasible: the first constraint changes by
$[(1-x_i)-c\psi(x_i)b_i]M_i\preceq0$, and the second decreases.
Likewise, rounding to $-1$ preserves feasibility if $c\psi(x_i)a_i\ge1+x_i$.
If neither test succeeds for any active coordinate, then
\begin{equation}\label{eq:light}
 c\psi(x_i)a_i<1+x_i,\qquad c\psi(x_i)b_i<1-x_i.
\end{equation}
We call this a \emph{light state}. The numerical endpoint test allows an increase of at most $\delta_E$.
This allowance ensures that a failed test implies the strict inequalities
above. Appendix~\ref{sec:linearsolves} gives the certified test. Thus line~\ref{line:frame} is reached only at a light state.

To choose a direction at a light state, we examine the minimizer. Both
constraints are tight. Their positive dual multipliers
$P,Q$ describe how the optimum weights the two spectral constraints. The
stationarity equations are
\begin{equation}\label{eq:dualcore}
 P=X\Eop_x(Q)X+\rho X^2,\qquad
 Q=Y\Eop_x(P)Y+\rho Y^2,\qquad \Tr(P+Q)=1.
\end{equation}
In particular the regularization keeps both multipliers positive definite.
The \emph{matrix responses} are the first-order changes in the auxiliary
matrices when the coordinates vary while $t$ stays fixed. Differentiating the
tight constraints gives the self-adjoint operator
\begin{equation}\label{eq:L}
 \Lop(U,V)=
 (X^{-1}UX^{-1}-\Eop_x(V),\ Y^{-1}VY^{-1}-\Eop_x(U)).
\end{equation}
Its inverse preserves positive semidefiniteness and has Frobenius operator
norm at most $\rho^{-1}$ (Lemma~\ref{lem:kkt}). Taking traces against the active
inputs reduces this matrix equation to two coefficients per coordinate.
After rescaling, the response equation becomes
\begin{equation}\label{eq:responsecore}
 (I-\mathsf T)y=J D_xv,\qquad J^*J=I,
 \qquad D_x=D_0^{1/2}\diag(t_i).
\end{equation}
Here $v\in\R^m$ is a coordinate direction, $y\in\R^{2m}$ records two
components of its matrix responses per coordinate, and $D_x$ is positive diagonal. The interaction
matrix $\mathsf T$ contains the trace Gramians of the inputs against $X$ and $Y$.
Appendix~\ref{sec:response} gives these matrices explicitly and proves that
$I-\mathsf T$ is invertible.

Equation~\eqref{eq:responsecore} couples responses at different coordinates,
so they cannot be chosen independently. We call responses \emph{compatible}
when they arise from a common coordinate direction through this equation.
A positive block-diagonal matrix $W$, called the \emph{response metric},
rescales them through its inverse. We apply~\eqref{eq:frame} to
\begin{equation}\label{eq:dictionary}
 \Bcal=W^{-1}(I-\mathsf T)^{-1}JD_x.
\end{equation}
The frame construction in line~\ref{line:frame} produces directions whose
weighted responses are orthogonal. We choose $W$ so that the average second variation over these compatible
responses gives descent.

\subsection{The completion identity and descent}\label{sec:descent}
The responses in line~\ref{line:frame} are coupled across coordinates,
whereas the inequalities we can verify are local. The following lemma uses
local $2\times2$ conditions to bound the response terms in the second
derivative, averaged over compatible responses.

\begin{lemma}[Two-slot completion]\label{lem:completion}
Let $\mathsf T,\mathsf E$ be real $2m\times2m$ matrices and let $J,\mathsf N$
have $m$ columns. Suppose $I-\mathsf T$ is invertible,
$(I-\mathsf T)y=J\xi$, $J^*J=I$, and
$\mathsf E\succeq\mathsf T^*\mathsf C\mathsf T$, where
$\mathsf C=\bigoplus_i C_iI_2$ and $C_i>0$.
Assume the columns $j_i,n_i$ of $J,\mathsf N$ are supported on the $i$th
coordinate pair; subscripts $ii$ denote the corresponding $2\times2$ blocks.
Choose $W_i\succ0$, real matrices $D_i$, and numbers $K_i>0,h_i\ge0$. Define
\begin{equation}\label{eq:completiondata}
 B_i=D_i^*W_i^{-1},\quad U_i=W_i^{-1}D_iW_i,\quad
 R_i=\frac{B_i^*B_i}{4C_i},\quad d_i^*=j_i^*B_i+n_i^*W_i.
\end{equation}
If for every $i$,
\begin{equation}\label{eq:completionconditions}
 \begin{gathered}
 \Tr(\mathsf E_{ii}W_i^2)+\Tr(D_i\mathsf T_{ii})\le h_i,\qquad
 \chi_i:=h_i+\Tr R_i>0,\\
 \Sym U_i-h_iI_2-\adj R_i-\frac{d_id_i^*}{4K_i}\succeq0,
 \end{gathered}
\end{equation}
there is a nonzero covariance of exact responses for which
\begin{equation}\label{eq:completionconclusion}
 \E(y^*\mathsf E y+\xi^*\mathsf N^*y)\le\sum_iK_i\E\xi_i^2.
\end{equation}
Here $\Sym U=(U+U^*)/2$ and $\adj R=(\Tr R)I_2-R$.
\end{lemma}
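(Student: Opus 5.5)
We will take the covariance to be the response-frame covariance from \eqref{eq:frame} with the block-diagonal metric $W=\bigoplus_iW_i$, so that the weighted responses $W^{-1}y$ are isotropic on their range. Concretely, set $\xi$ random with $\E\xi\xi^*=\Sigma=\Gamma^{-1}/Z$, where $\Gamma=\Bcal^*\Bcal$ and $\Bcal=W^{-1}(I-\mathsf T)^{-1}J$. Then $\E\, W^{-1}yy^*W^{-1}=\Bcal\Sigma\Bcal^*=\Pi/Z$, with $\Pi$ the orthogonal projection onto $\ran\Bcal$. This is nonzero since $\Bcal$ has full column rank; that follows from $J^*J=I$ and the invertibility of $I-\mathsf T$. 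The proof then bounds each side of \eqref{eq:completionconclusion} in terms of this second moment, and compares the two blockwise.

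The first step treats the quadratic term. Using $\mathsf E\succeq\mathsf T^*\mathsf C\mathsf T$ is in the wrong direction for an upper bound, so we instead complete the square. Write $\mathsf T y=y-J\xi$, and for each $i$ insert the slack $\Tr(D_i\mathsf T_{ii})$. The identity we aim for is that, for every realization,
\[
y^*\mathsf Ey+\xi^*\mathsf N^*y
=\sum_i\Bigl[\text{local terms in }(W_i^{-1}y_i,\ \xi_i)\Bigr]
-\Bigl\|\mathsf C^{1/2}\bigl(\mathsf Ty-\tfrac12\mathsf C^{-1}B^*W^{-1}y\bigr)\Bigr\|^2+\cdots .
\]
Here $B=\bigoplus B_i$. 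The cross term $\Re\, y^*W^{-1}B\,\mathsf Ty$ is rewritten via $B_i^*=W_i^{-1}D_i$ as $\Tr(D_i\,\mathsf T\,yy^*W^{-1}\cdots)$. Completing the square against $\mathsf C$ generates the remainder $\tfrac{1}{4C_i}B_i^*B_i=R_i$. Substituting $\mathsf Ty=y-J\xi$ turns the remaining terms into the $i$th-slot quantities $\Sym U_i$, $j_i^*B_i$ and $n_i^*W_i$, paired against $u_i=W_i^{-1}y_i$ and $\xi_i$.

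The second step takes expectations. The only nonlocal terms are $\Tr(\mathsf EW\,\E[uu^*]W)$ and $\Tr(D\mathsf T\cdots)$, and isotropy ($\E uu^*=\Pi/Z\preceq I/Z$) together with the first condition of \eqref{eq:completionconditions} bounds them by $\sum_ih_i/Z$. The $2\times2$ block $\E u_iu_i^*$ of $\Pi/Z$ has trace at most $2/Z$ and operator norm at most $1/Z$. This is exactly where $\adj R_i$ appears: for a $2\times2$ block $G$ with $0\preceq G\preceq(\Tr G)I$, the quantity $\Tr(R_iG)$ compares with $\Tr R_i\Tr G-\Tr(\adj R_i\,G)$. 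The cross term in $\xi_i$ is handled by the AM--GM bound $\Re(\xi_i d_i^*u_i)\le K_i\xi_i^2+\tfrac{1}{4K_i}u_i^*d_id_i^*u_i$.

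The third step combines everything. After these substitutions, the left side minus $\sum_iK_i\E\xi_i^2$ becomes a sum over $i$ of $-\Tr\bigl[(\Sym U_i-h_iI-\adj R_i-d_id_i^*/4K_i)\,\E u_iu_i^*\bigr]$ plus terms killed by the conditions. The matrix inequality in \eqref{eq:completionconditions} makes each summand nonpositive. The condition $\chi_i>0$ is used to normalize: if the naive frame covariance degenerates on some slot, we rescale that block of $W$ so that the weights $\chi_i$ stay positive, which keeps the covariance nonzero.

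The main obstacle is the first step, namely finding the exact algebraic completion. In particular, it must be arranged so that the coupling through $\mathsf T$ appears only as the nonnegative square plus the local diagonal blocks $\mathsf T_{ii}$ and $\mathsf E_{ii}$. Off-diagonal blocks of $\mathsf T$ must be absorbed entirely by the square and never reach the local conditions. I would first try working directly in the variable $u=W^{-1}y$ and expanding $\|\mathsf C^{1/2}\mathsf Ty-\tfrac12\mathsf C^{-1/2}B^*u\|^2$, then checking term by term against \eqref{eq:completiondata}.
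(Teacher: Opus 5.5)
There is a genuine gap in Steps~1--2. It sits exactly where you set aside $\mathsf E\succeq\mathsf T^*\mathsf C\mathsf T$ as pointing the wrong way, and that hypothesis is the only one that controls the off-diagonal blocks of $\mathsf T$.

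Your covariance is the right construction. The paper also takes $y=W\Pi g$, with $\Pi$ the projection onto $\ran\Bcal=\ker((I-JJ^*)\mathsf M)$ and $\mathsf M=(I-\mathsf T)W$. Your AM--GM step on $\xi_id_i^*u_i$ is what the paper's square $\|K^{1/2}\mathsf M\Pi-\tfrac12K^{-1/2}Z_1\Pi\|_F^2$ does, with $Z_1=JJ^*B+J\mathsf N^*W$ and $\tfrac14Z_1^*K^{-1}Z_1=\bigoplus_id_id_i^*/(4K_i)$.

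The problem is that the trace condition bounds the \emph{unweighted} sum $\Tr(W\mathsf EW)+\Tr(D\mathsf T)$. Moreover $\Tr(D\mathsf T)=\Tr(B^*\mathsf TW)=\Tr(B^*\mathsf TW\Pi)+\Tr(B^*\mathsf TW\Pi^\perp)$. Realized responses see only the $\Pi$ piece, where $\mathsf TW\Pi=W\Pi-JJ^*\mathsf M\Pi$ produces $\Tr(U\Pi)$ and the $\xi_ij_i^*B_iu_i$ terms. The $\Pi^\perp$ piece has no sign. So ``isotropy plus the first condition'' does not bound the nonlocal terms by $\sum_ih_i$. Nor can a per-realization identity of the kind you aim for in Step~1 hold, because $\mathsf E$ is bounded above only through a trace.

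The missing idea is to keep the complementary energy. Write $\Tr(W\mathsf EW\Pi)=\Tr(W\mathsf EW)-\Tr(W\mathsf EW\Pi^\perp)$ and use $\Tr(W\mathsf EW\Pi^\perp)\ge\|\mathsf C^{1/2}\mathsf TW\Pi^\perp\|_F^2$. The lower bound on $\mathsf E$ off the response subspace then becomes an upper bound on the response part. Completing $\|\mathsf C^{1/2}(\mathsf M-W)\Pi^\perp-\tfrac12\mathsf C^{-1/2}B\Pi^\perp\|_F^2$ absorbs the $\Pi^\perp$ piece and leaves $\Tr(R\Pi^\perp)=\Tr R-\Tr(R\Pi)$. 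This is where $R_i$ comes from, and with it $\Tr R_i$ in $\chi_i$ and $\adj R_i=(\Tr R_i)I_2-R_i$. Your planned square, built from $\mathsf Ty$ at realized $y$, lives on $\ran\Pi$, which is the wrong subspace.

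You have also misidentified the role of $\chi_i>0$. Nothing degenerates, since $\Pi$ always has rank $m$. After both completions one must show
\[
\Tr\bigl((\Sym U+R-\tfrac14Z_1^*K^{-1}Z_1)\Pi\bigr)-\Tr R\ge\sum_ih_i .
\]
The matrix term is weighted by the diagonal blocks $\Pi_{ii}$, whose traces are not individually controlled; only $\Tr\Pi=m$ is. By contrast, $\Tr R$ and $\sum_ih_i$ enter with unit weight per block, so a blockwise comparison as in your Step~3 cannot close.

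The paper first rescales $W_i\mapsto\chi_i^{-1/2}W_i$ and $D_i\mapsto\chi_i^{-1}D_i$. This makes $h_i+\Tr R_i=1$ and turns the matrix condition into $\Sym U_i+R_i-d_id_i^*/(4K_i)\succeq I_2$. The left side is then at least $\Tr\Pi-\Tr R=m-\Tr R=\sum_ih_i$. Consequently $\Pi$, and hence the frame, must be built from the normalized metric. With your unnormalized $W$ the same chain would require $\sum_i\chi_i\Tr\Pi_{ii}\ge\sum_i\chi_i$, which $\Tr\Pi=m$ does not give.
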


\paragraph{From compatible responses to the algorithm's frame.}
We next show how the response frame realizes the average bound in the
lemma. Normalize each block by
$\widehat W_i=\chi_i^{-1/2}W_i$ and $\widehat D_i=\chi_i^{-1}D_i$.
This divides $h_i,U_i,R_i$ by $\chi_i$ and leaves $K_i$ unchanged.
Write $W,D$ for the normalized block matrices, put $P_0=JJ^*$, and let
$\Pi$ project onto $\ker((I-P_0)(I-\mathsf T)W)$.
The two-square expansion in Appendix~\ref{sec:completion} proves the lemma
for $y=W\Pi g$ and $\xi=J^*(I-\mathsf T)W\Pi g$, where
$\E g=0$ and $\E gg^*=I$. They are compatible since $\ran((I-\mathsf T)W\Pi)\subseteq\ran J$.

For the response map $\Bcal$ in~\eqref{eq:dictionary},
$(I-\mathsf T)W\Bcal=JD_x$, so $\ran\Bcal=\ran\Pi$ and
\begin{equation}\label{eq:inducedcovariance}
 \Pi=\Bcal\Gamma^{-1}\Bcal^*,\qquad
 v=D_x^{-1}\xi=\Gamma^{-1}\Bcal^*g,\qquad
 \E vv^*=\Gamma^{-1}.
\end{equation}
Normalizing by $Z=\Tr(\Gamma^{-1})$ gives exactly the covariance $\Sigma$
of the finite frame~\eqref{eq:frame}. Every quadratic average in the lemma
therefore equals an average over the $m$ directions tested in
line~\ref{line:trials}; no random sampling is needed. With the weight
function and metric chosen below, the average second derivative is negative
when the gradient is small.

\begin{theorem}[Descent using the response frame]\label{lem:core}
Use the weight function and metric of Appendix~\ref{sec:certificate}. At every light state with $R(x)<4$ and $|x_i|\le1-\sigma$, the frame~\eqref{eq:frame} from~\eqref{eq:dictionary} satisfies $\Sigma\succeq\lambda_*I$ and
\begin{equation}\label{eq:curvaturecore}
 \norm{\nabla R(x)}_\infty\le g_*
 \quad\Longrightarrow\quad
 \frac12\Tr\bigl(\nabla^2R(x)\Sigma\bigr)\le-a_*.
\end{equation}
The positive constants $\lambda_*,g_*,a_*$ can be chosen with
inverse-polynomial dependence on $M,d$. The first three derivatives of $R$
are polynomially bounded on a fixed larger objective sublevel and the
slightly enlarged truncated face $|x_i|\le1-\sigma/2$.
\end{theorem}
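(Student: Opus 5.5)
Throughout I take the appendices' choices (weight function $\psi(x)=\sqrt{1-x^2}P_b(x^2)$, response metric $W$, certified scalar inequality) as given. The argument has three parts: a second-derivative formula for $R$, an application of Lemma~\ref{lem:completion} to the response terms, and quantitative bounds.

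\textbf{Second-derivative formula.} By Lemma~\ref{lem:kkt} the minimizer $(t,X,Y)$ and the multipliers $P,Q$ of \eqref{eq:dualcore} are unique and smooth on the open face. Envelope arguments then give $\partial_iR=\Tr\bigl((P-Q)M_i\bigr)+c\,\psi'(x_i)\bigl(a_i'+b_i'\bigr)$. Here $a_i',b_i'$ denote the traces of $M_i$ against the multiplier-weighted coupling terms, $\Tr(M_iQ)\Tr(M_iY)$ and $\Tr(M_iP)\Tr(M_iX)$. Differentiating once more along a direction $v$, with $t$ held fixed as in the definition of the responses, splits $v^*\nabla^2R\,v$ into two pieces.

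The first piece is an explicit local term $\sum_i c\,\psi''(x_i)v_i^2(\cdots)$. It is negative because $\psi$ is strictly concave and the multipliers are positive definite.

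The second piece is a quadratic form in the responses $y$ solving \eqref{eq:responsecore}. Together with the cross terms between $\xi=D_xv$ and $y$ coming from $\psi'$, it has exactly the shape $y^*\mathsf Ey+\xi^*\mathsf N^*y$. I would verify that $\mathsf E\succeq\mathsf T^*\mathsf C\mathsf T$ with $C_i>0$, using Lemma~\ref{lem:trace} to bound the coupling Gramians. I would also use Lemma~\ref{lem:density} to reduce each $2\times2$ block to purity/overlap variables $(f,u)$ bounded through $\mathcal H(a,e)$. The light-state inequalities \eqref{eq:light} and $R(x)<4$ confine the block parameters ($c\psi a_i$, $c\psi b_i$, $x_i$) to a compact region. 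On that region the conditions \eqref{eq:completionconditions} reduce to the fixed two-dimensional scalar inequality certified in Appendix~\ref{sec:certificate}, and this holds with a strict margin.

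\textbf{Averaging over the frame.} Lemma~\ref{lem:completion} then gives $\E(y^*\mathsf Ey+\xi^*\mathsf N^*y)\le\sum_iK_i\E\xi_i^2$. By \eqref{eq:inducedcovariance}, this expectation is taken with $\E vv^*=\Gamma^{-1}$, which equals $Z\Sigma$, so after dividing by $Z$ it is an average over the frame directions. The $K_i$ are chosen so that $K_i\xi_i^2$ is dominated by the local $\psi''$ term, which is where the normalization $\int_0^1(1-x)(-\psi'')=1$ enters. The strict margin then yields $\tfrac12\Tr(\nabla^2R\,\Sigma)\le-2a_*$ whenever the gradient vanishes. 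A small gradient $\|\nabla R\|_\infty\le g_*$ perturbs only the envelope terms that carry $\nabla R$, by $O(g_*\Tr\Sigma)=O(g_*)$, since $\Tr\Sigma=1$. Choosing $g_*$ small enough therefore gives \eqref{eq:curvaturecore}.

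\textbf{Quantitative bounds.} For $\Sigma\succeq\lambda_*I$, I bound $\|\Gamma\|$ and $Z=\Tr(\Gamma^{-1})$. The inverse of $\mathcal L$ has norm at most $\rho^{-1}$ (Lemma~\ref{lem:kkt}). On the truncated face, $\psi(x_i)\ge\psi(1-\sigma)$ and $\Tr M_i\ge\tau$, which bounds $D_x$ above and below. This bounds the singular values of $\mathcal B$ polynomially in $\rho^{-1},\sigma^{-1},\tau^{-1},M,d$, and these parameters are themselves inverse-polynomial. The derivative bounds on the enlarged face $|x_i|\le1-\sigma/2$ follow from the implicit function theorem: every derivative of $(X,Y,P,Q)$ is a polynomial expression in $\mathcal L^{-1}$ and the data, and on the sublevel set $R<5$ the quantities $\|X\|,\|Y\|\le O(1/\rho)$ and $X^{-1},Y^{-1}\preceq O(1)$.

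The main obstacle is the first part: deriving the precise second-variation formula and showing that the response part fits the hypotheses of Lemma~\ref{lem:completion}, in particular $\mathsf E\succeq\mathsf T^*\mathsf C\mathsf T$ in the higher-rank case. There the rank-one identities are replaced by the inequalities of Lemmas~\ref{lem:trace} and~\ref{lem:density}, so the reduction to the certified scalar inequality must lose nothing beyond the margin. I would first carry out the computation at rank one, then replace each identity by its trace inequality and check the signs term by term.
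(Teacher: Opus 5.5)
Your outline follows the paper's route: a $t$-fixed second variation as in \eqref{eq:variation}, the bound $\mathsf E\succeq\mathsf T^*\mathsf C\mathsf T$, reduction of the conditions of Lemma~\ref{lem:completion} to the certified scalar inequality through $\mathcal H$, identification of the lemma's covariance with $\Gamma^{-1}=Z\Sigma$ via \eqref{eq:inducedcovariance}, and polynomial conditioning bounds. The genuine gap is the passage from exact balance to a small nonzero gradient. You claim that a small gradient ``perturbs only the envelope terms that carry $\nabla R$, by $O(g_*)$''. That is not correct.

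Write $\varrho_i=w_i/z_i$, so that $\partial_iR=t_iz_i(1-\varrho_i)$. The imbalance enters the very data on which \eqref{eq:completionconditions} must be checked:
\begin{itemize}
\item the forcing column $j_i\propto(\sqrt{\widetilde a_i},-\sqrt{\widetilde b_i\varrho_i})$;
\item the mixed column $n_i$;
\item the diagonal interaction block $\mathsf T_{ii}$, whose off-diagonal entries carry factors $\varrho_i^{\mp1/2}$;
\item the curvature coefficient $\kappa_b(\varrho_i)$ in $\xi$-units.
\end{itemize}
Lemma~\ref{lem:scalar} certifies the local inequalities only at $\varrho_i=1$. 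The frame is built from the actual $\mathsf T$, $J$, $W$. So you cannot apply the completion lemma to fictitious balanced data and then correct its conclusion by $O(g_*)$. The lemma must be applied to the actual blocks, which means the conditions themselves must survive the perturbation.

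The paper closes this with an explicit robustness step (Appendix~\ref{sec:robustness}):
\begin{enumerate}
\item It reserves part of the margin, taking $K_i=\kappa_i-\varpi$ and $h_i=h+\varpi/10$.
\item It proves Lipschitz bounds on $[1/2,3/2]$ in $\varrho$ for $j$, $n$, $\mathsf T_{ii}$ and $\kappa_b$, with a common constant $L_i$.
\item It sets $\zeta_i=\min(1/2,\varpi/(100L_i))$ and takes the gradient threshold $g_0(x)=\tfrac12\min_it_i\min(z_i,w_i)\zeta_i$. This forces $|1-\varrho_i|\le\zeta_i/2$, keeps the completion conditions valid, and leaves $\kappa_b(\varrho_i)-K_i\ge\varpi/2$.
\end{enumerate}
A uniform $g_*$ then needs inverse-polynomial lower bounds on $t_i,z_i,w_i$ and polynomial upper bounds on $L_i$ (Appendix~\ref{sec:precision}). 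The conclusion is $\tfrac12\Tr(\nabla^2R\,\Sigma)\le-\delta\sum_i(D_x)_{ii}^2\Sigma_{ii}\le-\delta\min_i(D_x)_{ii}^2$, using $\Tr\Sigma=1$.

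Smaller corrections:
\begin{itemize}
\item Holding $t$ fixed does not split $\nabla^2R$ exactly. The $t$-fixed path is feasible, so its objective $\varphi$ majorizes $R(x+sv)$ with equality at $s=0$. Only $\varphi''(0)\ge\nabla^2R[v,v]$ is available, but that suffices.
\item In your gradient formula the pairings are swapped. The correct formula is $\partial_iR=p_i-q_i+c\psi_i'(b_ip_i+a_iq_i)$, pairing $P$ with $Y$ and $Q$ with $X$.
\item The bound $\mathsf E\succeq\mathsf T^*\mathsf C\mathsf T$ does not come from the contraction of Lemma~\ref{lem:trace}. It comes from the stationarity bound $P\succeq c\sum_j\psi_jq_jXM_jX$ combined with $|\Tr(MD)|^2\le\Tr(MX)\Tr(MDX^{-1}D)$.
\item The normalization $\int_0^1(1-x)(-\psi'')=1$ plays no role. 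The comparison of $K_i$ with the $\psi''$ term is pointwise, through $\kappa F_A-h,\ \kappa F_B-h>1/20000$.
\item The derivative bounds need the inverse of the full KKT Jacobian, including the level $t$ and the constraint $\Tr(P+Q)=1$. The paper controls it through the scalar Schur complement in \eqref{eq:jacobian}, not through $\Lop^{-1}$ alone.
\end{itemize}
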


\begin{proof}
We apply Lemma~\ref{lem:completion} to the second variation of the potential.
The remaining local inequality is uniform in the rank of each input.
Let $P,Q$ be the dual multipliers of~\eqref{eq:potential}. With the positive
scales $t_i,\ell_i,k_i$ defined in Appendix~\ref{sec:response}, the gradient takes the form
$\partial_iR=t_i(z_i-w_i)$, where
$z_i=\ell_i\Tr(M_iP)$ and $w_i=k_i\Tr(M_iQ)$.
A small gradient therefore means that these two weighted traces are nearly
balanced. To bound the second variation, hold $t$ fixed and solve the
two tight constraints along $x+sv$. Their objective majorizes $R(x+sv)$
and agrees with it at $s=0$. We differentiate the constraints twice and pair them with $P,Q$.
Stationarity then eliminates the second derivatives of $X,Y$
(Lemma~\ref{lem:variation}). At exact balance this gives
\begin{equation}\label{eq:balancedvariation}
 \tfrac12\nabla^2R[v,v]
 \le y^*\mathsf E y+\xi^*\mathsf N^*y-\sum_i\kappa_i\xi_i^2,
 \qquad \xi=D_xv.
\end{equation}
The first term is response energy, the second is the mixed response term,
and the negative term comes from the concavity of $\psi$.
Keeping the mixed term is essential to the constant. Lemmas~\ref{lem:purity}
and~\ref{lem:variation} establish this inequality and the energy bound
$\mathsf E\succeq\mathsf T^*\mathsf C\mathsf T$.

Lemma~\ref{lem:completion} bounds the first two terms by a local quadratic
expression in $\xi$. The algorithm uses the induced covariance~\eqref{eq:inducedcovariance}.
We choose the response metric so that this bound and the negative term from
the weight function have a strictly negative sum.

To check this bound, we examine the diagonal energy data. For rank-one
inputs these equal one. For arbitrary rank, they are overlaps and purities of trace-one PSD
matrices. They therefore lie in the region of Lemma~\ref{lem:density}. We bound the local response contribution by summing two values of
$\mathcal H$ in~\eqref{eq:support}.
This replaces the rank-one identities by an elementary quadratic maximization.
The explicit weight function and metric satisfy the resulting scalar inequalities with
a strict margin (Lemma~\ref{lem:scalar}). The margin absorbs the perturbation from exact balance. After this perturbation, the negative term from the weight function still
exceeds the response bound in magnitude by a fixed positive margin $\delta$. Consequently
\begin{equation}\label{eq:coregap}
 \tfrac12\Tr(\nabla^2R\,\Sigma)
 \le-\delta\sum_i(D_x)_{ii}^2\Sigma_{ii}
 \le-\delta\min_i(D_x)_{ii}^2.
\end{equation}
The last inequality uses $\Tr\Sigma=1$. Appendix~\ref{sec:robustness} gives the perturbation calculation.
Appendix~\ref{sec:precision} proves uniform lower bounds on the scales and on
$\Sigma$, and bounds the derivatives. These estimates give the stated
constants.
\end{proof}

\subsection{Paired moves and completion of the rounding}\label{sec:paired}
We now justify the paired trials in line~\ref{line:trials} and the acceptance
test in line~\ref{line:accept}. They exploit either a nonzero slope or the negative average
curvature from Theorem~\ref{lem:core}. If $\norm{\nabla R}_\infty>g_*$, then
$m^{-1}\sum_j|\nabla R\cdot h^{(j)}|^2\ge\lambda_*g_*^2$, so some frame column
has a definite slope. Its favorable sign decreases the potential.
If the gradient is small, average $R(x+sh^{(j)})$ and $R(x-sh^{(j)})$ over
the frame. The linear terms cancel, and~\eqref{eq:curvaturecore} gives
\[
 \frac1{2m}\sum_j\bigl(R(x+sh^{(j)})+R(x-sh^{(j)})\bigr)-R(x)
 \le-a_*s^2+O(M_3m^{3/2}s^3),
\]
where $M_3$ bounds the third derivative. Thus, at every sufficiently small scale, some tested step decreases $R$
by a positive multiple of $s^2$. A sufficient scale has an inverse-polynomial
lower bound.

The implementation tests these steps using feasible upper witnesses for
the trial values and a certified lower bound for the current value. One correction using the current linearization makes the witness error third
order in $s$ (Lemma~\ref{lem:witness}). The preceding argument therefore applies
without optimizing the potential at every trial. To compute the witnesses,
we differentiate the full optimality system and allow the spectral level to
vary. The responses that determine the frame hold this level fixed.

In certified arithmetic the test in line~\ref{line:accept} is
\begin{equation}\label{eq:acceptance}
 \overline U_y\le L_x-\widehat a(x)s^2/16,\qquad
 L_x\le R(x),\quad \overline U_y\ge R(y).
\end{equation}
Here $\widehat a(x)\in[a_x/2,a_x]$, with
$a_x=(\varpi/10)\min_i(D_0)_{ii}\ge a_*$ and $\varpi=1/50000$;
Appendix~\ref{sec:robustness} derives this local margin.
Thus every accepted candidate decreases the actual potential.
Appendix~\ref{sec:precision} chooses a uniform sufficient scale $s_c$ and
precision so that some candidate passes by scale $s_c/2$, including the
errors from computing the frame and rounding candidates to the coordinate grid.

Each endpoint operation in lines~\ref{line:initialize}, \ref{line:boundary},
and~\ref{line:endpoint} fixes a coordinate. Backtracking never needs a scale
smaller than an inverse polynomial, and the acceptance threshold makes each
accepted local move decrease $R$ by an inverse-polynomial amount. There are
therefore only polynomially many
such moves. The small-input, boundary, and endpoint allowances in these
three lines have combined cost at most
$M(\tau+\sigma+\delta_E)$. Together with~\eqref{eq:initial}, this proves
\begin{equation}\label{eq:roundingaccount}
 \Big\|\sum_i(s_i-x_i^0)A_i\Big\|
 \le b\left(2\sqrt{c+2d\rho}+M(\tau+\sigma+\delta_E)\right).
\end{equation}
The certified coefficient is $c=567/200$, so the leading coefficient is
$2\sqrt c=\sqrt{567/50}$. With the parameters in~\eqref{eq:parameters},
the bound is less than $\Cstar\sqrt V$, where $\Cstar=3.367912113<3.37$
(Theorem~\ref{thm:sharp}). These parameters also keep accepted states below the level required by the
descent theorem. The polynomial bit bound in Section~\ref{sec:complexity}
completes the proof of Theorem~\ref{thm:main}.

\section{Computational complexity}\label{sec:complexity}
To bound the cost of these moves, we analyze three computations in
Algorithm~\ref{alg:main}. We optimize at the
current state in line~\ref{line:state}, construct the frame and cache
derivatives in line~\ref{line:frame}, and certify trials in
line~\ref{line:trials}. We compute the current optimizer using matrix inversions, then reuse
factorizations in the linear solves that certify trials. We optimize again
only after the state changes.

\begin{theorem}[Arithmetic and bit complexity]\label{thm:complexity}
Let $p$ be the real dimension of the input matrix space. Exact preprocessing
uses $O(Np^2)$ field operations and leaves $M\le\min(N,p)$ active coordinates.
Let $L$ count accepted local moves in line~\ref{line:accept},
$T\le L+M+1$ count calls to line~\ref{line:state}, and $J_{\max}$ be the
largest exponent $q$ reached in line~\ref{line:scale}.
With classical dense arithmetic, the total cost is
\begin{equation}\label{eq:costintro}
 O\!\left(Np^2
 +T\,d(Md^2+d^3)\Lambda^2
 +(L+J_{\max})(Md^3+M^2d^2+M^3)\right),
\end{equation}
where $\Lambda=1+\log d+\log(1/\eta)$ and $\eta$ is the internal accuracy.
Parameters with $L=\operatorname{poly}(M,d)$ and $J_{\max}=O(\log(Md))$ give polynomial bit complexity in the input encoding.
\end{theorem}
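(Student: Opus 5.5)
I would prove Theorem~\ref{thm:complexity} by charging the cost of Algorithm~\ref{alg:main} line by line, in three groups: the exact preprocessing, the $T$ optimizations at the current state, and the frame and trial work, which is paid once per accepted move and once per halving of the scale. Afterwards I would convert the operation counts into a bit bound, using the precision estimates of Appendix~\ref{sec:precision}.

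\textbf{Preprocessing.} I represent each $A_i$ as a vector in $\R^p$ and run fraction-free Gaussian elimination (Bareiss) on the $p\times N$ matrix of active inputs. Each dependency found gives a rational kernel direction $h$. The step length to the nearest endpoint is a ratio of minors, and after the step that coordinate is deleted. Eliminating one column against an echelon basis of size at most $p$ costs $O(p^2)$, and each column is deleted at most once, so the total is $O(Np^2)$ operations. Bareiss keeps every entry a minor of the input, hence of polynomial bit length. At the end the active columns are independent, which gives $M\le\min(N,p)$.

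\textbf{Current-state optimization (line~\ref{line:state}).} I would compute the minimizer of~\eqref{eq:potential} with a damped Newton or path-following scheme on the tight system~\eqref{eq:dualcore} and the constraints, carried to accuracy $\eta$.
\begin{itemize}
\item One evaluation of $\Eop_x$ on a $d\times d$ matrix costs $O(Md^2+d^3)$: $M$ traces, then a weighted sum of the $M_i$.
\item Applying $\Lop$ and solving with it by an iterative or structured method, using the bound $\|\Lop^{-1}\|\le\rho^{-1}$ from Lemma~\ref{lem:kkt}, costs $d$ times that per Newton step.
\item The iteration count comes in two factors of $\Lambda=1+\log d+\log(1/\eta)$. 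One is the number of outer path or bisection steps on $t$. The other is quadratic convergence combined with the precision needed for the matrix inversions.
\end{itemize}
Together these give $T\,d(Md^2+d^3)\Lambda^2$. The count $T\le L+M+1$ holds because the state changes only after an accepted local move (at most $L$ times) or after an endpoint freeze (at most $M$ times, one per coordinate), plus the initial call.

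\textbf{Frame and trials.} Per frame I form the traces $\Tr(M_iX),\Tr(M_iY)$ and the trace Gramians in $\mathsf T$. The traces cost $O(Md^3)$, or $O(Md^2)$ once $X,Y$ are factored. The Gramians need $O(M^2d^2)$ inner products. Inverting $I-\mathsf T$, forming $\Gamma$, and Cholesky-factoring it costs $O(M^3)$. I cache these factorizations. Each pass through lines~\ref{line:scale}--\ref{line:halve} then tests $2m$ candidates, and the corrected witnesses of Lemma~\ref{lem:witness} are obtained from the cached factors. The cost per pass is $O(Md^3+M^2d^2+M^3)$: each candidate contributes a correction through cached solves, and amortized over the frame this stays within the same bound. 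Counting passes, each rejected pass raises $q$ by one and each acceptance lowers it by at most one. The number of passes is therefore at most $L+J_{\max}$ plus the number of acceptances, which is $O(L+J_{\max})$.

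\textbf{Polynomial bit complexity.} This is the main obstacle. I take $L=\operatorname{poly}(M,d)$ from Section~\ref{sec:paired}: each accepted move decreases $R$ by an inverse-polynomial amount, and $R$ lies between $0$ and $4$. I take $J_{\max}=O(\log(Md))$ from the inverse-polynomial sufficient scale $s_c$. It then remains to show that every number stored has polynomial bit length. For this I would use three facts.
\begin{itemize}
\item Coordinates live on a fixed grid whose mesh is inverse polynomial.
\item By Theorem~\ref{lem:core} and Appendix~\ref{sec:precision}, the optimizer data, $\Sigma$, and the derivatives up to third order are bounded above and below by polynomials on the enlarged truncated face.
\item $\log(1/\eta)$ can therefore be chosen $O(\log(Md)+\text{input bits})$.
\end{itemize}
Rounding intermediates to this precision does not accumulate error. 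Each step certifies its values $L_x$ and $\overline U_y$ afresh, and fresh certification depends only on the current grid point, not on the history. Multiplying the operation count by the polynomial word size gives polynomial bit complexity.
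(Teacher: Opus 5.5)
Your frame-and-trial accounting and the count $T\le L+M+1$ are sound and essentially match the paper: pay $C_F$ once per local state and $O(C_S)$ per trial. Your unweighted pass count $L+\sum_\ell f_\ell\le 2L+J_{\max}$ suffices for \eqref{eq:costintro} because $m_\ell\le M$; the paper's weighted bound \eqref{eq:amortization} is sharper. The genuine gap is the second term, $T\,d(Md^2+d^3)\Lambda^2$. Your Newton/path-following scheme does not establish it; the factors $d$ and $\Lambda^2$ are read off the statement rather than derived. Three steps fail:
\begin{enumerate}
\item The bound $\norm{\Lop^{-1}}\le\rho^{-1}$ of Lemma~\ref{lem:kkt} comes from the stationarity equations. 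It therefore holds only at the optimizer and does not control the linear systems at intermediate Newton iterates.
\item Nothing bounds the number of damped Newton or path steps. Quadratic convergence is local and gives $\log\log(1/\eta)$, not a factor $\Lambda$; a global count needs a self-concordance or central-path analysis.
\item The structured direct solve of the $2M\times 2M$ coefficient system costs $O(Md^3+M^2d^2+M^3)$. This exceeds $d(Md^2+d^3)$ when $M$ is of order $d^2$, and when charged to the up to $M$ endpoint-triggered calls it is not absorbed by the third term either.
\end{enumerate}
The missing ingredient is that the paper avoids Newton altogether. At a fixed level $t$ it runs the monotone inverse recursion \eqref{eq:inverseiteration}, and each layer costs $O(Md^2+d^3)$. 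Lemma~\ref{lem:oracle} supplies the gap $t_*-\beta\ge\rho^2/2048$ and the tree-resolvent/Chebyshev tail $q^{2h+2}$ with $-\log q\ge\sqrt{\delta/t}$. Since $\rho^{-1}=10^4d$, depth $O(d\Lambda)$ suffices, and this is where the factor $d$ comes from. The convexity bound $F''\ge4\rho d/(t+\beta)^3$, together with the certified interval \eqref{eq:valueinterval}, then justifies $O(\Lambda)$ guarded bisection probes in $t$.

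Your bit-complexity step also has a hole. Re-certifying from the current grid point does stop error from accumulating across outer iterations. Within a single state computation, however, the $O(d\Lambda)$-layer recursion (or any iterative solver) can amplify rounding errors exponentially in the number of layers, so the working precision is not simply $\log(1/\eta)$. You need, as in Appendix~\ref{sec:precision}, that every inverted denominator is bounded below by an inverse polynomial (this is what the $g/16$ guard on Schur denominators secures). Then the logarithm of the amplification, and hence the number of working bits, stays polynomial. There are also smaller points in preprocessing:
\begin{enumerate}
\item Bareiss bounds the elimination entries but not the coordinates of $x$, whose bit lengths grow additively over the $O(N)$ kernel moves.
\item When an old basis column freezes instead of the new one, you must perform an $O(p^2)$ basis exchange, which the paper handles with its left-inverse updates.
\item The norm-scaling cost $O(Nd^2+d^3\log d)$ should be shown to be absorbed in $O(Np^2)$.
\end{enumerate}
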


The three terms in~\eqref{eq:costintro} account for preprocessing
(lines~\ref{line:preprocess}--\ref{line:initialize}), current-state solves
(line~\ref{line:state}), and cached frame and trial work
(lines~\ref{line:frame}--\ref{line:halve}), respectively.
The arithmetic model includes scalar square roots among
elementary operations; rational bracketing and certified comparisons are
accounted for in the bit-complexity statement. Norm scaling costs
$O(Nd^2+d^3\log(2d))$ operations and is absorbed by preprocessing. Preprocessing also limits the nonlinear phase to $M$ inputs. The retained
matrices are original inputs; every eliminated coordinate already has its
final sign. The streaming
updates and scale computation are detailed in Appendix~\ref{sec:preprocessing}.

\subsection{Computing the current state}\label{sec:currentstate}
Line~\ref{line:state} computes the optimizer and a certified value interval.
Fix $x$, abbreviate $\Eop_x$ to $\Eop$, and let $\beta$ be the infimum feasible
spectral level in~\eqref{eq:potential}, with the trace penalty omitted.
For $t>\beta$, start at $X_0=(tI-S)^{-1}$, $Y_0=(tI+S)^{-1}$ and iterate
\begin{equation}\label{eq:inverseiteration}
 X_{h+1}=(tI-S-\Eop(Y_h))^{-1},\qquad
 Y_{h+1}=(tI+S-\Eop(X_h))^{-1}.
\end{equation}
These monotone sequences converge to the least feasible pair $(X_t,Y_t)$.
The remaining optimization is one-dimensional:
$R(x)=\min_{t>\beta}F(t)$, where $F(t)=t+\rho\Tr(X_t+Y_t)$.
Differentiating the same recursion computes the dual data.

To estimate convergence, we represent the iterates as root blocks of the
resolvent of an auxiliary tree operator. Truncation at depth $h$ preserves
closed walks from the root through length $2h+1$. The finite and infinite
root resolvents therefore have matching moments through that order.
A Chebyshev approximation to the resolvent then gives an error exponent
proportional to $h\sqrt{(t-\beta)/t}$. The trace regularization keeps the minimizing level a polynomial distance
above $\beta$. Together, these estimates give depth $O(d\Lambda)$ and
$O(\Lambda)$ scalar probes.
Each layer uses $O(md^2+d^3)$ operations, giving
\begin{equation}\label{eq:datacost}
 C_D(m,d)=O\bigl(d(md^2+d^3)\Lambda^2\bigr).
\end{equation}
The lower endpoint of the certified interval~\eqref{eq:valueinterval}
supplies $L_x$ for line~\ref{line:accept}. Appendix~\ref{sec:oracle} bounds the gap and approximation error, and
justifies the guarded search. The tree and polynomial approximation appear
only in the analysis. The computation uses~\eqref{eq:inverseiteration} and
its derivative.

\subsection{Reusing the current linearization}\label{sec:witness}
Once the current optimizer is computed, lines~\ref{line:frame}
and~\ref{line:trials} share one factorization.
The response operator $\Lop$ in~\eqref{eq:L} can be solved through a
$2m\times2m$ coefficient system. Cache $XM_iX$, $YM_iY$, and their trace
Gramians, then factor this system once. Preparation costs
$C_F(m,d)=O(md^3+m^2d^2+m^3)$, and an additional right-hand side costs
$C_S(m,d)=O(md^2+m^2+d^3)$.
A solve with the Jacobian of the full optimality system reduces to two such
response solves and a scalar Schur complement
(Appendix~\ref{sec:linearsolves}). It therefore uses
the same factorization.

To use this factorization at trial points, let $\mathcal F(x,z)=0$ denote
the optimality system, with
$z=(t,X,Y,P,Q)$ and $J_z=\partial_z\mathcal F$ at the current state.
For each frame column $h$, cache the full optimizer derivative
$z_1=-J_z^{-1}\partial_x\mathcal F[h]$. At a trial point $y=x+sh$, predict
$z_p=z+sz_1$ and correct once with the current Jacobian:
\begin{equation}\label{eq:witness}
 z_c=z_p-J_z^{-1}\mathcal F(y,z_p),\qquad
 U(y)=t_c+\alpha+\rho\Tr(X_c+Y_c).
\end{equation}
Here $\alpha$ is the larger of the two Frobenius norms of the primal
constraint residuals. Certified arithmetic bounds this quantity to the
prescribed accuracy.
Adding this quantity to the spectral level makes the corrected triple
feasible, and hence $U(y)\ge R(y)$, whenever $X_c,Y_c\succ0$. Positive-definiteness tests reject invalid candidates. The
Frobenius bounds avoid an eigenvalue optimization at trial points.

\begin{lemma}[Third-order trial witness]\label{lem:witness}
For small $s$, uniformly on the truncated accepted-state region,
\begin{equation}\label{eq:witnessexpansion}
 U(x+sh)=R(x)+s\nabla R[h]+\tfrac12s^2\nabla^2R[h,h]
              +O(|s|^3\norm h^3).
\end{equation}
The error constant and the inverse of a valid neighborhood radius are
polynomially bounded in $M,d$.
\end{lemma}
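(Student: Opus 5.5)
Since $R(x)\le U(x+sh)$ and $U$ is built from the optimizer at $x$, I will prove the expansion by comparing $U$ with the objective of an exact smooth curve of optimizers. By Lemma~\ref{lem:kkt}, on the open face the optimality system $\mathcal F(x,z)=0$ has a unique solution $z(x)=(t,X,Y,P,Q)$. The Jacobian $J_z$ is invertible there, since it reduces to $\Lop$, whose inverse has norm at most $\rho^{-1}$, plus a scalar Schur complement that is nonzero by strict convexity of $F(t)$. The implicit function theorem then gives a smooth branch $z(x+sh)$ with $R(x+sh)=t(s)+\rho\Tr(X(s)+Y(s))$. Theorem~\ref{lem:core} asserts polynomial bounds on the first three derivatives of $R$ on the enlarged truncated face. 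I will use the same bounds for derivatives of $z$ up to order three, together with polynomial bounds on $\norm{J_z^{-1}}$ and on the second derivatives of $\mathcal F$. The latter are immediate, since $\mathcal F$ is polynomial in $(x,z)$ apart from inverses of $X,Y$, which are controlled by $\rho$ and the sublevel bound.

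The key step is a Newton estimate. Taylor expansion gives $z(x+sh)=z+sz_1+O(s^2\norm h^2)$, so the predictor error is second order and $\mathcal F(y,z_p)=O(s^2\norm h^2)$. Write $e_p=z_p-z(y)$. One chord-Newton step with the frozen Jacobian gives
\[
 z_c-z(y)=J_z^{-1}\bigl(J_z-\partial_z\mathcal F(y,\cdot)\bigr)e_p+O(\norm{e_p}^2).
\]
Because $\norm{J_z-\partial_z\mathcal F(y,\zeta)}=O(|s|\norm h)$ along the segment, this error is $O(|s|^3\norm h^3)$. Hence $t_c+\rho\Tr(X_c+Y_c)$ equals the exact objective up to $O(|s|^3\norm h^3)$. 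Likewise the primal residuals at $z_c$ are $O(|s|^3\norm h^3)$, because the residuals at $z(y)$ vanish and the residual map is Lipschitz with polynomial constant. So $\alpha=O(|s|^3\norm h^3)$, and Taylor-expanding $R(x+sh)$ to second order gives \eqref{eq:witnessexpansion}.

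For validity, $X_c,Y_c\succ0$ must hold. The exact $X(y),Y(y)$ have smallest eigenvalue bounded below polynomially on the sublevel set; this comes from the constraint $X^{-1}\preceq tI-S-\Eop(Y)$ and the bounds on $t$. So a perturbation of size $O(s^2)$ preserves definiteness once $|s|\norm h$ is below an inverse-polynomial radius. The same radius keeps $y$ inside $|y_i|\le1-\sigma/2$, where Theorem~\ref{lem:core} controls derivatives.

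I expect the main obstacle to be the uniform lower bound on the smallest singular value of the \emph{full} Jacobian $J_z$, including the $t$-direction and the multiplier block. It does not follow from $\norm{\Lop^{-1}}\le\rho^{-1}$ alone. I would first try the Schur reduction from Appendix~\ref{sec:linearsolves}: the $t$-block complement equals $F''(t)$. I would bound it below using the regularization $\rho$ and the polynomial gap between the minimizing level and $\beta$ established in Section~\ref{sec:currentstate}.
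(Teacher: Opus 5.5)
Your argument is correct and is essentially the paper's proof. Both show that the single frozen-Jacobian correction makes $z_c$ agree with the exact optimizer $z(x+sh)$ through second order: the paper notes that $-J_z^{-1}\mathcal F(y,z_p)$ has quadratic term $s^2z_2/2$, while you use the chord-Newton bound $\|z_c-z(y)\|=O(\|J_z-\partial_z\mathcal F\|\,\|e_p\|)=O(|s|^3\|h\|^3)$. Hence the primal residuals and $\alpha$ are third order, the expansion follows because the objective is linear in $z$, and uniformity comes from the polynomial inverse-Jacobian and derivative bounds of Appendix~\ref{sec:precision}.

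There are two harmless slips. The opening inequality should read $R(x+sh)\le U(x+sh)$. Also, $\mathcal F$ depends on $x$ through $\psi(x_i)$ as well as through inverses of $X,Y$. The derivatives of $\psi$ are polynomially bounded in $\sigma^{-1}$ only on the truncated face, so the bound on the second derivatives of $\mathcal F$ is not ``immediate'' from polynomiality; this is one reason your radius must keep $|y_i|\le1-\sigma/2$.
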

\begin{proof}
The predictor cancels the first derivative of the optimality system. Its
residual is
$s^2D^2\mathcal F[(h,z_1),(h,z_1)]/2+O(|s|^3\norm h^3)$.
Implicit differentiation shows that the correction's quadratic term is
$s^2z_2/2$, where $z_2$ is the optimizer's second derivative.
Thus $z_c$ agrees with the true optimizer through second order, and its primal
residual is third order. The feasibility correction $\alpha$ is therefore also third order.
Since the objective is linear in $z$, the expansion follows. Uniformity follows
from the derivative and inverse-Jacobian bounds of Appendix~\ref{sec:precision}.
\end{proof}

Endpoint tests need no such correction: line~\ref{line:endpoint} uses the
current primal data directly. A small portion of the
endpoint allowance covers their numerical uncertainty, so every endpoint that preserves feasibility with the exact current data is
accepted. The complete tests are given in
Appendix~\ref{sec:linearsolves}.

\subsection{Total work and precision}
To obtain the total cost, we must count rejected scales as well as accepted
moves. The exponent $q$ persists through endpoint moves, decreases by at most one
after a local move (line~\ref{line:restart}), and increases by one after a
failed scale (line~\ref{line:halve}). The search therefore does not restart backtracking from the initial scale
after every move. If $k_\ell$ is the accepted exponent at local
search $\ell$ and $f_\ell$ is its number of failed scales, then
$k_\ell=k_{\ell-1}-u_\ell+f_\ell$, with $u_\ell\in\{0,1\}$.
Summing these recurrences cancels the intermediate exponents. Since the
active counts $m_\ell$ decrease, summation by parts also gives the weighted bound
\begin{equation}\label{eq:amortization}
 \sum_{\ell=1}^L m_\ell(1+f_\ell)
 \le2\sum_{\ell=1}^L m_\ell+MJ_{\max}.
\end{equation}
Caching all first responses costs $O(mC_S)=O(C_F)$ per local state; each
trial costs $O(C_S)$, including the matrix products and definiteness tests.
With~\eqref{eq:datacost}, this proves Theorem~\ref{thm:complexity}'s arithmetic bound.

For the bit-complexity bound, Appendix~\ref{sec:precision} gives uniform
lower bounds on the frame covariance, the allowable gradient imbalance, and
the decrease from curvature.
It uses these bounds to choose a minimum dyadic step and a rational working
precision for the acceptance tests. The resulting number of moves and bit
lengths are polynomially bounded.

Structured input can reduce individual costs. For rank-one matrices, write
$M_i=w_iw_i^*/a_i$ using a nonzero diagonal pivot
$a_i=(M_i)_{kk}$ and $w_i=M_ie_k$. Then
$\Tr(M_iXM_jX)=|w_i^*Xw_j|^2/(a_ia_j)$, and forming the response frame costs
$O(md^2+m^2d+m^3)$. The same improvement holds for a direct sum of a fixed number of rank-one
blocks. The remaining costs in~\eqref{eq:costintro} still apply.

\section{Partitions and graph signings}\label{sec:applications}
The applications use the fact that each PSD input receives one sign.
A vector may represent a frame element, while a higher-rank matrix may
encode several constraints that must receive the same sign.
We use the sharper coefficient $\Cstar$ of Theorem~\ref{thm:sharp} throughout.

\begin{corollary}[Partitions and signings]\label{cor:applications}
These constructions take deterministic polynomial time.
\begin{enumerate}[label=(\roman*)]
\item If rational PSD inputs satisfy $\sum_iA_i=T\succeq0$ and $\Tr(A_i)\le\varepsilon$, there is a partition $S_+\sqcup S_-$ with
\[
 \max_{\omega\in\{+,-\}}
 \Big\|\sum_{i\in S_\omega}A_i-\tfrac12T\Big\|
 <\tfrac{\Cstar}{2}\sqrt{\varepsilon\norm T}
 <1.69\sqrt{\varepsilon\norm T}
\]
when $T\ne0$. For $A_i=v_iv_i^*$ and $T=I$, this is the Kadison--Singer partition problem. 
\item Every finite simple graph of maximum degree $\Delta>0$ has an edge signing whose signed adjacency and signed-degree matrices satisfy
\[
 \max\{\norm{A_s},\norm{D_s}\}<2\sqrt2\,\Cstar\sqrt\Delta<9.53\sqrt\Delta.
\]
For bipartite graphs the bound is $2\Cstar\sqrt\Delta<6.74\sqrt\Delta$.
The graph bounds also hold for rational edge weights in $[0,1]$, with
positive maximum weighted degree replacing $\Delta$, and for differences
from a prescribed rational fractional signing. The zero-weight case has zero
discrepancy.
\end{enumerate}
\end{corollary}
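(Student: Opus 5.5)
Both parts are direct reductions to Theorem~\ref{thm:main}, used with the sharper coefficient $\Cstar$, so $\|\sum_i(s_i-x_i^0)A_i\|<\Cstar\sqrt V$.

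\emph{Part (i).} Take the starting point $x^0=0$. Since $\Tr(A_i)\le\varepsilon$ and $A_i\succeq0$,
\[
\sum_i\Tr(A_i)A_i\preceq\varepsilon\sum_iA_i=\varepsilon T,
\qquad\text{so}\qquad V\le\varepsilon\norm T .
\]
If $V=0$, every $A_i$ is zero, so $T=0$, which the statement excludes. Otherwise the theorem returns $s$ with $\|\sum_i s_iA_i\|<\Cstar\sqrt{\varepsilon\norm T}$. Put $S_\pm=\{i:s_i=\pm1\}$. Then
\[
\sum_{i\in S_\pm}A_i-\tfrac12T=\pm\tfrac12\sum_i s_iA_i ,
\]
which gives the bound $\tfrac{\Cstar}{2}\sqrt{\varepsilon\norm T}$, and $\Cstar/2<1.69$. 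Rationality of the inputs carries over, so the running time is polynomial. For the Kadison--Singer case, $A_i=v_iv_i^*$, $T=I$, $\norm T=1$ and $\Tr(A_i)=\|v_i\|^2\le\varepsilon$.

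\emph{Part (ii).} Encode edge $e=\{u,v\}$ with weight $w_e\in[0,1]$ as the rank-two PSD matrix
\[
A_e=w_e\begin{pmatrix}I_n&0\\0&I_n\end{pmatrix}\text{-block form } \ A_e = w_e\big((e_u+e_v)(e_u+e_v)^*\oplus(e_u-e_v)(e_u-e_v)^*\big)/2
\]
acting on $\R^{2n}$. The first block sums to $(D+A)/2$ and the second to $(D-A)/2$. Any sign on $A_e$ therefore enters both blocks with the same sign. Writing $\Sigma=\sum_e s_eA_e$, the first block of $\Sigma$ is $(D_s+A_s)/2$ and the second is $(D_s-A_s)/2$. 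So
\[
\norm{A_s}\le\norm{\Sigma}\quad\text{and}\quad\norm{D_s}\le\norm{\Sigma},
\]
because $A_s$ and $D_s$ are the difference and the sum of the two blocks.

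Next compute the scale. $\Tr(A_e)=2w_e$, so
\[
\sum_e\Tr(A_e)A_e=\sum_e 2w_e^2(\cdot)\preceq 2\cdot\tfrac12\big((D'+A')\oplus(D'-A')\big),
\]
where $D'$ and $A'$ are built from $w_e^2\le w_e$. Each block has norm at most $\norm{D'}+\norm{A'}\le2\Delta$. Hence $V\le2\Delta$ and $\norm{\Sigma}<\Cstar\sqrt{2\Delta}$. With the halving above this gives the bound $2\sqrt2\,\Cstar\sqrt\Delta$ after the appropriate normalization; the constant is fixed once the block scaling is finalized.

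For bipartite graphs, $D+A$ and $D-A$ are unitarily similar via the bipartition sign matrix. I will therefore use a single block $(e_u+e_v)(e_u+e_v)^*$ and conjugate by the bipartition signs. This removes the factor $\sqrt2$ and gives $2\Cstar\sqrt\Delta$.

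Fractional starts and differences from a prescribed signing come directly from the general $x^0$ in the theorem. With weights, $\Delta$ is replaced by the maximum weighted degree. If all weights are zero, then $V=0$ and the discrepancy is trivially zero.

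The main obstacle is the constant bookkeeping in part (ii): choosing the rank-two encoding and the normalization so that $V$ and the recovery of $A_s,D_s$ combine to exactly $2\sqrt2\,\Cstar\sqrt\Delta$, and $2\Cstar\sqrt\Delta$ in the bipartite case. I will first try the unnormalized encoding $(e_u+e_v)(e_u+e_v)^*\oplus(e_u-e_v)(e_u-e_v)^*$. It has trace $4w_e$, so $V\le4\cdot2\Delta$ and $\norm{\Sigma}<\Cstar\sqrt{8\Delta}=2\sqrt2\,\Cstar\sqrt\Delta$. Since $A_s$ and $D_s$ are half-sums of the blocks, they are bounded by $\norm\Sigma$. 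That is the stated bound.
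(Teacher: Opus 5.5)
Your proposal is correct and follows the paper's proof. Part (i) is the same $x^0=0$ reduction using $\sum_i\Tr(A_i)A_i\preceq\varepsilon T$, and your final unnormalized encoding $(e_u+e_v)(e_u+e_v)^*\oplus(e_u-e_v)(e_u-e_v)^*$ with $V\le 8\Delta$ (resp.\ the single block $e_u+e_v$ with $V\le 4\Delta$ for bipartite graphs) is exactly the paper's construction. Discard the half-scaled attempt: there the blocks are $(D_s\pm A_s)/2$, so recovering $A_s$ and $D_s$ gives $\|A_s\|,\|D_s\|\le 2\|\Sigma\|$, not $\|\Sigma\|$, although the final constant comes out the same by scale invariance. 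Your bipartite recovery via $\|D_s-A_s\|=\|D_s+A_s\|$ is an equivalent alternative to the paper's off-diagonal-block compression.
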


\subsection{Spectral partitions}
If $\sum_iA_i=T$ and $\Tr(A_i)\le\varepsilon$, positivity gives
$\sum_i\Tr(A_i)A_i\preceq\varepsilon T$.
Starting at $x^0=0$, the signed sum is twice the deviation of either part
from $T/2$, so
Theorem~\ref{thm:sharp} proves the partition assertion in
Corollary~\ref{cor:applications}. The same argument applies to every subfamily $S$, whose total matrix is
$T_S=\sum_{i\in S}A_i$. We can therefore apply it repeatedly to halve the
parts.

\begin{proposition}[Multi-way partitions]\label{prop:halving}
Let $A_i\succeq0$ be rational, $\Tr(A_i)\le\varepsilon$, and $t_0=\norm{\sum_iA_i}>0$. If $2^k\varepsilon\le t_0/67$, a deterministic polynomial-time algorithm partitions the inputs into $2^k$ parts, each satisfying
\begin{equation}\label{eq:halving}
 \Big\|T_S-2^{-k}\sum_iA_i\Big\|
 \le11.5\sqrt{\varepsilon t_0}\,2^{-k/2}
\end{equation}
\end{proposition}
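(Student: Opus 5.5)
We prove the proposition by repeated halving, applying Corollary~\ref{cor:applications}(i) to each current part and carrying an inductive bound on the norms of the parts. Let $T=\sum_iA_i$ and, at level $j$, let the $2^j$ parts have totals $T_S$ with deviations $E_S=T_S-2^{-j}T$. Write $e_j=\max_S\norm{E_S}$ and $t_j=2^{-j}t_0+e_j$, so that $\norm{T_S}\le t_j$. If $T_S=0$ we split arbitrarily, since both halves are then zero. Otherwise Corollary~\ref{cor:applications}(i), applied to the subfamily $S$ with the same $\varepsilon$, splits $S$ into $S_\pm$ with $\norm{T_{S_\pm}-\tfrac12T_S}<\tfrac{\Cstar}{2}\sqrt{\varepsilon t_j}$. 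The triangle inequality then gives
\[
 \norm{T_{S_\pm}-2^{-j-1}T}\le\tfrac12\norm{E_S}+\tfrac{\Cstar}{2}\sqrt{\varepsilon t_j},
\]
and hence the recursion $e_{j+1}\le\tfrac12e_j+\tfrac{\Cstar}{2}\sqrt{\varepsilon t_j}$ with $e_0=0$. Each level makes at most $2^k$ calls to a polynomial-time routine. Since $2^k\le t_0/(67\varepsilon)$ and $\varepsilon$ is at least the smallest nonzero trace, $2^k$ is polynomial in the input size: nonempty parts are disjoint, so there are at most $N$ nonzero calls per level, and $k\le\log_2 N+O(1)$ unless we cap at $N$ parts. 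This settles the running time.

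For the analysis, normalize $e_j=\sqrt{\varepsilon t_0}\,2^{-j/2}\theta_j$ and set $r_j=2^{j}\varepsilon/t_0\le 1/67$. Then $t_j=2^{-j}t_0\bigl(1+\theta_j\sqrt{r_j}\bigr)$, and the recursion becomes
\[
 \theta_{j+1}\le\frac{\theta_j}{\sqrt2}+\frac{\Cstar}{\sqrt2}\sqrt{1+\theta_j\sqrt{r_j}}.
\]
We prove by induction that $\theta_j\le\theta^*=11.5$ for all $j\le k$. Because $r_j\le r_k\le1/67$ for $j\le k$, the inductive step reduces to the scalar check
\[
 \frac{11.5}{\sqrt2}+\frac{\Cstar}{\sqrt2}\sqrt{1+11.5/\sqrt{67}}\le11.5 .
\]
This holds: the square root is about $\sqrt{2.405}\approx1.551$, the right-hand side evaluates to $8.13+3.69\approx11.83$, and this must be compared with $11.5$. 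Note that the recursion is applied only for $j<k$, which uses $r_j\le r_{k-1}=r_k/2\le1/134$. With $r\le1/134$ the square root is $\sqrt{1+0.993}\approx1.412$, giving $8.132+3.363=11.495\le11.5$.

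The main obstacle is this constant bookkeeping. The threshold $67$ and the value $11.5$ are tuned so that the fixed-point inequality closes only barely. So the plan is to state the induction carefully, using $r_j\le 2^{j-k}/67$ for levels $j<k$ where splitting occurs, and to verify the final numerical inequality with the exact value $\Cstar=3.367912113$. If this margin fails, one can instead use the sharper sum $\sum_j 2^{-(k-1-j)}$ without passing to a uniform fixed point. Unwinding $\theta_k\le\sum_{j<k}2^{-(k-1-j)/2}\tfrac{\Cstar}{\sqrt2}\sqrt{1+\theta_j\sqrt{r_j}}$ gives a geometric series with ratio $2^{-1/2}$ and decaying $r_j$, so the factor $\sqrt{1+\theta_j\sqrt{r_j}}$ approaches $1$ at early levels and leaves ample room below $11.5$.
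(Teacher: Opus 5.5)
Your proposal is correct and matches the paper's proof: it uses the same recurrence $e_{j+1}\le\tfrac12e_j+\tfrac{\Cstar}{2}\sqrt{\varepsilon(2^{-j}t_0+e_j)}$, the same normalized induction with constant $11.5$, and the same closing check, which must use $2^j\varepsilon/t_0\le1/134$ for the levels $j<k$ at which splitting occurs (your first check with $1/67$ fails, as you noticed, and the fallback is then unnecessary). For polynomial running time, the clean reason is $t_0\le\sum_i\Tr(A_i)\le N\varepsilon$, so $2^k\le N/67$, which is exactly the paper's argument.
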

\begin{proof}
Apply the two-part theorem recursively. If $e_j$ bounds the deviation of a
depth-$j$ part from $2^{-j}\sum_iA_i$, its norm is at most
$2^{-j}t_0+e_j$. Since $\Cstar/2<1.684$, the next split obeys
\begin{equation}\label{eq:halvingrecurrence}
 e_0=0,\qquad
 e_{j+1}=\tfrac12e_j+1.684\sqrt{\varepsilon(2^{-j}t_0+e_j)}.
\end{equation}
Appendix~\ref{sec:halvingproof} verifies that this recurrence gives
\eqref{eq:halving} under the stated restriction on $k$.
\end{proof}

For a positive contraction, spectral partitioning also gives a paving by
coordinate sets. We write $P_S$ for coordinate projection onto $S$.

\begin{proposition}[Paving positive contractions]\label{prop:paving}
Let $\gamma,\epsilon>0$ be rational. Suppose $T\succeq0$, $\norm T\le1$, and $T_{ii}\le\gamma$.
If $T=C^*C$ is supplied with rational $C$ and $268\gamma\le\epsilon\le1$, one can find a partition into $r\le4/\epsilon$ parts with $\norm{P_STP_S}\le\epsilon$.
For a rational $T$ supplied without a factorization, $269\gamma\le\epsilon\le1$ suffices, with $r<4.02/\epsilon$. Both algorithms have polynomial bit complexity in their rational input encoding.
\end{proposition}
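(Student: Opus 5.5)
The plan is to reduce paving to the multi-way partition result, Proposition~\ref{prop:halving}, applied to the rank-one family coming from the columns of a factor. With $T=C^*C$, write $C$ with columns $c_i$, so that $T_{ii}=\norm{c_i}^2\le\gamma$. Set $A_i=c_ic_i^*$; these are rational rank-one PSD matrices. Then $\sum_iA_i=CC^*$, whose norm is $\norm T\le1$, and $\Tr(A_i)=T_{ii}\le\gamma$. For a coordinate set $S$ the key identity is
\[
\norm{P_STP_S}=\norm{P_SC^*CP_S}=\Big\|\sum_{i\in S}A_i\Big\|,
\]
since $CP_S(CP_S)^*=\sum_{i\in S}c_ic_i^*$. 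So it suffices to partition the indices into $2^k$ parts $S$, each with $\norm{T_S}\le\epsilon$.

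Let $t_0=\norm{CC^*}$. If $t_0\le\epsilon$, one part is enough. Otherwise I choose $k$ as the largest integer with $2^k\gamma\le t_0/67$; this $k$ is at least $1$ because $268\gamma\le\epsilon<t_0$ gives $\gamma<t_0/268$. Proposition~\ref{prop:halving} then gives, for each part, with $u=2^{-k}t_0$,
\[
\norm{T_S}\le 2^{-k}t_0+11.5\sqrt{\gamma t_0}\,2^{-k/2}=u+11.5\sqrt{\gamma u}.
\]
Maximality of $k$ gives $u<134\gamma\le\epsilon/2$, so $\sqrt{\gamma u}\le\epsilon/(268)^{1/2}\cdot(1/2)^{1/2}$, and $11.5\sqrt{\gamma u}\le 11.5\,\epsilon/\sqrt{536}<\epsilon/2$. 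Together this gives $\norm{T_S}<\epsilon$. For the count, $2^k\gamma\le t_0/67$ and $t_0\le1$ give $r=2^k\le 1/(67\gamma)\le 4/\epsilon$, using $\epsilon\ge 268\gamma$, that is, $1/\gamma\le 268/\epsilon$. The constant $268=4\cdot67$ is exactly what makes this count work, which suggests the intended choice of $k$.

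If $T$ is given without a factorization, I would compute a rational approximate factor, for example by an $LDL^*$ decomposition or a rational approximation $C$ with $\norm{C^*C-T}\le\eta$ and $|(C^*C)_{ii}-T_{ii}|\le\eta$ for a small rational $\eta$. I then run the argument above with $\gamma'=\gamma+\eta$ and target $\epsilon-\eta$, and with $\norm{C^*C}\le 1+\eta$. Exact $LDL^*$ leaves square roots of pivots, so some rational approximation is needed. Taking $\eta$ of order $\epsilon/1000$ turns $268$ into $269$ and $4$ into $4.02$.

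I expect the main obstacle to be this approximate-factor step. The bit lengths must stay polynomial, which depends on the polynomial-time guarantee in Proposition~\ref{prop:halving}, and the slack from $\eta$ must be tracked cleanly in both the diagonal bound and the norm bound. In particular, $t_0$ may exceed $1$ slightly, and that affects the part count.
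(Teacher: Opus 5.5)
Your reduction is exactly the paper's: the column outer products $A_i=c_ic_i^*$, the identity $\norm{P_STP_S}=\norm{\sum_{i\in S}A_i}$, and the norm estimate via Proposition~\ref{prop:halving}. But the part count fails as written. You choose $k$ from the given diagonal bound $\gamma$, so $r=2^k$ is comparable to $t_0/(67\gamma)$. The step ``$1/\gamma\le268/\epsilon$'' reverses the hypothesis, since $268\gamma\le\epsilon$ gives $1/\gamma\ge268/\epsilon$. Because $\gamma$ may be far smaller than $\epsilon/268$, nothing bounds $r$ by $4/\epsilon$. For example, take $T=\frac1n\mathbf 1\mathbf 1^*$ with $\gamma=1/n$ and $\epsilon=1/2$: for large $n$ your rule produces more than $n/134$ parts rather than at most $8$.

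The missing step, which the paper takes, is to enlarge the diagonal bound to $\gamma=\epsilon/268$ before choosing $k$; this is still a valid bound on $T_{ii}$. Then the largest $k$ with $2^k\le t_0/(67\gamma)=4t_0/\epsilon$ gives $r\le4t_0/\epsilon\le4/\epsilon$. Your norm estimate goes through verbatim with $u<134\gamma=\epsilon/2$; it is the paper's $(134+11.5\sqrt{134})\gamma<268\gamma$. Your remark that $268=4\cdot67$ makes the count work is correct only for this enlarged $\gamma$.

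The unfactored sketch inherits the same problem; the paper enlarges to $\gamma=\epsilon/269$. Its slack estimate is also too optimistic. With absolute errors, $268(\gamma+\eta)\le\epsilon-\eta$ under $269\gamma\le\epsilon$ requires $\eta\le\epsilon/269^2$, whereas $\eta\approx\epsilon/1000$ only gives a constant near $367$, not $269$. The paper builds a rational Hermitian $C$ with $\norm{C-T^{1/2}}\le\gamma/10^4$ by rounded Newton iteration. It then applies the factored argument to $\widehat T=C^2$, with $\widehat\gamma=(1+3\cdot10^{-4})\gamma$ and $\widehat t_0/(67\widehat\gamma)\le1/(67\gamma)=269/(67\epsilon)$.

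Your $LDL^*$ idea is a viable and more elementary alternative if you approximate each pivot square root to \emph{relative} accuracy $\eta$. An exact rational $LDL^*$ exists, since a zero pivot of a PSD matrix comes with a zero column. Then $-\eta T\preceq\widehat T-T\preceq\eta T$, which gives $\widehat T_{ii}\le(1+\eta)T_{ii}$, $\norm{\widehat T}\le1+\eta$, and $\norm{P_STP_S}\le\norm{P_S\widehat TP_S}/(1-\eta)$. With $\widehat\gamma=(1+\eta)\epsilon/269$, a constant $\eta\le1/537$ already yields $\norm{P_STP_S}<268\widehat\gamma/(1-\eta)\le\epsilon$ and $r\le269/(67\epsilon)<4.02/\epsilon$, and it avoids the Newton error analysis.
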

\begin{proof}
For factored input, partition the column outer products of $C$; their partial
sums have the same nonzero spectra as the corresponding coordinate
compressions of $T$. A rational square-root approximation gives the unfactored
case. Appendix~\ref{sec:pavingproof} gives the quantitative estimates and the
finite-precision factorization argument.
\end{proof}

\subsection{Graph signings}
For graphs, we encode adjacency and degree together so that both receive the
same edge signs. Let $A,D$ be a simple graph's unsigned adjacency and degree matrices. For each edge $e=uv$, set
\begin{equation}\label{eq:graphencoding}
 B_e^\pm=(e_u\pm e_v)(e_u\pm e_v)^*,\qquad
 \widehat B_e=B_e^+\oplus B_e^-.
\end{equation}
Each $\widehat B_e$ is rational PSD of rank two and trace four. For the signed-degree matrix $D_s$,
\begin{equation}\label{eq:graphvariance}
 \sum_es_e\widehat B_e=(D_s+A_s)\oplus(D_s-A_s),\qquad
 \sum_e\Tr(\widehat B_e)\widehat B_e=4[(D+A)\oplus(D-A)].
\end{equation}
Both $D\pm A$ are positive semidefinite and have norm at most $2\Delta$. Theorem~\ref{thm:sharp} therefore gives a direct-sum norm below $\Cstar\sqrt{8\Delta}$. Half the difference of the two blocks gives the adjacency matrix; half their
sum gives the degree matrix. Thus
$\norm{A_s},\norm{D_s}<2\sqrt2\,\Cstar\sqrt\Delta$.
This is the general graph bound of Corollary~\ref{cor:applications}. For rational weights $\omega_e\in[0,1]$, use $\omega_e\widehat B_e$ and $\omega_e^2\le\omega_e$; the trace scale is at most $8\Delta_\omega$. Fractional starts give the same statements for differences from $A_{x^0},D_{x^0}$.

For a bipartite graph with classes $U,W$, use instead $v_e=e_u\oplus e_v$ in $\R^U\oplus\R^W$, and input $v_ev_e^*$. The unsigned sum has diagonal blocks bounded by $\Delta I$. For any PSD block matrix,
$\left(\begin{smallmatrix}X&Y\\Y^*&Z\end{smallmatrix}\right)\preceq2\diag(X,Z)$,
because subtracting the block matrix from the right-hand side gives its
conjugate by $\diag(I,-I)$. Since each input has trace two, the trace scale is at most $4\Delta$. The off-diagonal signed block has norm below $2\Cstar\sqrt\Delta$, and the bipartite signed adjacency has exactly that norm. The diagonal blocks give the same signed-degree bound. Weights and fractional starts work as above.

\paragraph{Comparison with the oriented encoding.}\phantomsection\label{sec:oriented}
An oriented encoding also controls signed in-degrees and out-degrees. Pair the odd-degree vertices and add the pairing edges. Orient an Euler tour,
then delete the added edges. The in-degrees and out-degrees are both at most $\lceil\Delta/2\rceil$. Use $e_u\oplus e_v$ for each oriented edge $u\to v$ in two copies of the vertex space. The PSD block bound gives trace scale at most $4\lceil\Delta/2\rceil$, hence a signing with
\[
 \norm{A_s}<4\Cstar\sqrt{\lceil\Delta/2\rceil},\qquad
 \max_v\{|d_s^{\rm out}(v)|,|d_s^{\rm in}(v)|\}
 <2\Cstar\sqrt{\lceil\Delta/2\rceil}.
\]
For weighted graphs any orientation gives the analogous bounds with
$\Delta_\omega$ inside the square roots. In the unweighted case, the direct-sum
encoding~\eqref{eq:graphencoding} gives an adjacency bound no larger than the
oriented bound, and strictly smaller when $\Delta$ is odd. The oriented version also controls signed in-degrees and out-degrees
separately. Their sum bounds the signed degrees, so both encodings control
adjacency and degree simultaneously.

For the adjacency matrix alone, the randomized algorithm of
Section~\ref{sec:repair} improves the general-graph coefficient, relative to
$\sqrt\Delta$, from $2\sqrt2\,\Cstar<9.53$ to
$2\sqrt2\sqrt{(\Delta-1)/\Delta}<2.83$. For bipartite graphs it improves
$2\Cstar<6.74$ to $2\sqrt{(\Delta-1)/\Delta}<2$. It does not control signed
degrees and does not accept weights or fractional starts.

\subsection{Iterated lifts}\label{sec:lifts}
Graph signings also control the new eigenvalues of graph lifts. Given a
signing, make two copies of the graph's vertex set. Keep the two
copies of each positive edge within their respective sheets. Cross the two
copies of each negative edge between the sheets. The adjacency of this $2$-lift is
$\left(\begin{smallmatrix}A_+&A_-\\A_-&A_+\end{smallmatrix}\right)$.
Under the orthogonal change of basis $(u,v)\mapsto((u+v)/\sqrt2,(u-v)/\sqrt2)$,
this adjacency matrix becomes $A\oplus A_s$. Therefore repeated applications of the graph signing algorithm give the following.

\begin{corollary}[Deterministic lift families]\label{cor:lifts}
From a $d$-regular graph $G_0$ on $n_0$ vertices, one can construct a sequence of $2$-lifts $G_k$ on $n_02^k$ vertices, in time polynomial in the output size, such that
\[
 \lambda(G_k)\le\max\{\lambda(G_0),\,2\sqrt2\,\Cstar\sqrt d\}.
\]
For bipartite $G_0$, replace the second term by $2\Cstar\sqrt d$. Here $\lambda$ excludes one copy of $d$ and, for bipartite graphs, one copy of $-d$, and takes the maximum absolute value of the remaining eigenvalues (zero if none remain). If $G_0$ is connected, so are all lifts, with a uniform spectral gap for $d\ge91$ (bipartite: $d\ge46$).
\end{corollary}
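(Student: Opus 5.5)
We prove Corollary~\ref{cor:lifts} by iterating the graph-signing construction of Corollary~\ref{cor:applications}(ii) and using the block diagonalization of the $2$-lift described just above. Set $G_{k+1}$ to be the $2$-lift of $G_k$ determined by a signing $s^{(k)}$ of $G_k$, produced by the deterministic algorithm. By the change of basis $(u,v)\mapsto((u+v)/\sqrt2,(u-v)/\sqrt2)$, the adjacency of $G_{k+1}$ is orthogonally similar to $A(G_k)\oplus A_{s^{(k)}}(G_k)$. Its spectrum is therefore the spectrum of $G_k$ together with the \emph{new} eigenvalues, namely those of $A_{s^{(k)}}$.

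Each $G_k$ is $d$-regular, since lifts preserve degrees. Hence $\Delta=d$ throughout, and $\norm{A_{s^{(k)}}}<2\sqrt2\,\Cstar\sqrt d$ by Corollary~\ref{cor:applications}(ii). Lifts of bipartite graphs are bipartite: put sheet copies of each class into the same class. In that case $\norm{A_{s^{(k)}}}<2\Cstar\sqrt d$. Induction on $k$ then gives the bound on $\lambda(G_k)$.

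The inductive step needs the bookkeeping of the excluded eigenvalues. The trivial eigenvalue $d$ of $G_{k+1}$ (and $-d$ in the bipartite case) is inherited from the copy $A(G_k)$ in the direct sum. All other eigenvalues of $G_{k+1}$ are either nontrivial eigenvalues of $G_k$, with absolute value at most $\lambda(G_k)$, or new eigenvalues, which the signing bound controls. The only subtlety is disconnected or degenerate $G_0$, where $d$ has multiplicity greater than one. Since $\lambda$ removes only one copy, extra copies of $\pm d$ simply remain in $\lambda(G_0)$, and the bound still reads $\max\{\lambda(G_0),\cdot\}$. The convention that $\lambda=0$ when no eigenvalues remain covers tiny cases.

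The running time is polynomial in the output size. Step $k$ runs the polynomial-time algorithm on a graph with $n_02^k$ vertices and $dn_02^{k-1}$ edges, and the sizes are geometric, so the total is dominated by the last step.

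For connectivity, if $G_k$ is connected and $\lambda(G_k)<d$, then the multiplicity of $d$ in $G_{k+1}$ is one. This uses the fact that new eigenvalues satisfy $|\mu|<2\sqrt2\,\Cstar\sqrt d$, which is below $d$ once $d>8\Cstar^2\approx 90.7$, that is, for $d\ge91$. In the bipartite case the threshold is $d>4\Cstar^2\approx45.4$, that is, $d\ge46$. A $d$-regular graph whose eigenvalue $d$ is simple is connected. This gives connectivity of all $G_k$ and the uniform gap
\[
d-\max\{\lambda(G_0),2\sqrt2\,\Cstar\sqrt d\}>0,
\]
using $\lambda(G_0)<d$ for connected non-bipartite $G_0$, or after excluding $-d$ in the bipartite case. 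One point must be checked: for non-bipartite connected $G_0$ we need $-d$ not to be an eigenvalue, and a new $-d$ cannot appear because $|\mu|<d$.

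The main obstacle is verifying the numerical thresholds $8\Cstar^2<91$ and $4\Cstar^2<46$ with $\Cstar=3.367912113$. This is a routine arithmetic check, and everything else follows directly from the cited results.
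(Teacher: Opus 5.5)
Your proposal is correct and follows the paper's proof: you iterate the deterministic signing of Corollary~\ref{cor:applications}(ii) and use the decomposition $A(G_{k+1})\cong A(G_k)\oplus A_{s^{(k)}}$ to bound $\lambda(G_k)$ by induction. Connectivity is preserved because $8\Cstar^2<91$ and $4\Cstar^2<46$ keep every new eigenvalue strictly below $d$, and the polynomial costs sum over geometrically growing sizes. Your extra bookkeeping (multiplicities of $\pm d$, bipartiteness of lifts, no new $-d$ in the stated range) spells out what the paper's three-line proof leaves implicit.
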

\begin{proof}
The spectral decomposition above proves the bound inductively. The bound on
new eigenvalues is below $d$ in the stated ranges because $8\Cstar^2<91$ and $4\Cstar^2<46$. No new eigenvalue $d$ can then appear, so connectedness is preserved. The sum of polynomial costs over the successive sizes is polynomial in $n_02^k$.
\end{proof}

Corollary~\ref{cor:rep-lifts} lowers the degree thresholds to $d\ge7$ and,
for bipartite graphs, $d\ge3$, using the randomized algorithm of
Section~\ref{sec:repair}.

\section{Randomized repair at the Bilu--Linial scale}\label{sec:repair}

The graph bounds in the preceding section follow from the general
matrix-discrepancy theorem, Theorem~\ref{thm:main}. When we need to control
only the signed adjacency matrix, we can use the graph structure to obtain
smaller constants.  We give a Las Vegas
algorithm which, for every graph of maximum degree at most $\Delta$, finds a
signing with
\[
        \lVert A_s\rVert<2\sqrt{2(\Delta-1)},
\]
and, for every bipartite graph, reaches the sharp universal threshold
\[
        \lVert A_s\rVert<2\sqrt{\Delta-1}.
\]
The general result is within a factor $\sqrt2$ of the Bilu--Linial conjecture;
the bipartite result reaches the conjectured bound. These smaller constants
come with a narrower scope. The algorithm is randomized and controls only the
adjacency matrix. It does not allow weights or fractional starts, or give
simultaneous bounds on signed degrees.

The algorithm inserts vertices one at a time while maintaining positive
definite matrices that encode the spectral constraints. A proposed insertion
must satisfy these constraints; when it does, a ratio of determinants gives
its acceptance probability. After a rejection, the algorithm chooses a
neighbor for deletion using diagonal entries of the inverse constraint
matrices. It then restores that neighbor recursively and retries the original
insertion. We call a finite record of these insertions and deletions a
\emph{repair history}. Determinant identities bound the probability of each
history. Counting trees of bounded depth then shows that the probability of
long histories decays exponentially. The bound for a single deletion gives
both spectral bounds and identifies the factor of two that causes the
$\sqrt2$ gap in the general case.

Throughout this section, $G=(V,E)$ is a finite simple graph with
$|V|=n\ge 1$. The integer $\Delta\ge 3$ bounds its maximum degree. Taking
$\Delta=3$ also covers graphs of maximum degree at most two. For $K\subseteq V$ and a
signing $s$ of the edges of $G[K]$, write $A_K=A_K(s)$ for the signed adjacency
matrix of $G[K]$. Empty determinants equal one. For a radius $r>0$ put
\begin{equation}\label{eq:rep-Q}
Q_K^{\pm}=I\pm A_K/r,\qquad N_K=Q_K^{+}Q_K^{-}=I-A_K^{2}/r^{2}.
\end{equation}
The algorithm runs in one of two modes, indexed by $\nu\in\{1,2\}$.
\begin{itemize}
\item In the \emph{one-sided mode} ($\nu=1$), $W_K=\det Q_K^{-}$, and the
state is \emph{good} when $Q_K^{-}\succ 0$, that is, when
$\lambda_{\max}(A_K)<r$.
\item In the \emph{two-sided mode} ($\nu=2$), $W_K=\det N_K
=\det Q_K^{+}\det Q_K^{-}$, and the state is good when $N_K\succ 0$. Since
$Q_K^{\pm}$ commute and $N_K$ has eigenvalues $(1-\lambda/r)(1+\lambda/r)$
for the eigenvalues $\lambda$ of $A_K$, this holds exactly when both
$Q_K^{\pm}\succ 0$, that is, when $\lVert A_K\rVert<r$.
\end{itemize}
A good state admits linearly independent unit vectors whose Gram matrix is
$Q_K^{-}$ (and, in the two-sided mode, a second such family for $Q_K^{+}$).
Adjacent vertices then have inner products $\mp s_{uw}/r$ and distinct
nonadjacent vertices are orthogonal. The determinants of these Gram matrices are the squared volumes of the
corresponding families of vectors. This interpretation motivates the
acceptance and deletion rules below; the proofs use the determinants
directly.

\begin{theorem}[Main graph theorem: randomized repair]\label{thm:repair}
Let $r_*=2\sqrt{\Delta-1}$ in the one-sided mode and
$r_*=2\sqrt{2(\Delta-1)}$ in the two-sided mode. Algorithm~\ref{alg:repair},
run with $r=r_*$, terminates almost surely and returns a signing $s$ of $E$
with
\begin{equation}\label{eq:rep-bounds}
\lambda_{\max}(A_s)<2\sqrt{\Delta-1}\quad(\nu=1),\qquad
\lVert A_s\rVert<2\sqrt{2(\Delta-1)}\quad(\nu=2).
\end{equation}
If $G$ is bipartite, the one-sided output also satisfies
$\lVert A_s\rVert<2\sqrt{\Delta-1}$. In either mode, the number $\mathsf T$ of
insertion attempts satisfies, for every integer $t\ge n$,
\begin{equation}\label{eq:rep-tail}
\Pr(\mathsf T>t)\le \min\left\{1,\;5^{n}\Bigl(1+\frac{1}{16n^{2}}\Bigr)^{-(t-n)/2}\right\},
\qquad \mathbb E\,\mathsf T<100\,n^{3}.
\end{equation}
If instead $r=(1+\epsilon)r_*$ with $0<\epsilon\le 1$, then
$\Pr(\mathsf T>t)\le \min\{1,\;4^{n}(1+\epsilon)^{-(t-n)}\}$ and
$\mathbb E\,\mathsf T=O(n/\epsilon)$. Each attempt can be carried out in exact
arithmetic with polynomial expected bit complexity: over $\mathbb Q$ in the
two-sided mode and for bipartite graphs in the one-sided mode, and over
$\mathbb Q(\sqrt{\Delta-1})$ in the one-sided mode in general.
\end{theorem}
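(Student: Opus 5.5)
The plan has three parts: determinant identities, a one-step potential bound, and a counting argument over repair histories. Since Algorithm~\ref{alg:repair} is not reproduced above, I take it to be the determinant-driven insertion/deletion procedure described in the preamble of this section.

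\textbf{Determinant identities for insertion.} Let $K$ be good and let $v\notin K$ have neighbours $N(v)\cap K$ with proposed signs $\sigma$. The Schur complement gives
\[
W_{K+v}/W_K=\det\bigl(1-b^*Q_K^{-1}b\bigr)
\]
in the one-sided mode, where $b=\sigma/r$ is supported on $N(v)\cap K$. In the two-sided mode it is the product of this quantity and the analogous one for $Q_K^+$. The sign vector is drawn uniformly. Averaging over $\sigma$ removes the cross terms, so
\[
\E_\sigma\, b^*Q_K^{-1}b=r^{-2}\sum_{u\in N(v)\cap K}(Q_K^{-1})_{uu}.
\]
I would therefore define the acceptance probability of a proposed insertion as the determinant ratio, or a normalized version of it, conditioned on the state staying good. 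After a rejection, the deleted neighbour $u$ is chosen with probability proportional to $(Q_K^{-1})_{uu}-1$, which equals $W_{K-u}/W_K-1$ by the same Schur identity applied to deletion. In the two-sided mode the weight is the corresponding combination for both factors.

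\textbf{One-step bound from the MSS/Bilu--Linial computation.} The radius enters through the following claim. At radius $r\ge 2\sqrt{\Delta-1}$, one step of the chain cannot increase $\E\log W$, or equivalently $\E W^{-1}$ times a suitable vertex-count weight, beyond a factor that the deletions pay for. The argument is barrier-style. Each vertex has at most $\Delta-1$ earlier neighbours once the root neighbour is accounted for, and the diagonal entries of $Q^{-1}$ are bounded by the Green's function of the $(\Delta-1)$-ary tree. That Green's function is finite exactly at $r=2\sqrt{\Delta-1}$, with value $2$.

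In the two-sided mode, the deletion weight picks up contributions from both $Q^+$ and $Q^-$. This doubles the effective branching, which forces $r^2=8(\Delta-1)$; this is the source of the factor of two mentioned in the statement.

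For bipartite graphs, $A_K$ is conjugate to $-A_K$ by the diagonal $\pm1$ matrix on the two classes. Hence $\lambda_{\min}=-\lambda_{\max}$, and the one-sided output already bounds $\|A_s\|$.

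\textbf{Termination and the tail bound.} The probability of any fixed repair history is a product of acceptance and deletion probabilities. By the determinant identities this product telescopes into ratios of $W$'s, and since $0<W\le1$ on good states it is bounded by a geometric factor $(1+1/(16n^2))^{-(\text{excess steps})/2}$. At the critical radius I expect this factor to come from a quantitative gap of order $1/n^2$ in the resolvent bound, using that $Q_K^{-1}$ has norm at most polynomial in $n$ on a finite graph.

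The number of histories with $t$ insertion attempts is counted by encoding each history as a forest of bounded-depth deletion and restoration trees. This should give the $5^n$ prefactor, or $4^n$ when $r=(1+\epsilon)r_*$, where the gap becomes $(1+\epsilon)^{-1}$ per extra step. A union bound then gives~\eqref{eq:rep-tail}, and summing the tail gives $\E\mathsf T<100n^3$ and $O(n/\epsilon)$ respectively. For bit complexity, all quantities are rational functions of entries of $A_K/r$. In the two-sided mode and in the bipartite one-sided mode only $r^2$ appears, which is rational; in the general one-sided mode the field must be $\mathbb Q(\sqrt{\Delta-1})$. Expected bit lengths then follow from the bounded expected number of steps.

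\textbf{Main obstacle.} The hardest part will be the exact sharp-radius potential inequality: proving that the expected deletion weight is at most $1$ exactly at $r_*$, rather than only for $r>r_*$, with an explicit $1/(16n^2)$ slack. I would first attempt this through the tree-resolvent comparison and interlacing on the diagonal of $Q_K^{-1}$.
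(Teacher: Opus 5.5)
You have several of the right ingredients: the Schur identities, the sign average $\E\,c^*(Q_K^{\pm})^{-1}c=r^{-2}\sum_{w}((Q_K^{\pm})^{-1})_{ww}$, the forest encoding of histories, the bipartite symmetry, and the arithmetic fields. But the step that turns them into the tail bound has a genuine gap. The number of repair histories with $f$ failures in general grows exponentially in $f$; the labelled forests grow at a rate close to $4(\Delta-1)$. A per-history bound that is geometric with ratio $(1+1/(16n^2))^{-1/2}$ therefore cannot be union-bounded into a prefactor $5^n$ independent of $t$.

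What is needed is $\Pr(h)\le\pi^{f(h)}$ with $\pi=\nu/r^2$, combined with the count through a generating function. If $a_f$ counts forests with $f$ failures, then $\Pr(\mathsf T>t)\le\sum_{f\ge(t-n)/2}a_f\pi^f\le F_n(z)(\pi/z)^{(t-n)/2}$ for $z>\pi$. The paper gets $\Pr(h)\le\pi^{f(h)}$ from an exact cancellation (Lemma~\ref{lem:rep-charge}):
\begin{enumerate}
\item $\E(1-\alpha)\le r^{-2}\sum_w\lambda_w$;
\item the deleted neighbour is chosen with probability exactly $\lambda_u/\sum_w\lambda_w$, where $\lambda_u=((Q_K^-)^{-1})_{uu}$ (plus the $Q_K^+$ entry when $\nu=2$);
\item $W_K\lambda_u=W_{K-u}$, or $W_K\lambda_u=W_{K-u}(\delta_u^++\delta_u^-)\le2W_{K-u}$ in the two-sided mode.
\end{enumerate}
Hence $W_K\Pr(\text{reject and delete }u)\le\pi W_{K-u}$, however large $\lambda_u$ is. This is propagated along histories by the weight $\Phi_K(s)=2^{-|E(G[K])|}W_K(s)\mathbf 1\{K\text{ good}\}$. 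That weight sums out the erased signs of $u$ without a $2^{\deg u}$ loss, and it has total mass at most $1$ by Hadamard. Your ``the product telescopes'' does not handle the random signs at all.

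Your deletion rule $\lambda_u-1$ destroys the cancellation. It is undefined when all active neighbours of $v$ are isolated in $G[K]$, e.g.\ when $v_2$ is inserted next to $v_1$: every weight vanishes, yet rejection has positive probability. In other states the one-step charge can exceed $\pi W_{K-u}$ by a factor of order $\Delta$.

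Your barrier step is unsupported, and more to the point it is unnecessary. Goodness gives only $\lambda_{\max}(A_K)<r$. For a unit top eigenvector $\phi_1$, $((Q_K^-)^{-1})_{uu}\ge|\phi_1(u)|^2/(1-\lambda_{\max}(A_K)/r)$. A tree comparison of the kind you have in mind controls the matching polynomial, which is the \emph{average} characteristic polynomial over signings, not the resolvent of a fixed $A_K$. The proof uses no bound on $\norm{(Q_K^-)^{-1}}$ at all.

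The tree quantity with value $2$ at $(\Delta-1)z=1/4$ that you found belongs to the counting step. Repair forests have at most $\Delta$ child labels at a root and $\Delta-1$ at a nonroot, since the parent is inactive. Their height is at most $n$, since stack labels are distinct. They are dominated by $F_n(z)=(1-\Delta zY_n(z))^{-n}$ with $Y_{h+1}=1/(1-(\Delta-1)zY_h)$. The radius enters only through $\pi\le1/(4(\Delta-1))$.

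Likewise, the $1/(16n^2)$ slack at the exact threshold comes from the bounded height, not from a spectral gap or a polynomial bound on $\norm{Q_K^{-1}}$. Take $\theta=\arctan(1/(4n))$ and $z=\pi(1+\tan^2\theta)$. Then $Y_h(z)=2\cos\theta\,\sin((h+1)\theta)/\sin((h+2)\theta)$, so $Y_n(z)<2$, $F_n(z)<5^n$, and $\pi/z=(1+1/(16n^2))^{-1}$. In the slack case, $Y_h\le2$ and $F_n\le4^n$ at $z=1/(4(\Delta-1))$.

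Finally, almost-sure termination requires bounding unfinished executions. The paper does this by padding each prefix into a formal completed forest with the same number of failures, rather than presupposing termination.
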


The two-sided bound improves the coefficient $2\sqrt 2\,C_*<9.53$ of
Corollary~\ref{cor:applications}(ii) to
$2\sqrt2\,\sqrt{(\Delta-1)/\Delta}<2.83$ for the adjacency matrix alone.
Bilu and Linial~\cite{BL} conjectured that every $\Delta$-regular graph
has a signing with $\lVert A_s\rVert\le 2\sqrt{\Delta-1}$; our two-sided mode
misses this by the factor $\sqrt2$.  Lemma~\ref{lem:rep-charge} explains this loss. When the determinant weight constrains both ends of the spectrum, the
deletion bound contains an extra factor of two.

\paragraph{Relation to MSS and Cohen.}
Marcus, Spielman, and Srivastava~\cite{MSS1} proved the stronger one-sided
existence statement
\[
        \lambda_{\max}(A_s)\le \rho(\mu_G),
\]
where $\rho(\mu_G)$ is the largest root of the matching polynomial.  The
Heilmann--Lieb bound gives $\rho(\mu_G)<2\sqrt{\Delta-1}$, and for bipartite
$G$ spectral symmetry turns this into the corresponding two-sided bound.
Our bipartite theorem addresses the associated \emph{prescribed-graph signing
problem}: the graph $G$ is part of the input, and the algorithm chooses signs
on the existing edge set $E(G)$.

Cohen's deterministic construction~\cite{Cohen} solves a different input
problem. Given a size and degree, it uses an interlacing family to select
complete bipartite matchings whose union is a Ramanujan multigraph. It does
not sign the existing edges of an arbitrary prescribed bipartite graph.
Cohen also distinguishes this construction from the signing-based
interlacing family of~\cite{MSS1}. The bipartite part of
Theorem~\ref{thm:repair} instead takes the graph as input and signs its edges.
The algorithm is randomized, but its decisions are exact and, in the
bipartite case, rational. It does not use interlacing polynomials or evaluate
expected characteristic polynomials. For a prescribed $d$-regular bipartite
Ramanujan graph, its output specifies a Ramanujan $2$-lift of that base.
This comparison concerns the input problems and proof methods.

The following proposition records that the universal constant in the
one-sided, and hence bipartite, statement cannot be improved.

\begin{proposition}[Sharpness of the universal one-sided bound]\label{prop:rep-sharp}
Let $\Delta\ge3$ and $c<2\sqrt{\Delta-1}$. There is a connected bipartite
$\Delta$-regular graph none of whose signings satisfies $\lVert A_s\rVert\le c$,
and a connected $\Delta$-regular graph none of whose signings satisfies
$\lambda_{\max}(A_s)\le c$.
\end{proposition}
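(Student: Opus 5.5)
The plan is to use Alon--Boppana-type lower bounds applied to $2$-lifts. The point is that a signing $s$ of $G$ corresponds to a $2$-lift $\widehat G$ whose adjacency spectrum is the union of the spectra of $A$ and $A_s$. So it is enough to find a base graph $G$ all of whose $2$-lifts have a new eigenvalue exceeding $c$. Fix $\varepsilon=2\sqrt{\Delta-1}-c>0$.

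\textbf{Step 1 (bipartite statement).} Take $G$ connected, bipartite and $\Delta$-regular, and assume every vertex lies in many disjoint cycles in a controlled way; large girth is the simplest choice. We then want a quantitative Alon--Boppana bound for the signed adjacency matrix itself, not only for the lift. The plan is a test-vector argument carried out directly on $A_s$. Choose a vertex $v$ and a radius $k$ with $2k+1$ less than the girth of $G$. The ball $B_k(v)$ is then a tree. On a tree every signing is switching-equivalent to the all-positive signing: conjugate by a diagonal $\pm1$ matrix built along root paths. Apply this to the standard Alon--Boppana test function $f(u)=(\Delta-1)^{-\mathrm{dist}(u,v)/2}$, truncated at radius $k$, with the signs adjusted by the switching. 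This $f$ is supported on the tree ball, and $A_s$ restricted to the ball agrees, up to the switching, with the tree adjacency. Hence the Rayleigh quotient satisfies $\langle f,A_sf\rangle/\langle f,f\rangle\ge 2\sqrt{\Delta-1}(1-O(1/k))$. Choosing $k$ large makes this exceed $c$, so $\lambda_{\max}(A_s)>c$ and therefore $\|A_s\|>c$. Finally we need a connected bipartite $\Delta$-regular graph of girth exceeding $2k+1$. Such graphs exist by classical constructions (Erd\H{o}s--Sachs, or the bipartite double cover of any high-girth $\Delta$-regular graph, which has at least the same girth), and connectedness can be arranged by taking a component.

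\textbf{Step 2 (one-sided statement for general graphs).} The same graph already works. The Rayleigh-quotient argument in Step 1 lower-bounds $\lambda_{\max}(A_s)$ for every signing, so this second graph can be the same connected bipartite high-girth $\Delta$-regular graph. Bipartiteness is not needed for this part. The bipartite case gives the norm statement automatically, and the one-sided statement follows from the same estimate.

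\textbf{Main obstacle.} The main difficulty is the quantitative tree estimate. One has to check that the truncated test function gives Rayleigh quotient at least $2\sqrt{\Delta-1}\,(1-\pi^2/(k+1)^2)$, or a similar bound, uniformly in the signing. For this I would use the standard computation on the $\Delta$-regular tree: the path from the root behaves like a weighted path graph, and one takes $f$ proportional to $(\Delta-1)^{-j/2}\sin(\pi(j+1)/(k+2))$ at depth $j$, which gives the Chebyshev-type eigenvalue of the path. The switching step must be justified carefully. Because the ball is a tree and the test function vanishes outside it, only edges inside the ball contribute, together with edges from the ball to outside vertices where $f=0$. Any edge between two vertices of the ball that is not a tree edge would create a cycle of length at most $2k+1$, which the girth assumption rules out. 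The sign pattern can therefore be absorbed completely.
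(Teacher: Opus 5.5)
Your proof is correct, and it takes a genuinely different route from the paper's.

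\textbf{The paper's route.} The paper argues by contradiction through iterated lifts. Suppose every connected bipartite $\Delta$-regular graph had a signing with $\|A_s\|\le c$. Then one could lift $K_{\Delta,\Delta}$ repeatedly while keeping every nontrivial eigenvalue in $[-c,c]$; connectivity is preserved since $c<\Delta$. The diameter then grows, which contradicts Nilli's form of Alon--Boppana. The paper then runs a second tower from $K_{\Delta+1}$ for the one-sided claim. This reuses the $2$-lift spectral decomposition and needs only the \emph{unsigned} Alon--Boppana bound, which requires large diameter rather than large girth. It therefore needs no existence theorem for high-girth graphs. The price is that the witness is an unspecified member of a hypothetical tower.

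\textbf{Your route.} You prove a \emph{signed} Alon--Boppana bound directly. Girth $>2k+1$ makes $B_k(v)$ an induced copy of the $\Delta$-regular tree ball, and switching removes the signs there. The radial test vector with weights $(\Delta-1)^{-j/2}\sin(\pi(j+1)/(k+2))$ then gives
\[
\lambda_{\max}(A_s)\ge2\sqrt{\Delta-1}\cos\bigl(\pi/(k+2)\bigr)
\]
for every signing simultaneously. In the computation the root layer contributes $\tfrac{\Delta-1}{\Delta}\psi_0^2\le\psi_0^2$ to the denominator, so it only helps. This gives an explicit, quantitative witness: any connected bipartite $\Delta$-regular graph of girth exceeding $2k+1$. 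The cost is an appeal to Erd\H{o}s--Sachs together with the bipartite double cover, whose girth is at least that of the base.

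\textbf{Step 2.} Your second step is also right, and simpler than the paper's. Bipartite signed spectra are symmetric, so $\lambda_{\max}(A_s)=\|A_s\|$. The bipartite witness therefore already proves the one-sided assertion as stated, and the separate $K_{\Delta+1}$ tower is not needed.
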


\begin{proof}
We may assume $c\ge0$. Suppose that every connected bipartite $\Delta$-regular
graph has a signing with $\lVert A_s\rVert\le c$. Let $G_0=K_{\Delta,\Delta}$,
and let $G_{k+1}$ be the $2$-lift of $G_k$ defined by such a signing, as in
Section~\ref{sec:lifts}. The spectrum of $G_{k+1}$ is that of $G_k$ together
with that of $A_s$, and a lift of a bipartite graph is bipartite. Since the
eigenvalues of $K_{\Delta,\Delta}$ other than $\pm\Delta$ vanish, every
eigenvalue of $G_k$ other than one copy each of $\pm\Delta$ lies in $[-c,c]$.
As $c<\Delta$, the eigenvalue $\Delta$ is simple, so $G_k$ is connected and the
induction may continue. The graph $G_k$ has $\Delta2^{k+1}$ vertices and degree
$\Delta$, so its diameter tends to infinity with $k$. By the Alon--Boppana
bound in the form of Nilli~\cite{Nilli}, its second largest eigenvalue is at
least $2\sqrt{\Delta-1}-o(1)$, contradicting the bound $c$.

For the one-sided assertion, start instead from $G_0=K_{\Delta+1}$, whose
eigenvalues other than $\Delta$ equal $-1$, and lift by signings with
$\lambda_{\max}(A_s)\le c$. Every eigenvalue of $G_k$ other than one copy of
$\Delta$ is then at most $c<\Delta$, so $G_k$ is connected, and the same bound
gives the contradiction.
\end{proof}

This sharpness is asymptotic: every finite graph satisfies the strict
inequalities of Theorem~\ref{thm:repair}. Some graphs admit much smaller
signings: the bound $\rho(\mu_G)$ of~\cite{MSS1}, for example, can lie well
below $2\sqrt{\Delta-1}$. Whether the repair mechanism can reach this
graph-specific bound is discussed in Section~\ref{sec:discussion}.

\subsection{Gram extensions and leave-one-out residuals}

To prove the algorithmic bound, we begin with two Schur-complement
identities. The first gives the
weight of an insertion; the second gives the weight of a deletion.

\begin{lemma}[Extension and deletion]\label{lem:rep-schur}
Let $K$ be good, let $v\notin K$, and let $c\in\mathbb R^{K}$ have
$c_u=s_{uv}/r$ for $u\in N(v)\cap K$ and $c_u=0$ otherwise. Put
$q_{\pm}=c^{*}(Q_K^{\pm})^{-1}c$.
\begin{enumerate}
\item[(i)] $\det Q_{K+v}^{\pm}=\det Q_K^{\pm}\,(1-q_{\pm})$. Consequently
$K+v$ is good exactly when $q_-<1$ (and, in the two-sided mode, also
$q_+<1$), and then $W_{K+v}=\alpha\,W_K$, where $\alpha=1-q_-$ in the
one-sided mode and $\alpha=(1-q_-)(1-q_+)$ in the two-sided mode.
\item[(ii)] For $u\in K$ put
$\delta_u^{\pm}=1/\bigl((Q_K^{\pm})^{-1}\bigr)_{uu}$. Then
$0<\delta_u^{\pm}\le 1$ and $\det Q_K^{\pm}=\det Q_{K-u}^{\pm}\,\delta_u^{\pm}$.
\end{enumerate}
\end{lemma}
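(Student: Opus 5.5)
Both parts are Schur-complement computations on the bordered matrices $Q_{K+v}^{\pm}$ and $Q_K^{\pm}$. I would first fix the block form. Order $K+v$ with $v$ last. The signed adjacency of $G[K+v]$ is
\[
A_{K+v}=\begin{pmatrix}A_K & r c\\ r c^* & 0\end{pmatrix},
\]
since the entries in row $v$ are $s_{uv}$ for $u\in N(v)\cap K$. Hence
\[
Q_{K+v}^{\pm}=\begin{pmatrix}Q_K^{\pm} & \pm c\\ \pm c^* & 1\end{pmatrix}.
\]
Because $K$ is good, $Q_K^{-}\succ0$, and in the two-sided mode also $Q_K^{+}\succ0$. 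In the one-sided mode $Q_K^{+}$ may be singular, but only the minus identity is used there; where $Q_K^{+}$ is invertible, the same computation applies to it.

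For part (i), the Schur complement of the invertible block $Q_K^{\pm}$ is $1-(\pm c)^*(Q_K^{\pm})^{-1}(\pm c)=1-q_\pm$, because the two signs multiply to $+$. The block determinant formula then gives $\det Q_{K+v}^{\pm}=\det Q_K^{\pm}(1-q_\pm)$. For the goodness claim, recall that a symmetric block matrix whose leading block is positive definite is positive definite exactly when its Schur complement is positive. So $Q_{K+v}^{-}\succ0$ iff $q_-<1$, and in the two-sided mode both matrices are positive definite iff $q_+<1$ as well. This is the stated characterization, since the preamble already identified goodness with positivity of $Q^{-}$ (one-sided) or of both $Q^{\pm}$ (two-sided). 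Finally, $W_{K+v}=\alpha W_K$ follows by multiplying the determinant identities: one factor in the one-sided mode, the product of both in the two-sided mode, where $W=\det Q^+\det Q^-$.

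For part (ii), fix $u\in K$ and reorder so that $u$ is last. Write $Q_K^{\pm}$ in blocks with leading block $Q_{K-u}^{\pm}$, which is the principal submatrix obtained by deleting the $u$ row and column. The diagonal entry of the inverse is $((Q_K^{\pm})^{-1})_{uu}=1/\delta$, where $\delta$ is the Schur complement of $Q_{K-u}^{\pm}$ in $Q_K^{\pm}$; equivalently, by Cramer's rule, $((Q_K^{\pm})^{-1})_{uu}=\det Q_{K-u}^{\pm}/\det Q_K^{\pm}$. This gives $\det Q_K^{\pm}=\det Q_{K-u}^{\pm}\,\delta_u^{\pm}$ directly. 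For the bounds, $Q_K^{\pm}\succ0$ makes its principal submatrix $Q_{K-u}^{\pm}$ positive definite, so $\delta_u^{\pm}>0$. The Schur complement is $1-b^*(Q_{K-u}^{\pm})^{-1}b$, where the diagonal entry $1$ comes from the zero diagonal of $A_K$ and $b$ is the off-diagonal column. The subtracted term is nonnegative, so $\delta_u^{\pm}\le1$.

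No step is a genuine obstacle. The only point needing care is positive definiteness of $Q_K^{+}$ in part (ii) in the one-sided mode, where goodness gives only $Q_K^{-}\succ0$. There I would state (ii) for the sign or signs that the mode makes positive definite, namely $Q^-$, and in the two-sided mode both. This matches how $W_K$ uses them, and it is the reading the later deletion bounds require.
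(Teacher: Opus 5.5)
Your proposal is correct and follows the paper's proof: a Schur complement with $v$ ordered last gives (i), and a Schur complement with $u$ ordered last (equivalently, the cofactor formula for $((Q_K^{\pm})^{-1})_{uu}$) gives (ii). You also point out that in the one-sided mode only $Q_K^{-}$ is guaranteed positive definite, so the $+$ versions need $Q_K^{+}$ to be invertible or positive definite; the paper's proof leaves this implicit, your caveat is a valid refinement, and it matches how Lemma~\ref{lem:rep-charge} uses the identities.
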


\begin{proof}
Order $v$ last. The new column of $Q_{K+v}^{\pm}$ is $\pm c$ with unit diagonal
entry, so its Schur complement is $1-c^{*}(Q_K^{\pm})^{-1}c=1-q_{\pm}$. This proves the determinant identity. A bordered matrix with a positive
definite leading block is positive definite exactly when its Schur complement
is positive. For (ii), order $u$ last and write
$Q_K^{\pm}=\bigl(\begin{smallmatrix}B&a\\a^{*}&1\end{smallmatrix}\bigr)$. Block
elimination gives $\det Q_K^{\pm}=\det B\,(1-a^{*}B^{-1}a)$ and
$((Q_K^{\pm})^{-1})_{uu}=(1-a^{*}B^{-1}a)^{-1}$. Since $B\succ0$, the Schur
complement lies in $(0,1]$. Geometrically, $\delta_u^{\pm}$ is the squared
distance from the unit vector of $u$ to the span of the others. We call this
squared distance the \emph{leave-one-out residual}.
\end{proof}

\subsection{The algorithm}

We use the insertion determinant ratio for acceptance and the inverse
diagonal entries for deletion.

\begin{center}
\fbox{\begin{minipage}{0.94\linewidth}
\refstepcounter{theorem}\label{alg:repair}\textbf{Algorithm~\thetheorem} (Randomized repair).\\[2pt]
\textit{Input:} a graph $G$, a vertex order $(v_1,\dots,v_n)$, a mode
$\nu\in\{1,2\}$, and a radius $r$.\\
\textit{Output:} a signing $s$ of $E$ with $G$ good at radius $r$.\\[2pt]
Start with $K=\emptyset$ and no signs. For $i=1,\dots,n$, call
$\textsc{Fill}(v_i)$.\\[2pt]
$\textsc{Fill}(v)$: repeat the following.
\begin{enumerate}
\item Draw independent uniform signs $s_{uv}$ for $u\in N(v)\cap K$. If $K+v$
is good, set $\alpha=W_{K+v}/W_K$; otherwise set $\alpha=0$.
\item With probability $\alpha$, add $v$ to $K$ with these signs and return.
\item Otherwise discard the proposed signs. Choose $u\in N(v)\cap K$ with
probability $\lambda_u/\sum_{w\in N(v)\cap K}\lambda_w$, where
\[
\lambda_u=\bigl((Q_K^{-})^{-1}\bigr)_{uu}\ \ (\nu=1),\qquad
\lambda_u=\bigl((Q_K^{-})^{-1}\bigr)_{uu}+\bigl((Q_K^{+})^{-1}\bigr)_{uu}
=2\,(N_K^{-1})_{uu}\ \ (\nu=2).
\]
Remove $u$ from $K$ together with all its signs, call $\textsc{Fill}(u)$,
and retry $v$.
\end{enumerate}
\end{minipage}}
\end{center}

If $v$ has no active neighbour, then $c=0$, $\alpha=1$, and the insertion
succeeds; a rejection therefore always has a neighbour to delete. The identity
$(Q^{-})^{-1}+(Q^{+})^{-1}=2(I-A^{2}/r^{2})^{-1}$ in step~3 holds because the
two factors commute. The deletion choice uses fresh randomness and, given the
current state, is independent of the discarded proposal.

At every finite time, the active set is disjoint from the labels on the
recursion stack. Each child is chosen from the active vertices and removed
before its call begins. Stack labels are therefore distinct, and the depth
is at most $n$. Induction on the finite call tree shows what any returning
call leaves active: its own vertex and exactly those vertices that were
active at its entry. The signs may differ. This argument does not assume that
other calls terminate. Every outer call therefore begins at an inactive
vertex. A completed execution makes exactly $n$ outer calls and ends with
$K=V$ good. In the one-sided mode this gives $\lambda_{\max}(A_s)<r$; in the
two-sided mode it gives $\lVert A_s\rVert<r$. For bipartite $G$, consider the diagonal matrix equal to $+1$ and $-1$ on
the two classes. Conjugation by this matrix maps $A_s$ to $-A_s$. The signed adjacency spectrum is therefore symmetric, and
$\lVert A_s\rVert=\lambda_{\max}(A_s)$.
This proves~\eqref{eq:rep-bounds} for every completed execution; it remains to
bound the number of attempts.

\subsection{The deletion bound}

To bound the number of attempts, we first analyze a single deletion.
The next lemma compares a rejection-and-deletion probability with the
determinant weights before and after the deletion. Write $\pi=\nu/r^{2}$
for the \emph{deletion factor} in this comparison. At the thresholds of
Theorem~\ref{thm:repair},
$\pi=1/(4(\Delta-1))$ in both modes.

\begin{lemma}[Deletion bound]\label{lem:rep-charge}
Let $K$ be good and let $v\notin K$ be the vertex being filled. For every
$u\in N(v)\cap K$,
\begin{equation}\label{eq:rep-charge}
W_K\,\Pr(\text{the next attempt rejects and deletes }u\mid K,s)
\le \frac{\lambda_u W_K}{r^{2}}\le \pi\,W_{K-u}.
\end{equation}
\end{lemma}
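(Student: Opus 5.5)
The plan is to prove the two inequalities in \eqref{eq:rep-charge} separately. Both follow from Lemma~\ref{lem:rep-schur} and the fact that the proposed signs are independent and uniform. Throughout, $N=N(v)\cap K$. As noted after Algorithm~\ref{alg:repair}, the deletion choice is made with fresh randomness and, given $K$ and $s$, is independent of the discarded proposal. Hence
\[
\Pr(\text{reject and delete }u\mid K,s)=\Pr(\text{reject}\mid K,s)\cdot\frac{\lambda_u}{\sum_{w\in N}\lambda_w}.
\]
The first inequality therefore reduces to showing $\Pr(\text{reject}\mid K,s)\le r^{-2}\sum_{w\in N}\lambda_w$.

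To bound the rejection probability, note that it equals $1-\E\alpha$, where the expectation is over the fresh signs $s_{wv}$, $w\in N$, and $\alpha=0$ when $K+v$ is not good. I will show pointwise that $1-\alpha\le q_-$ in the one-sided mode and $1-\alpha\le q_-+q_+$ in the two-sided mode.
\begin{itemize}
\item If $K+v$ is good, Lemma~\ref{lem:rep-schur}(i) gives $1-\alpha=q_-$ in the one-sided mode. In the two-sided mode it gives $1-\alpha=q_-+q_+-q_-q_+\le q_-+q_+$, since $q_\pm\ge0$ because $Q_K^{\pm}\succ0$.
\item If $K+v$ is not good, then some $q_\pm\ge1$ (only $q_-$ in the one-sided mode), so $1-\alpha=1$ is again bounded by the same sum.
\end{itemize}
Taking expectations, the signs are independent with mean zero, so $\E cc^*=r^{-2}\,\mathrm{diag}(\mathbf 1_N)$. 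Hence
\[
\E q_\pm=\Tr\bigl((Q_K^{\pm})^{-1}\E cc^*\bigr)=r^{-2}\sum_{w\in N}\bigl((Q_K^{\pm})^{-1}\bigr)_{ww}.
\]
In either mode the resulting sum is exactly $r^{-2}\sum_{w\in N}\lambda_w$. This proves the first inequality.

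For the second inequality I use Lemma~\ref{lem:rep-schur}(ii) directly. In the one-sided mode $\lambda_u=1/\delta_u^-$ and $W_K=W_{K-u}\delta_u^-$, so $\lambda_uW_K=W_{K-u}$. Since $\pi=1/r^2$ when $\nu=1$, this gives equality. In the two-sided mode $W_K=W_{K-u}\delta_u^+\delta_u^-$ and $\lambda_u=1/\delta_u^-+1/\delta_u^+$, so
\[
\lambda_uW_K=W_{K-u}(\delta_u^++\delta_u^-)\le 2W_{K-u},
\]
using $\delta_u^\pm\le1$. This is $\pi r^2W_{K-u}$ with $\pi=2/r^2$.

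This last step is where the factor two enters: each leave-one-out residual can be close to $1$, and nothing couples them. No step looks hard. The only care needed is the pointwise bound $1-\alpha\le q_-+q_+$ in the not-good case, and making the conditioning on the current signs $s$ explicit so that the expectation is taken only over the fresh proposal.
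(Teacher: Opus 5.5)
Your proof is correct and follows the paper's argument essentially step for step. You make the same moves in the same order: the pointwise bound $1-\alpha\le q_-$ (resp.\ $q_-+q_+$, including the not-good case), averaging with $\E cc^*=r^{-2}$ times the coordinate projection onto $N(v)\cap K$, multiplying by the independent deletion probability $\lambda_u/\sum_w\lambda_w$, and finally Lemma~\ref{lem:rep-schur}(ii) to get $\lambda_uW_K=W_{K-u}$ or $\lambda_uW_K=W_{K-u}(\delta_u^++\delta_u^-)\le 2W_{K-u}$.
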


\begin{proof}
The rejection probability for a given proposal is $1-\alpha$. In the
one-sided mode $1-\alpha=\min\{q_-,1\}\le q_-$. In the two-sided mode, $1-(1-q_-)_+(1-q_+)_+\le q_-+q_+$ for
$q_\pm\ge0$. If both are below one, the left side is $q_-+q_+-q_-q_+$.
Otherwise it equals one. Independent
unbiased signs give $\mathbb E\,cc^{*}=r^{-2}$ times the coordinate projection
onto $N(v)\cap K$, so
\[
\mathbb E\,q_{\pm}=\frac{1}{r^{2}}\sum_{w\in N(v)\cap K}
\bigl((Q_K^{\pm})^{-1}\bigr)_{ww},\qquad
\mathbb E(1-\alpha)\le\frac{1}{r^{2}}\sum_{w\in N(v)\cap K}\lambda_w .
\]
Multiplying by the independent deletion probability
$\lambda_u/\sum_w\lambda_w$ proves the first inequality. By
Lemma~\ref{lem:rep-schur}(ii), $W_K\lambda_u=W_{K-u}$ in the one-sided mode.
In the two-sided mode,
\[
W_K\lambda_u=\det Q^{+}_K\det Q^{-}_{K-u}+\det Q^{-}_K\det Q^{+}_{K-u}
=W_{K-u}\,(\delta_u^{+}+\delta_u^{-})\le 2\,W_{K-u}.
\qedhere
\]
\end{proof}

The factor two in the last display is the only loss relative to the
Bilu--Linial conjecture: with a deletion factor of $1/r^{2}$ in the two-sided
mode, the counting below would give $r_*=2\sqrt{\Delta-1}$.

\subsection{Transition bounds and repair histories}

We now combine these one-step bounds along a repair history. For a
signing $s$ of $G[K]$ define
\begin{equation}\label{eq:rep-weight}
\Phi_K(s)=2^{-|E(G[K])|}\,W_K(s)\,\mathbf 1\{K\text{ is good under }s\}.
\end{equation}
A good matrix $Q_K^{-}$ or $N_K$ is positive definite with diagonal entries at
most one, so Hadamard's inequality gives $W_K\le1$ and hence
$\sum_s\Phi_K(s)\le1$.

\begin{lemma}[Transition domination]\label{lem:rep-kernel}
Suppose the subprobability mass $\mu$ of arriving at active set $K$, with $v$
being filled, satisfies $\mu(s)\le C\,\Phi_K(s)$ for every signing $s$ of
$G[K]$. Then the mass of a successful insertion of $v$ ending at a specified
signing of $G[K+v]$ is at most $C\,\Phi_{K+v}$ of that signing, and the mass
of a rejection deleting a specified $u$, ending at a specified signing $\varrho$
of $G[K-u]$, is at most $C\pi\,\Phi_{K-u}(\varrho)$.
\end{lemma}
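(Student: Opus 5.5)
The proof treats the two transitions separately; each is a one-step computation from the algorithm's definition combined with \Cref{lem:rep-schur} and \Cref{lem:rep-charge}.

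\emph{Successful insertion.} Fix a target signing $s'$ of $G[K+v]$. It restricts to a signing $s$ of $G[K]$ and also fixes the proposed signs $\sigma$ on the edges from $v$ into $N(v)\cap K$. To reach $s'$, the process must arrive at $(K,s)$ with mass $\mu(s)$. It must then draw exactly $\sigma$, which happens with probability $2^{-|N(v)\cap K|}$. Finally it must accept, with probability $\alpha=W_{K+v}(s')/W_K(s)$, and this requires $K+v$ to be good. Hence the mass is at most
\[
C\,2^{-|E(G[K])|}W_K(s)\cdot 2^{-|N(v)\cap K|}\cdot\frac{W_{K+v}(s')}{W_K(s)} .
\]
Since $|E(G[K+v])|=|E(G[K])|+|N(v)\cap K|$, this equals $C\,\Phi_{K+v}(s')$. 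If $K+v$ is not good under $s'$, then $\alpha=0$ and the bound holds trivially. The case $W_K(s)=0$ cannot occur, because $\mu(s)>0$ requires $\Phi_K(s)>0$ and hence that $K$ be good.

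\emph{Rejection deleting $u$.} Fix a target signing $\varrho$ of $G[K-u]$. The predecessor state is a signing $s$ of $G[K]$ extending $\varrho$. It is determined by the signs on the edges from $u$ into $N(u)\cap(K-u)$, so there are $2^{|N(u)\cap (K-u)|}$ such extensions. For each extension, \Cref{lem:rep-charge} bounds the mass of rejecting and deleting $u$ by
\[
\mu(s)\Pr(\text{reject, delete }u\mid K,s)\le C\,2^{-|E(G[K])|}\,W_K(s)\Pr(\cdots)\le C\,2^{-|E(G[K])|}\,\pi\,W_{K-u}(\varrho).
\]
Two points need checking here. First, $W_{K-u}(\varrho)$ depends only on $\varrho$, and $K-u$ is good whenever $K$ is, because a principal submatrix of a positive definite matrix is positive definite. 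Second, the discarded proposal does not affect the resulting state, so the only randomness to account for is the rejection itself and the choice of $u$.

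Summing over the $2^{|N(u)\cap(K-u)|}$ extensions and using $|E(G[K])|=|E(G[K-u])|+|N(u)\cap(K-u)|$ gives total mass at most $C\pi\,2^{-|E(G[K-u])|}W_{K-u}(\varrho)=C\pi\,\Phi_{K-u}(\varrho)$.

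The main point requiring care is exactly this summation. The deletion forgets the signs on the edges at $u$, so several predecessor signings merge into one successor state. The factor $2^{-|E|}$ built into $\Phi$ absorbs the multiplicity precisely. In the insertion step, the same factor also pays for the uniform sign draw on the new edges.
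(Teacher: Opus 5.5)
Your proposal is correct and is essentially the paper's proof. For the insertion, the mass $C\,2^{-|E(G[K])|}W_K(s)\,2^{-|N(v)\cap K|}\alpha$ collapses to $C\,\Phi_{K+v}$ via \Cref{lem:rep-schur}(i). For the deletion, you apply \Cref{lem:rep-charge} to each of the $2^{\deg_{G[K]}(u)}$ extensions of $\varrho$, and the factor $2^{-|E|}$ in $\Phi$ absorbs the multiplicity, exactly as in the paper.

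The one case the paper states more explicitly is a bad $\varrho$. By the heredity you cite, every extension is then bad, so $\mu$ vanishes on all of them and both sides are zero. This case is needed because \Cref{lem:rep-charge} applies only to good $s$, and $W_{K-u}(\varrho)$ need not be nonnegative when $\varrho$ is bad.
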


\begin{proof}
Let $a=|N(v)\cap K|$. A specified successful extension $(s,s')$ has mass at
most $C\,2^{-|E(G[K])|}W_K(s)\,2^{-a}\,\alpha$, which equals
$C\,\Phi_{K+v}(s,s')$ by Lemma~\ref{lem:rep-schur}(i); rejected proposals
carry zero weight on both sides. For a deletion, let $b=\deg_{G[K]}(u)$. By
Lemma~\ref{lem:rep-charge}, each good $s$ restricting to $\varrho$ contributes
at most $C\pi\,2^{-|E(G[K])|}W_{K-u}(\varrho)$, and there are at most $2^{b}$
such $s$. Their total is $C\pi\,2^{b-|E(G[K])|}W_{K-u}(\varrho)
=C\pi\,\Phi_{K-u}(\varrho)$. If $\varrho$ is bad, no good $s$ restricts to it,
since principal submatrices of positive definite matrices are positive
definite; both sides vanish. The erased signs are thus summed out without an
exponential factor in the degree.
\end{proof}

To record successive transitions, write $\mathsf S(v)$ for a success or $\mathsf F(v,u)$ for a
rejection followed by deletion of $u$. A \emph{history} is a finite word in
these symbols; it omits all sampled signs. Given the fixed outer order, a
history determines the active set, the stack, and the vertex being filled.

\begin{lemma}[History weights]\label{lem:rep-history}
For every feasible history $h$ with $f(h)$ failures and final active set
$K(h)$, and every signing $s$ of $G[K(h)]$,
\[
\Pr\bigl(h,\ \text{current signing}=s\bigr)\le\pi^{f(h)}\,\Phi_{K(h)}(s).
\]
In particular $\Pr(h)\le\pi^{f(h)}$.
\end{lemma}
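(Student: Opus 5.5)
We prove the bound by induction on the length of the history $h$, applying Lemma~\ref{lem:rep-kernel} once per transition. The statement to carry along is the one the lemma is built for. For every feasible history $h$ and every signing $s$ of $G[K(h)]$, the joint probability that the execution has followed $h$ and its current signing is $s$ is at most $\pi^{f(h)}\Phi_{K(h)}(s)$. We will use this to mean, in the sense of Lemma~\ref{lem:rep-kernel}, that the subprobability mass $\mu_h$ of arriving at the state determined by $h$ satisfies $\mu_h\le C\,\Phi_{K(h)}$ with $C=\pi^{f(h)}$.

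For the base case, the empty history has $K=\emptyset$, the empty signing and probability one. Since $\Phi_\emptyset=2^0\cdot\det(\text{empty})\cdot 1=1$ and $f=0$, the bound holds with equality. For the inductive step, let $h'=h\,\mathsf S(v)$ or $h'=h\,\mathsf F(v,u)$ be feasible. By the remark preceding the lemma, $h$ determines the active set $K(h)$, the stack, and the vertex $v$ currently being filled. The next attempt depends only on the current signing $s$ and on fresh randomness: the signs proposed in step~1, the acceptance coin, and the independent deletion choice. Hence the mass of $(h',s')$ is obtained from $\mu_h$ by one transition kernel, which is exactly the situation of Lemma~\ref{lem:rep-kernel}.

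\begin{itemize}
\item If the transition is a success $\mathsf S(v)$, the lemma bounds the mass at each signing of $G[K+v]$ by $\pi^{f(h)}\Phi_{K+v}$. Since $f(h')=f(h)$, the claim holds for $h'$.
\item If the transition is a failure $\mathsf F(v,u)$, the lemma bounds the mass at each signing $\varrho$ of $G[K-u]$ by $\pi^{f(h)+1}\Phi_{K-u}(\varrho)$. Since $f(h')=f(h)+1$ and $K(h')=K-u$, the claim again holds.
\end{itemize}

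A failure is immediately followed by the call $\textsc{Fill}(u)$. This changes only the stack and not the random state, so it needs no separate step. Likewise, a new outer call $\textsc{Fill}(v_i)$ after a return is a deterministic consequence of the history.

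The one point that needs care is that histories record no signs. The event $\{h,\text{ current signing}=s\}$ is therefore a sum over all earlier sign sequences consistent with $h$. Lemma~\ref{lem:rep-kernel} is stated for a mass already aggregated over everything except the current signing, including the summation over erased signs on deletion. So the induction should be phrased with $\mu_h(s)=\Pr(h,\text{ current signing}=s)$, taken as a function of $s$ alone. The lemma's hypothesis is then exactly the induction hypothesis, and no extra counting factor appears.

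The final claim follows by summing over $s$. We get $\Pr(h)\le\pi^{f(h)}\sum_s\Phi_{K(h)}(s)\le\pi^{f(h)}$, using $\sum_s\Phi_K(s)\le1$, which follows from Hadamard's inequality as noted after \eqref{eq:rep-weight}.
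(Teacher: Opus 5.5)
Your proof is correct and follows the paper's argument. Both induct on history length from the empty history (mass one, equal to $\Phi_\emptyset=1$), apply Lemma~\ref{lem:rep-kernel} at each success or failure while treating stack changes and new outer calls as deterministic, and finish by summing over $s$ using $\sum_s\Phi_K(s)\le1$.
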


\begin{proof}
The empty history has mass one on the empty state, which equals
$\Phi_\emptyset$. Each success or failure satisfies the corresponding transition bound of
Lemma~\ref{lem:rep-kernel}; stack changes and the start of the next outer call
are deterministic. Induction proves the pointwise bound, and
$\sum_s\Phi_K(s)\le1$ proves the second.
\end{proof}

The lemma bounds joint, unnormalized probabilities. Conditioning on $h$
would remove the factor needed for the bound.

\subsection{Counting histories and proof of Theorem~\ref{thm:repair}}

We count these histories by representing each completed outer call as an
ordered rooted tree: each rejection is an edge to
the ensuing recursive call, and a node's final success closes its call. Root
labels are fixed by the outer order. A root has at most $\Delta$ choices for
each child. At a nonroot, the parent is an inactive neighbour, leaving at most
$\Delta-1$ choices for each child. To encode a child, use its index in a fixed neighbour
list, excluding the parent at nonroots. The
depth is at most $n$. Weighting each rejection by $z$, let $Y_h(z)$ overcount
nonroot trees of height at most $h$:
\begin{equation}\label{eq:rep-trees}
Y_0(z)=1,\qquad Y_{h+1}(z)=\frac{1}{1-(\Delta-1)\,z\,Y_h(z)} .
\end{equation}
All series have nonnegative coefficients, and completed executions are
coefficientwise dominated by
\begin{equation}\label{eq:rep-forest}
F_n(z)=\bigl(1-\Delta\,z\,Y_n(z)\bigr)^{-n}.
\end{equation}
This bound allows excess height and infeasible labels.

\begin{lemma}[Prefix padding]\label{lem:rep-padding}
For fixed $t$, every history of length $t$ injects into the completed forests
counted by~\eqref{eq:rep-forest}, with the same number of failures.
\end{lemma}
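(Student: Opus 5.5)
Given a feasible history $h$ of length $t$, I will complete it to a full execution record and then read off a forest of the kind counted by~\eqref{eq:rep-forest}. The failures of $h$ become rejection edges of that forest, and no further rejections are added. The padding consists only of successes.

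First I decode $h$ into a partial forest. Because the outer order is fixed, $h$ determines the active set, the recursion stack, and the vertex being filled at every prefix. Each symbol $\mathsf F(v,u)$ opens a child call labelled $u$ below the node currently being filled, and each $\mathsf S(v)$ closes that node's call. The state at the end of $h$ is therefore a forest in which some calls are unfinished: the stack $v=u_0,u_1,\dots,u_k$, together with possibly some outer roots that have not yet started. I complete the forest in three steps. I append $\mathsf S(u_k)$, then $\mathsf S(u_{k-1})$, and so on up the stack, closing each open call. Then I append a childless root with a success for each remaining outer vertex. The completed word keeps $h$ as a prefix, so distinct histories of length $t$ give distinct completed forests, and the number of failures is unchanged.

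Next I check that these forests are dominated by~\eqref{eq:rep-forest}. The encoding uses the depth bound $\le n$: stack labels are distinct, as shown after Algorithm~\ref{alg:repair}. It also uses the branching bounds. Every deleted child is a neighbour of its parent's vertex in $K$. At a nonroot, the parent's own parent is inactive during the call, so the child is never that vertex, which leaves at most $\Delta-1$ choices. At a root there are at most $\Delta$ choices. Since~\eqref{eq:rep-trees} and~\eqref{eq:rep-forest} already allow infeasible labels and excess height, the padded forests need not be feasible executions. Only the shape constraints matter, and these hold.

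The main obstacle is injectivity. Appending successes could in principle merge two histories of the same length, so I would argue that $h$ is recovered from the forest as its length-$t$ prefix in the canonical traversal order. Each tree is listed depth-first, with each child's rejection symbol followed by the child's subtree and the node's own success coming last; the trees are then concatenated in outer order. The word is therefore determined by the forest, and the prefix by the forest together with $t$. Injectivity then follows.
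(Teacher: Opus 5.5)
Your proposal is correct and follows the paper's proof. In both, you pad the history with successes for the pending stack from the top down and then for the remaining outer roots, so the original history is the length-$t$ prefix of the depth-first word of the completed forest and the failure count is unchanged. Your explicit argument that the forest determines its word, and your recheck of the depth and branching constraints, spell out what the paper states as ``truncation after $t$ symbols recovers the prefix'' and in the counting paragraph preceding the lemma.
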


\begin{proof}
Append success symbols for the current stack, from the top downward, and then
for the remaining outer roots in order. Each pending call closes with no
further children, so the result is a formal completed forest within the same
height bound. Truncation after $t$ symbols recovers the prefix. The appended
successes serve only as an encoding and need not be feasible transitions.
\end{proof}

After $t$ attempts with $f$ failures the active set has $t-2f\le n$ elements,
so $f\ge(t-n)/2$. Let $a_f$ be the coefficient of $z^{f}$ in $F_n$. By
Lemmas~\ref{lem:rep-history} and~\ref{lem:rep-padding}, for $t\ge n$ and every
$z>\pi$ at which~\eqref{eq:rep-forest} converges,
\begin{equation}\label{eq:rep-count}
\Pr(\mathsf T>t)\le\sum_{f\ge(t-n)/2}a_f\,\pi^{f}
\le F_n(z)\,(\pi/z)^{(t-n)/2}.
\end{equation}
This bounds finite prefixes directly and does not presuppose termination.

\begin{proof}[Proof of Theorem~\ref{thm:repair}]
\emph{Slack.} Let $z_0=1/(4(\Delta-1))$, so $\pi=z_0(1+\epsilon)^{-2}$.
Induction in~\eqref{eq:rep-trees} gives $Y_h(z_0)\le2$, hence
$\Delta z_0Y_n(z_0)\le\Delta/(2(\Delta-1))\le3/4$ and $F_n(z_0)\le4^{n}$.
Then~\eqref{eq:rep-count} gives $\Pr(\mathsf T>t)\le4^{n}(1+\epsilon)^{-(t-n)}$;
summing its minimum with one over $t$ gives $\mathbb E\,\mathsf T=O(n/\epsilon)$.

\emph{Threshold.} Let $\pi=1/(4(\Delta-1))$, $z=\pi(1+1/(16n^{2}))$, and
$\theta=\arctan(1/(4n))$. Then $(\Delta-1)z=(4\cos^{2}\theta)^{-1}$, and
induction in~\eqref{eq:rep-trees} gives
\[
Y_h(z)=2\cos\theta\,\frac{\sin((h+1)\theta)}{\sin((h+2)\theta)},
\]
the inductive step being $2\cos\theta\sin((h+2)\theta)-\sin((h+1)\theta)
=\sin((h+3)\theta)$. For $0\le h\le n$ all angles lie in
$(0,(n+2)/(4n)]\subseteq(0,3/4]$, so the denominators are positive and
$Y_n(z)<2\cos\theta<2$. Hence
$\Delta zY_n(z)<\frac{\Delta}{2(\Delta-1)}\bigl(1+\frac1{16}\bigr)\le\frac{51}{64}
<\frac45$ and $F_n(z)<5^{n}$, which with~\eqref{eq:rep-count}
proves~\eqref{eq:rep-tail}. Since $\log(1+x)\ge x/(1+x)$, the bound
in~\eqref{eq:rep-tail} is at least one for at most
$2n\log5\,(16n^{2}+1)$ values of $t-n$, and its remaining terms sum to at most
$2(16n^{2}+1)$. Thus
$\mathbb E\,\mathsf T\le n+1+2n\log5\,(16n^{2}+1)+2(16n^{2}+1)<100\,n^{3}$.
The tail tends to zero, so the algorithm terminates almost surely, and the
output bounds were established above. The arithmetic assertion is proved in
Section~\ref{sec:rep-exact}.
\end{proof}

The argument has the shape of a witness-tree analysis for the algorithmic
local lemma~\cite{MoserTardos}, with squared Gram volume in place of a product
measure. The determinant average has a classical interpretation:
by the Godsil--Gutman identity used in~\cite{MSS1}, $\sum_s2^{-|E|}\det(I-A_s/r)=r^{-n}\mu_G(r)$,
where $\mu_G$ is the matching polynomial. The weight~\eqref{eq:rep-weight}
retains only the positive definite terms of this average. The tree
recursion~\eqref{eq:rep-trees} converges exactly when $z\le1/(4(\Delta-1))$,
which in the one-sided mode is the deletion factor at the Heilmann--Lieb radius
$2\sqrt{\Delta-1}$ bounding the roots of $\mu_G$.

\subsection{Exact implementation}\label{sec:rep-exact}

In the two-sided mode all decisions are rational. Put
$M_K=r^{2}I-A_K^{2}$; at the threshold $r^{2}=8(\Delta-1)$ this is an integer
matrix with entries bounded by $r^{2}+\Delta$. The state is good exactly when
$M_K\succ0$, and $W_K=\det M_K/r^{2|K|}$. Consequently
\[
\alpha=\frac{\det M_{K+v}}{r^{2}\det M_K}\quad\text{if }M_{K+v}\succ0,
\qquad \lambda_u\propto(M_K^{-1})_{uu}.
\]
Fraction-free elimination tests definiteness and computes the determinants
and diagonal cofactors with polynomial bit length. Rational Bernoulli and
weighted choices use polynomially many random bits in expectation. Recomputing from scratch costs $O(n^{3})$ field operations per
attempt. With slack, rational $r^{2}$ gives the same conclusion, with
dependence on its bit length.

In the one-sided mode $Q_K^{-}$ has entries in $\mathbb Q(\sqrt{\Delta-1})$;
exact arithmetic and sign tests in this fixed quadratic field have polynomial
cost. For bipartite $G$ with classes $L,R$, write
$A_K=\bigl(\begin{smallmatrix}0&B\\B^{*}&0\end{smallmatrix}\bigr)$ and $\varsigma=r^{2}$.
Then $Q_K^{-}\succ0$ if and only if $\varsigma I-BB^{*}\succ0$,
$\det Q_K^{-}=\det(\varsigma I-BB^{*})/\varsigma^{|L\cap K|}$, and the diagonal
blocks of $(Q_K^{-})^{-1}$ are $\varsigma(\varsigma I-BB^{*})^{-1}$ and
$\varsigma(\varsigma I-B^{*}B)^{-1}$. At the threshold $\varsigma=4(\Delta-1)$
is an integer, and every decision is rational.

\begin{remark}[Verification]
The supplementary programs~\cite{certificate} check
Lemmas~\ref{lem:rep-schur}--\ref{lem:rep-kernel} exhaustively in exact
arithmetic on all signed exteriors of size $2\times3$ at $r^{2}\in\{4,8\}$.
They verify Lemmas~\ref{lem:rep-history} and~\ref{lem:rep-padding} on all
execution prefixes of $K_{2,2}$ through ten attempts. They also run the
one-sided bipartite and rational two-sided algorithms to completion, with
exact certificates for their outputs. These finite checks test the algebra and the implementation; the
general claims rest on the proofs above.
\end{remark}

\subsection{Consequences for lifts}

As in Section~\ref{sec:lifts}, a signing $s$ of $G$ defines a $2$-lift whose
new eigenvalues are those of $A_s$. Theorem~\ref{thm:repair} therefore
strengthens Corollary~\ref{cor:lifts} for the adjacency spectrum, at the cost
of randomization.

\begin{corollary}[Randomized lift families]\label{cor:rep-lifts}
From a $d$-regular graph $G_0$ on $n_0$ vertices, $d\ge3$, one can construct a
sequence of $2$-lifts $G_k$ on $n_02^{k}$ vertices, in expected time polynomial
in the output size, such that
\[
\lambda(G_k)\le\max\bigl\{\lambda(G_0),\,2\sqrt{2(d-1)}\bigr\}.
\]
For bipartite $G_0$, replace the second term by $2\sqrt{d-1}$. If $G_0$ is
connected, so are all lifts, with a uniform spectral gap for $d\ge7$
(bipartite: $d\ge3$).
\end{corollary}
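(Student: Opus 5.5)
The plan is to iterate Theorem~\ref{thm:repair} and use the spectral decomposition of $2$-lifts from Section~\ref{sec:lifts}. Set $G_0$ as given. For $k\ge0$, run Algorithm~\ref{alg:repair} on $G_k$ with maximum degree $\Delta=d\ge3$: use the two-sided mode in general and the one-sided mode when $G_0$ is bipartite. Let $G_{k+1}$ be the $2$-lift defined by the returned signing $s_k$. Its adjacency matrix is orthogonally similar to $A(G_k)\oplus A_{s_k}$, so the spectrum of $G_{k+1}$ is the spectrum of $G_k$ together with that of $A_{s_k}$. The lift of a $d$-regular graph is $d$-regular, and the lift of a bipartite graph is bipartite: with classes $U,W$, both copies of every lifted edge still join a copy of $U$ to a copy of $W$. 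Hence in the bipartite case each step may again use the one-sided bipartite conclusion $\|A_{s_k}\|<2\sqrt{d-1}$. In the general case each step gives $\|A_{s_k}\|<2\sqrt{2(d-1)}$.

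The spectral bound follows by induction on $k$. The excluded eigenvalues, one copy of $d$ and, for bipartite graphs, one copy of $-d$, are inherited from $G_0$ inside the old block $A(G_k)$. Every other eigenvalue of $G_{k+1}$ is either a non-excluded eigenvalue of $G_k$, which is bounded by the induction hypothesis, or an eigenvalue of $A_{s_k}$, which is bounded by the theorem. Thus $\lambda(G_{k+1})\le\max\{\lambda(G_k),\|A_{s_k}\|\}$, and iterating gives the stated bound. One point needs care in the bipartite case. The excluded $-d$ must be the one from $G_0$, and $A_{s_k}$ must not contribute another $\pm d$. This holds because $\|A_{s_k}\|<2\sqrt{d-1}\le d$.

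For the running time, Theorem~\ref{thm:repair} gives expected $O(n^3)$ attempts for an $n$-vertex graph, and each attempt costs polynomially many bit operations in expectation by Section~\ref{sec:rep-exact}. At stage $k$ we have $n=n_02^k$. By linearity of expectation, the total expected cost is $\sum_{j<k}\mathrm{poly}(n_02^j)$, which is polynomial in the output size $n_02^k$. Because the sizes grow geometrically, the sum is dominated by its last term.

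For connectivity and the uniform gap, I would show that the new-eigenvalue bound lies strictly below $d$ in the stated degree ranges. In the general case, $2\sqrt{2(d-1)}<d$ is equivalent to $8(d-1)<d^2$, that is, $d^2-8d+8>0$. This holds for $d\ge7$, since $49-56+8=1>0$, and fails at $d=6$. In the bipartite case, $2\sqrt{d-1}<d$ is equivalent to $(d-2)^2>0$, which holds for $d\ge3$. Then no new eigenvalue equals $d$, so the multiplicity of $d$ remains one and every lift of a connected $G_0$ stays connected. The gap $d-\max\{\lambda(G_0),r_*\}$ is positive and independent of $k$. Here $\lambda(G_0)<d$ because $G_0$ is connected, and in the bipartite case we also use that $-d$ is simple.

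The step I expect to be the main obstacle is the bookkeeping of excluded eigenvalues when $G_0$ is disconnected or has extra eigenvalues $\pm d$. The definition excludes only one copy, so any additional copies stay inside $\lambda(G_0)$ and the inequality still holds trivially.
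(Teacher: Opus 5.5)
Your proposal is correct and follows the paper's proof. Like the paper, you iterate Theorem~\ref{thm:repair} (two-sided mode in general, one-sided mode for bipartite bases, whose lifts stay bipartite), use the spectral decomposition $A(G_k)\oplus A_{s_k}$ of each $2$-lift, check $8(d-1)<d^2$ for $d\ge7$ and $4(d-1)<d^2$ for $d\ge3$, and sum the geometrically growing expected costs; your bookkeeping of the excluded eigenvalues $\pm d$ spells out details the paper leaves implicit.
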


\begin{proof}
Apply Theorem~\ref{thm:repair} at each level, in the two-sided mode, or in the
one-sided mode for bipartite graphs, whose lifts remain bipartite. The
thresholds follow from $8(d-1)<d^{2}$ for $d\ge7$ and $4(d-1)<d^{2}$ for
$d\ge3$. The expected costs $O(n_0^{3}2^{3j})$ attempts at level $j$ sum to a
polynomial in $n_02^{k}$.
\end{proof}

Compared with Corollary~\ref{cor:lifts}, the degree thresholds fall from $91$
to $7$ and from $46$ to $3$.  If the prescribed bipartite base $G_0$ is Ramanujan, the eigenvalues
inherited from $G_0$ satisfy the required bound. Theorem~\ref{thm:repair} bounds the
new eigenvalues, so induction shows that every $G_k$ is Ramanujan. The
algorithm thus constructs a good $2$-lift of the supplied graph itself.  By
contrast, Cohen~\cite{Cohen} constructs bipartite Ramanujan multigraphs from
size and degree through a matching interlacing family.  For general bases the
same repair process gives lift families whose new eigenvalues are within a
factor $\sqrt2$ of the Ramanujan threshold.

\section{Discussion and conclusion}\label{sec:discussion}
The proof of the deterministic rounding theorem has two parts. The matrix argument
identifies the responses to coordinate directions and constructs a response
frame with the required covariance. The local inequality shows that the
negative second-derivative term from the weight function dominates the
response terms. Rank
enters through trace inequalities for positive matrices. We can therefore
assign one sign to a higher-rank input without splitting it into pieces that
receive independent signs.

The local inequality is the computer-assisted part of the proof. A subdivision
certificate verifies it with exact integer arithmetic and a strict margin.
The certificate is independent of the input matrices. During rounding, the
algorithm uses the resulting formulas for the weight function and metric; it does not
search for new formulas. The analytic proof then shows that rounding makes
progress at the chosen precision.

These bounds on progress and precision establish polynomial bit complexity
for deterministic rounding. They
cover matrix inversions, linear solves, factorizations, and certified
comparisons. The resulting iteration bound is conservative.  An end-to-end certified solver for the deterministic rounding algorithm
remains to be implemented.  The supplementary programs~\cite{certificate} verify the scalar certificate and supporting
algebra. They also test the exact repair algorithms of
Section~\ref{sec:repair} on finite instances.

A separate limitation concerns the trace scale in the rounding guarantee:
at rank one, it equals matrix variance; at higher rank, it can be larger.

The two algorithms also differ in which matrices they control. The rounding algorithm is
deterministic and keeps higher-rank summands intact. For graphs, it controls
signed degrees and adjacency together. The repair algorithm of
Section~\ref{sec:repair} is randomized and controls adjacency alone. Its
constant is within $\sqrt2$ of the conjectured bound and attains the sharp
universal bound for bipartite graphs. Both its two-sided mode and its
one-sided mode on bipartite graphs use exact rational comparisons. The
analysis needs neither a spectral potential nor finite-precision
certification.

For the repair algorithm, three questions remain. First, the factor $\sqrt2$ enters only through the
bound $\delta_u^++\delta_u^-\le2$ in Lemma~\ref{lem:rep-charge}. A deletion rule whose bound involved one leave-one-out residual instead of
their sum would give the conjectured constant. Second, Marcus, Spielman, and Srivastava prove the graph-specific
bound $\lambda_{\max}(A_s)\le\rho(\mu_G)$, which can lie well below
$2\sqrt{\Delta-1}$. In the one-sided mode, the total mass of the repair
weights is the positive-definite part of the Godsil--Gutman average
$r^{-n}\mu_G(r)$. This suggests that a repair rule adapted to $G$ might reach
that bound; the degree-based tree count used here cannot. Third, the
determinantal weights of the histories suggest a possible derandomization by
conditional expectations. We have not pursued this. Our branching count uses
the graph degree, and we do not know an analogue for general rank-one or
higher-rank discrepancy.

\subsection*{Statement on AI Usage}
This work is the product of extensive back-and-forth interactions with AI, and substantial human input. Both Claude Fable 5.1 as well as GPT-5.6 and GPT-6 (Pro versions) have been used for all aspects of the paper preparation as well as for helping generate verification code and certificates for the technical material.

\appendix

\section{Regularity and response identities}\label{sec:response}
This appendix supplies the analytic facts used in the main proof. Coordinate
sums run over the active inputs, while $S$ includes frozen contributions.
By Lemma~\ref{lem:trace}, the trace map is positive and self-adjoint and has
$\norm{\Eop_x}_{F\to F}\le c$.

\subsection{The optimizer and its linearization}
\begin{lemma}[Primal--dual data]\label{lem:kkt}
On every open face the minimizer of~\eqref{eq:potential} is unique and smooth. Its unique dual pair $P,Q\succ0$ satisfies
\begin{equation}\label{eq:kkt}
 \begin{gathered}
 X^{-1}+S+\Eop_x(Y)=tI,\qquad Y^{-1}-S+\Eop_x(X)=tI,\\
 P=X\Eop_x(Q)X+\rho X^2,\qquad
 Q=Y\Eop_x(P)Y+\rho Y^2,\qquad \Tr(P+Q)=1.
 \end{gathered}
\end{equation}
At accepted states $R<4$, so $\norm{X^{-1}},\norm{Y^{-1}}<8$, $\norm X,\norm Y\le\rho^{-1/2}$, and $P,Q\succeq\rho I/64$. The response operator $\Lop$ in~\eqref{eq:L} is invertible, its inverse preserves positive semidefiniteness, and $\norm{\Lop^{-1}}_{F\to F}\le\rho^{-1}$.
\end{lemma}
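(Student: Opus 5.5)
The plan is to treat \eqref{eq:potential} as a convex program, reduce it to a one-parameter family of least feasible pairs, and get the dual data and all quantitative bounds from the stationarity system together with a Perron--Frobenius argument for positive maps.

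\textbf{Convexity and the least feasible pair.} Since $X\mapsto X^{-1}$ is operator convex on $X\succ0$ and $\Eop_x$ is linear, the feasible set of \eqref{eq:potential} is convex in $(t,X,Y)$.
\begin{itemize}
\item For fixed $t$ above the infimum level $\beta$, the monotone iteration \eqref{eq:inverseiteration} produces a least feasible pair $(X_t,Y_t)$. Minimality forces both constraints to be tight.
\item Because $\Tr$ is strictly monotone, any optimizer at level $t$ must equal $(X_t,Y_t)$.
\item This reduces uniqueness to uniqueness of the minimizer of $F(t)=t+\rho\Tr(X_t+Y_t)$. I would show $F$ is strictly convex, using strict operator convexity of the inverse along the tight curve $t\mapsto(X_t,Y_t)$. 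It is also coercive: $F\to\infty$ as $t\downarrow\beta$ and as $t\to\infty$.
\end{itemize}

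\textbf{Dual pair and bounds at $R<4$.} Slater's condition holds (take $X=Y=aI$ and $t$ large), so Lagrange multipliers $P,Q\succeq0$ exist. Stationarity in $X$, $Y$, $t$ gives exactly \eqref{eq:kkt}. The term $\rho X^2$ makes $P\succ0$, and similarly $Q\succ0$. Uniqueness of $(P,Q)$ follows once $\Lop$ is invertible, since the multiplier equations form the adjoint linear system. The bounds then go as follows.
\begin{itemize}
\item Adding the two constraints gives $X^{-1}+Y^{-1}+\Eop_x(X+Y)\preceq 2tI$, so $t>0$.
\item By \eqref{eq:initial}, $\|S\|\le R<4$ and $t\le R<4$. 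Together with $\Eop_x(Y)\succeq0$ this gives $X^{-1}\preceq tI-S\prec 8I$, so $\|X^{-1}\|<8$; the same holds for $Y$.
\item For $\|X\|\le\rho^{-1/2}$, I would pair the first dual equation with $I$ and use $\Tr(P)\le1$, which gives $\rho\Tr(X^2)\le1$. This yields $\|X\|\le\rho^{-1/2}$, and likewise for $Y$.
\item Finally $P\succeq\rho X^2\succeq\rho I/64$, since $X\succeq I/8$; the same argument gives $Q\succeq\rho I/64$.
\end{itemize}

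\textbf{The response operator; this is the main step.} Write $\Lop=\mathcal D-\mathcal K$ with the following pieces:
\begin{itemize}
\item $\mathcal D(U,V)=(X^{-1}UX^{-1},\,Y^{-1}VY^{-1})$, which is invertible with positive inverse $(U,V)\mapsto(XUX,YVY)$;
\item $\mathcal K(U,V)=(\Eop_x V,\Eop_x U)$, which is positive.
\end{itemize}
Then $\mathcal D^{-1}\mathcal K(U,V)=(X\Eop_x(V)X,\;Y\Eop_x(U)Y)$ is a positive map. The dual equations \eqref{eq:kkt} state that the strictly positive pair $(P,Q)$ satisfies
\[
(P,Q)-\mathcal D^{-1}\mathcal K(P,Q)=\rho(X^2,Y^2)\succ0.
\]
By the Krein--Rutman/Perron principle for positive maps on the cone of pairs of PSD matrices, a strictly positive supersolution forces spectral radius $<1$. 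So the Neumann series
\[
\Lop^{-1}=\sum_{k\ge0}(\mathcal D^{-1}\mathcal K)^k\mathcal D^{-1}
\]
converges, and each term preserves positive semidefiniteness.

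For the Frobenius bound $\|\Lop^{-1}\|_{F\to F}\le\rho^{-1}$, I would use self-adjointness and show the quadratic-form inequality $\langle (U,V),\Lop(U,V)\rangle\ge\rho\|(U,V)\|_F^2$. To get it, bound the cross term $2\langle V,\Eop_x U\rangle$ by a weighted Cauchy--Schwarz inequality with weights read off from $(P,Q)$; the residual $\rho(X^2,Y^2)$ then supplies the margin. I expect this step, extracting exactly the constant $\rho$ rather than merely invertibility, to be the main obstacle. A fallback is to run the same argument in the $\mathcal D$-weighted norm and use $\|X^{-1}\|,\|Y^{-1}\|<8$.

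\textbf{Smoothness.} Given invertibility of $\Lop$, the Jacobian of the full optimality system \eqref{eq:kkt} in $z=(t,X,Y,P,Q)$ is nonsingular. Its block structure is two $\Lop$-solves plus a scalar Schur complement, and that complement equals $F''>0$ by strict convexity. The implicit function theorem then gives smooth dependence of the minimizer on $x$ on each open face.
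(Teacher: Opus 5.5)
\textbf{Gap: the Frobenius bound on $\Lop^{-1}$.} Most of your plan is sound and close to the paper. Slater duality and stationarity give \eqref{eq:kkt}. The bounds $\norm{X^{-1}}<8$, $\rho\Tr(X^2)\le1$ and $P\succeq\rho X^2\succeq\rho I/64$ are derived exactly as there. Your Perron argument, with the supersolution $(P,Q)$ of $\mathcal D^{-1}\mathcal K$, is the paper's Neumann-series argument. Your uniqueness route, via least feasible pairs and strict convexity of $F$, is a valid alternative to the paper's, which fixes the multipliers and uses strict convexity of $\Tr(PX^{-1}+QY^{-1})$.

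The gap is the claim $\norm{\Lop^{-1}}_{F\to F}\le\rho^{-1}$, which you leave unproved. Your weighted Cauchy--Schwarz step works for commuting data. For scalars, $2eUV\le e(\tfrac QPU^2+\tfrac PQV^2)$ and $X^{-2}-eQ/P=\rho/P\ge\rho$. You give no matrix version, and none is evident: splitting $\Tr(U\Eop_x(V))$ produces weighted terms in $\Eop_x(Q)=X^{-1}PX^{-1}-\rho I$, and these do not recombine with $\Tr(UX^{-1}UX^{-1})$ when $U,P,X$ fail to commute. Your fallback gives the wrong order. From $\Lop^{-1}=\mathcal D^{-1/2}(I-\mathcal D^{-1/2}\mathcal K\mathcal D^{-1/2})^{-1}\mathcal D^{-1/2}$, with $\norm{\mathcal D^{-1}}\le\rho^{-1}$ and spectral radius of $\mathcal D^{-1}\mathcal K$ at most $1-\rho/64$, one only gets $64\rho^{-2}$.

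\textbf{The fix.} The dual equations say exactly $\Lop(P,Q)=\rho(I,I)$, because $X^{-1}PX^{-1}-\Eop_x(Q)=\rho I$, and likewise for $Q$. Hence $\Lop^{-1}(I,I)=(P,Q)/\rho\preceq\rho^{-1}(I,I)$, using $P,Q\succeq0$ and $\Tr(P+Q)=1$. Since $\Lop^{-1}$ is positive, $-(I,I)\preceq(U,V)\preceq(I,I)$ implies $\pm\Lop^{-1}(U,V)\preceq\rho^{-1}(I,I)$. This bounds the spectral radius of $\Lop^{-1}$ by $\rho^{-1}$. Self-adjointness of $\Lop$ for the trace pairing then turns this into the Frobenius bound.

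If you prefer your quadratic-form target, the same Perron argument works for the positive map $\mathcal D^{-1}(\rho\,\mathrm{Id}+\mathcal K)$. It sends $(P,Q)$ to $(P-\rho X(I-P)X,\;Q-\rho Y(I-Q)Y)\preceq(P,Q)$. Symmetrizing by $\mathcal D^{1/2}$ then gives $\rho\,\mathrm{Id}+\mathcal K\preceq\mathcal D$, that is, $\Lop\succeq\rho$.

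\textbf{Two smaller repairs.} First, $F$ need not blow up as $t\downarrow\beta$. For $d=1$, $S=0$ and $\Eop_x(Z)=eZ$, the level $\beta=2\sqrt e$ is attained at $X=Y=e^{-1/2}$, where $F$ is finite. Existence should therefore come from compactness of objective sublevels, as in the paper. Second, strict convexity does not give $F''>0$. The Schur complement equals $\rho^{-2}\ip{(P,Q)}{\mathscr H(P,Q)}$, where $\mathscr H$ is the Lagrangian Hessian in $(X,Y)$. It is positive because $X^{-1}PX^{-1}\succeq\rho I$ gives $\mathscr H\succeq2\rho^{3/2}I$.
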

\begin{proof}
Schur complements turn the constraints into linear matrix inequalities. A scalar test point gives strict feasibility. On a bounded sublevel of the objective, the trace penalty bounds $X,Y$ from
above, and the constraints bound them away from zero. The minimum therefore
exists. Slater duality gives~\eqref{eq:kkt}. The stationarity equations make
$P,Q$ positive definite, so complementary slackness makes both constraints
tight.

With the optimal multipliers fixed, the inversion terms in the Lagrangian
are strictly convex. The primal optimum is therefore unique. Taking traces in stationarity gives $\rho\Tr(X^2+Y^2)\le1$. The remaining size bounds follow from $\norm S\le R<4$ and $t\le R$.

Let $\mathcal T(U,V)=(X\Eop_x(V)X,Y\Eop_x(U)Y)$. Stationarity gives
$\mathcal T(P,Q)=(P,Q)-\rho(X^2,Y^2)\preceq(1-\rho/64)(P,Q)$.
Since $(P,Q)$ is an order unit, the associated Neumann series converges and
yields a positive inverse of $\Lop$. Moreover,
$\Lop^{-1}(I,I)=(P,Q)/\rho\preceq(I,I)/\rho$. Thus its order norm is at most
$\rho^{-1}$, and the same bound holds for its spectral radius.
Self-adjointness gives the Frobenius norm bound and uniqueness of the dual
pair.

For smoothness let $\mathscr H$ be the Hessian of the primal Lagrangian in the matrix variables. Its first component is
$X^{-1}UX^{-1}PX^{-1}+X^{-1}PX^{-1}UX^{-1}$, with the analogous $Y,Q$ component. Stationarity implies $\mathscr H\succeq2\rho^{3/2}I$. The KKT linearization reduces to
\begin{equation}\label{eq:jacobian}
 \Lop U+\lambda\mathbf e=f_1,\qquad
 \Lop V-\mathscr H U=f_2,\qquad
 \ip{\mathbf e}{V}=f_3,\qquad \mathbf e=(I,I).
\end{equation}
Its scalar Schur complement is
$\ip{\Lop^{-1}\mathbf e}{\mathscr H\Lop^{-1}\mathbf e}
\ge1/(d\sqrt\rho)>0$, using $\Tr(P^2+Q^2)\ge1/(2d)$.
The Jacobian is invertible, and the implicit function theorem applies. The same calculation gives uniform polynomial bounds on each fixed sublevel
of the objective; see Appendix~\ref{sec:precision}.
\end{proof}

\subsection{Two response components per coordinate}
Write $\psi_i=\psi(x_i)$, $\psi_i^\prime=\psi^\prime(x_i)$, and similarly for higher derivatives. At a light state define $a_i=\Tr(M_iX)$, $b_i=\Tr(M_iY)$, $p_i=\Tr(M_iP)$, and $q_i=\Tr(M_iQ)$. Put
\begin{equation}\label{eq:localscales}
 \begin{gathered}
 \alpha_i=1+c\psi_i'b_i,\quad \beta_i=1-c\psi_i'a_i,\quad
 t_i=\sqrt{\alpha_i\beta_i/\psi_i},\\
 \ell_i=\sqrt{\psi_i\alpha_i/\beta_i},\quad
 k_i=\sqrt{\psi_i\beta_i/\alpha_i},\quad
 \widetilde a_i=\ell_i a_i,\quad \widetilde b_i=k_i b_i,\quad
 z_i=\ell_i p_i,\quad w_i=k_i q_i.
 \end{gathered}
\end{equation}
The bounds on the weight function in Appendix~\ref{sec:certificate} imply $\alpha_i,\beta_i>0$. Let
\[
 \mathsf A_{ij}=c\ell_i\ell_j\Tr(M_iXM_jX),\qquad
 \mathsf B_{ij}=ck_i k_j\Tr(M_iYM_jY),
\]
and $D_F=\diag(\widetilde a_i z_i)$, $D_G=\diag(\widetilde b_i w_i)$, $D_0=D_F+D_G$. For a direction $v$, solve
\begin{equation}\label{eq:coeffresponse}
 F-\mathsf B G=\diag(t_i)v,\qquad
 G-\mathsf A F=-\diag(t_i)v.
\end{equation}
Then $\dot X=XK_XX$, $\dot Y=YK_YY$, where $K_X=\sum_i\ell_iF_iM_i$ and $K_Y=\sum_i k_iG_iM_i$, are the responses of the tight constraints with $t$ held fixed. Indeed~\eqref{eq:L} sends this pair to $(\sum_i\alpha_iv_iM_i,-\sum_i\beta_iv_iM_i)$. The homogeneous coefficient system is injective by the same argument and Lemma~\ref{lem:kkt}, even if the inputs are linearly dependent.

With $y=(D_F^{1/2}F,D_G^{1/2}G)$ and $\xi=D_0^{1/2}\diag(t_i)v$, equation~\eqref{eq:responsecore} holds with
\begin{equation}\label{eq:TJ}
 \mathsf T=\begin{pmatrix}0&D_F^{1/2}\mathsf B D_G^{-1/2}\\
 D_G^{1/2}\mathsf A D_F^{-1/2}&0\end{pmatrix},\qquad
 J=\binom{D_F^{1/2}D_0^{-1/2}}{-D_G^{1/2}D_0^{-1/2}}.
\end{equation}
We interleave the coordinates into $m$ pairs when referring to a local $2\times2$ block. Define the response-energy matrix $\mathsf E\succeq0$ by
\begin{equation}\label{eq:energy}
 y^*\mathsf E y=\Tr(PK_XXK_X)+\Tr(QK_YYK_Y).
\end{equation}

\begin{lemma}[Rank-free energy bounds]\label{lem:purity}
For each active coordinate there are numbers $f_i,g_i,u_i,v_i$ satisfying
$0<f_i,g_i\le1$, $f_i^2\le u_i\le1$, and $g_i^2\le v_i\le1$, such that
\begin{equation}\label{eq:purities}
 \mathsf E_{ii}=\diag(f_i,g_i),\quad
 \mathsf A_{ii}=c\widetilde a_i^2u_i,\quad
 \mathsf B_{ii}=c\widetilde b_i^2v_i,\quad
 \mathsf E\succeq\mathsf T^*\mathsf C\mathsf T,
 \qquad \mathsf C=\bigoplus_i\frac{I_2}{c\widetilde a_i\widetilde b_i}.
\end{equation}
\end{lemma}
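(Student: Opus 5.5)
The plan is to compute every quantity in \eqref{eq:purities} explicitly from the definitions \eqref{eq:localscales}--\eqref{eq:energy}, identify the diagonal data as overlaps and purities of trace-one PSD matrices so that Lemma~\ref{lem:density} applies, and then prove the energy domination $\mathsf E\succeq\mathsf T^*\mathsf C\mathsf T$ by weighted Cauchy--Schwarz combined with the stationarity equations \eqref{eq:kkt}.

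\emph{Diagonal blocks.} In $y=(D_F^{1/2}F,D_G^{1/2}G)$, the form \eqref{eq:energy} splits into a part $\Tr(PK_XXK_X)$ depending only on $F$ and a part $\Tr(QK_YYK_Y)$ depending only on $G$. Hence the $F_i$--$G_i$ entry of $\mathsf E_{ii}$ vanishes and $\mathsf E_{ii}$ is diagonal. The $F_i^2$ coefficient is $\ell_i^2\Tr(PM_iXM_i)$. Since the entry of $D_F$ is $\widetilde a_iz_i=\ell_i^2a_ip_i$, we get
\[
 f_i=\frac{\Tr(PM_iXM_i)}{a_ip_i}=\Tr(H_iL_i),\qquad H_i=\frac{M_i^{1/2}XM_i^{1/2}}{a_i},\quad L_i=\frac{M_i^{1/2}PM_i^{1/2}}{p_i}.
\]
Both $H_i$ and $L_i$ are trace-one PSD matrices, so $0\le f_i\le1$ by Lemma~\ref{lem:density}. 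Strict positivity comes from $X,P\succ0$: the matrices $H_i$ and $L_i$ are then nonzero with the same range, namely $\ran M_i$. Here $M_i\ne0$ because of the trace cutoff $\tau$.

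Similarly, $\mathsf A_{ii}=c\ell_i^2\Tr(M_iXM_iX)=c\widetilde a_i^2\Tr(H_i^2)$, so we set $u_i=\Tr(H_i^2)$. Lemma~\ref{lem:density} then gives $f_i^2\le u_i\le1$. The data $g_i,v_i$ are handled identically using $Y$ and $Q$.

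\emph{Energy domination.} A direct computation shows that the two components of $\mathsf Ty$ are $D_F^{1/2}\mathsf BG$ and $D_G^{1/2}\mathsf AF$. Therefore
\[
 (\mathsf Ty)^*\mathsf C(\mathsf Ty)=\sum_i\frac{z_i(\mathsf BG)_i^2}{c\widetilde b_i}+\sum_i\frac{w_i(\mathsf AF)_i^2}{c\widetilde a_i}.
\]
I will show that the second sum is at most $\Tr(PK_XXK_X)$; by symmetry the first sum is at most $\Tr(QK_YYK_Y)$.

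We have $(\mathsf AF)_i=c\ell_i\Tr(M_iXK_XX)$. Using $k_i\ell_i=\psi_i$, the $i$th term of the second sum equals $c\psi_iq_i\Tr(M_iXK_XX)^2/a_i$. Weighted Cauchy--Schwarz, as in Lemma~\ref{lem:trace}, gives
\[
 \Tr(M_iXK_XX)^2\le\Tr(M_iX)\,\Tr(M_iXK_XXK_XX).
\]
The factor $a_i$ cancels, and summing over $i$ yields
\[
 c\sum_i\psi_iq_i\Tr(M_iXK_XXK_XX)=\Tr\bigl(\Eop_x(Q)\,XK_XXK_XX\bigr)=\Tr\bigl(X\Eop_x(Q)X\cdot K_XXK_X\bigr).
\]
Finally, stationarity \eqref{eq:kkt} gives $X\Eop_x(Q)X=P-\rho X^2\preceq P$, and $K_XXK_X\succeq0$. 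Hence the last trace is at most $\Tr(PK_XXK_X)$, which is the required bound.

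I expect the main obstacle to be the bookkeeping. The scalings $\ell_i,k_i$ must cancel exactly against the weights in $\mathsf C$ and $D_F,D_G$, leaving $\psi_i=k_i\ell_i$ and the weight $c\,q_i$ that rebuilds $\Eop_x(Q)$. One must also pair the $\mathsf A$-term with $P$, the multiplier attached to $X$, rather than with $Q$. Once these indices are aligned, the inequality follows from Cauchy--Schwarz together with $P\succeq X\Eop_x(Q)X$.
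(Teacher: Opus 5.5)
Your proposal is correct and takes essentially the same route as the paper. You use the same trace-one compressions $H_i,L_i$ fed into Lemma~\ref{lem:density}, and the same weighted Cauchy--Schwarz with $D=XK_XX$ combined with $P\succeq X\Eop_x(Q)X$ from stationarity. Your explicit bookkeeping via $k_i\ell_i=\psi_i$ and your range argument for strict positivity of $f_i,g_i$ fill in details that the paper's proof leaves implicit.
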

\begin{proof}
Normalize $M_i^{1/2}XM_i^{1/2}$ and $M_i^{1/2}PM_i^{1/2}$ to trace-one matrices $H_i,L_i$. Take $f_i=\Tr(H_iL_i)$ and $u_i=\Tr(H_i^2)$, and use $Y,Q$ for $g_i,v_i$. Lemma~\ref{lem:density} gives $f_i^2\le u_i\le1$ and the analogous bounds for $g_i,v_i$. Substitution in~\eqref{eq:energy} gives the diagonal identities.

For the last inequality, $P\succeq c\sum_j\psi_jq_jXM_jX$. Apply
$|\Tr(MD)|^2\le\Tr(MX)\Tr(MDX^{-1}D)$ with $D=XK_XX$. This yields
\[
 \Tr(PK_XXK_X)\ge
 \sum_j\frac{(D_G)_{jj}(\mathsf A F)_j^2}{c\widetilde a_j\widetilde b_j}.
\]
The $Y$ inequality supplies the other block. The algorithm never factors $M_i$.
\end{proof}

\begin{lemma}[Second variation with the mixed term retained]\label{lem:variation}
The gradient is $\partial_iR=t_i(z_i-w_i)$. If $\mathsf N$ has disjoint local columns
\[
 n_i=\frac{\psi_i'/\psi_i}{t_i\sqrt{(D_0)_{ii}}}
 \binom{z_i/\sqrt{(D_F)_{ii}}}{w_i/\sqrt{(D_G)_{ii}}},
\]
then every exact response satisfies
\begin{equation}\label{eq:variation}
 \frac12\nabla^2R[v,v]\le y^*\mathsf E y+\xi^*\mathsf N^*y
 -\sum_i\frac{\psi_i'}{\psi_i}(\partial_iR)v_i^2
 +\frac c2\sum_i\psi_i''(b_ip_i+a_iq_i)v_i^2.
\end{equation}
\end{lemma}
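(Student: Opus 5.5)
The plan is an envelope-theorem computation with the spectral level held fixed, then a second differentiation of the tight constraints.

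\emph{Gradient.} Freeze $t$ and let $(X(s),Y(s))$ solve the two tight constraints of~\eqref{eq:kkt} along $x+sv$. By Lemma~\ref{lem:kkt} this is possible, since $\Lop$ is invertible. Then $G(s)=t+\rho\Tr(X(s)+Y(s))$ majorizes $R(x+sv)$ and agrees with it at $s=0$. First-order optimality in $t$ makes the frozen-$t$ derivative coincide with $\nabla R[v]$. Pairing the differentiated constraints with $(P,Q)$ and using stationarity, $\rho\Tr(\dot X+\dot Y)$ equals minus the pairing of $(P,Q)$ with the $x$-derivative of the constraint maps, i.e.\ with $\dot S+\dot\Eop(Y)$ and $-\dot S+\dot\Eop(X)$. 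This gives $\partial_iR=p_i-q_i+c\psi_i'(b_ip_i+a_iq_i)=\alpha_ip_i-\beta_iq_i$. This should be checked against $t_i(z_i-w_i)=\alpha_ip_i-\beta_iq_i$, which holds by~\eqref{eq:localscales}.

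\emph{Second variation.} Since $G$ majorizes $R$ with equality at $0$ and equal slopes, $\nabla^2R[v,v]\le G''(0)$. Differentiate the frozen-$t$ constraints twice. The inverse terms contribute $X^{-1}\ddot XX^{-1}-2X^{-1}\dot XX^{-1}\dot XX^{-1}$, together with $\ddot\Eop(Y)+2\dot\Eop(\dot Y)+\Eop(\ddot Y)$, and analogously for $Y$. Pair with $(P,Q)$. Stationarity~\eqref{eq:kkt} again cancels every $\ddot X,\ddot Y$ term against $\rho\Tr(\ddot X+\ddot Y)$, leaving
\[
 \tfrac12G''=\Tr(PK_XXK_X)+\Tr(QK_YYK_Y)-2\textstyle\sum_i c\psi_i'v_i(p_i\Tr(M_i\dot Y)+q_i\Tr(M_i\dot X))/2\cdot(\pm)+\tfrac c2\sum_i\psi_i''v_i^2(b_ip_i+a_iq_i).
\]
Here the signs are to be tracked. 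The quadratic inverse term, rewritten via $\dot X=XK_XX$, is exactly the energy~\eqref{eq:energy}, namely $y^*\mathsf Ey$. The $\ddot\Eop$ term gives the $\psi''$ sum.

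\emph{Mixed term.} The cross terms $\dot\Eop(\dot Y)$ and $\dot\Eop(\dot X)$ are linear in $v$ and in the responses. Write $\Tr(M_i\dot X)=\Tr(M_iXK_XX)$ and express this through the coefficient $F_i$. Use~\eqref{eq:coeffresponse} to trade $\mathsf AF$ and $\mathsf BG$ for $G+\diag(t_i)v$ and $F-\diag(t_i)v$. Each local term then splits into a part linear in $(F_i,G_i)$ times $v_i$, plus a part in $v_i^2$ with coefficient $\propto\psi_i'(\alpha_ip_i-\beta_iq_i)/\psi_i$. After the rescalings $y=(D_F^{1/2}F,D_G^{1/2}G)$ and $\xi=D_0^{1/2}\diag(t_i)v$, the linear part should be exactly $\xi^*\mathsf N^*y$ with the stated $n_i$, which is supported on pair $i$. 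The $v_i^2$ part is $-(\psi_i'/\psi_i)\partial_iR\,v_i^2$.

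\emph{Main obstacle.} The step I expect to be hardest is this last bookkeeping, which must make the cross terms collapse to exactly the local form $\mathsf N$ plus a gradient multiple. It is important to use the \emph{coefficient-level} identities $\Tr(M_iXK_XX)$ in terms of $F$ and $\mathsf A$ rather than operator-level ones; only the diagonal trace $\Tr(M_iXM_iX)$ is involved, via~\eqref{eq:purities}. If an inequality rather than an equality appears anywhere, it will come from dropping a nonnegative term, or from the majorization $R\le G$. That is why the statement is ``$\le$''.
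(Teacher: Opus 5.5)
Your proposal follows the paper's proof step for step. You freeze $t$, use that the frozen-level objective $\varphi(s)$ (your $G$) majorizes $R(x+sv)$ with contact at $s=0$, and pair the once- and twice-differentiated tight constraints with $(P,Q)$ so that stationarity removes $\dot X,\dot Y$ and then $\ddot X,\ddot Y$. You then rewrite the cross terms via $c\ell_i\Tr(M_i\dot X)=G_i+t_iv_i$ and $ck_i\Tr(M_i\dot Y)=F_i-t_iv_i$, and the only inequality is $\varphi''(0)\ge\nabla^2R[v,v]$; note that the response equation eliminates the Gramians entirely, so no purity enters this lemma.

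The sign you left open is $+$, and so is the pairing in your gradient step, where you wrote ``minus''. With that sign, $\tfrac12\varphi''(0)$ contains $+c\sum_i\psi_i'v_i\{p_i\Tr(M_i\dot Y)+q_i\Tr(M_i\dot X)\}$, which, using $p_i/k_i=z_i/\psi_i$ and $q_i/\ell_i=w_i/\psi_i$, equals exactly $\xi^*\mathsf N^*y-\sum_i(\psi_i'/\psi_i)\partial_iR\,v_i^2$.
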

\begin{proof}
Hold $t$ fixed and solve the two tight constraints along $x+sv$. Their objective $\varphi(s)$ majorizes $R(x+sv)$ and agrees with it at zero. Differentiate the constraints and pair them with $P,Q$. Stationarity cancels the first derivatives of $X,Y$, giving the gradient formula. Differentiating again and using the same cancellation gives
\[
 \tfrac12\varphi''(0)=y^*\mathsf E y
 +c\sum_i\psi_i'v_i\{p_i\Tr(M_i\dot Y)+q_i\Tr(M_i\dot X)\}
 +\tfrac c2\sum_i\psi_i''(b_ip_i+a_iq_i)v_i^2.
\]
Use $ck_i\Tr(M_i\dot Y)=F_i-t_iv_i$ and $c\ell_i\Tr(M_i\dot X)=G_i+t_iv_i$. The middle term becomes the mixed term and gradient term in~\eqref{eq:variation}. Finally $\varphi''(0)\ge\nabla^2R[v,v]$.
\end{proof}

\section{Proof of the completion identity}\label{sec:completion}
We prove the completion lemma stated in Section~\ref{sec:descent}.
That section describes the normalization and the covariance of the resulting directions. Here we expand the two squares that
establish~\eqref{eq:completionconclusion}.

\begin{proof}[Proof of Lemma~\ref{lem:completion}]
Scale $W_i$ by $\chi_i^{-1/2}$ and $D_i$ by $\chi_i^{-1}$; this divides $h_i,U_i,R_i$ by $\chi_i$ and leaves $K_i$ unchanged. Assemble block-diagonal matrices and put
$P_0=JJ^*$, $\mathsf M=(I-\mathsf T)W$. Let $\Pi$ project onto
$\ker((I-P_0)\mathsf M)$, which has dimension $m$, and set $\Pi^\perp=I-\Pi$.
Write $B=D^*W^{-1}$, $U=W^{-1}DW$, $R=B^*\mathsf C^{-1}B/4$,
$Z_1=P_0B+J \mathsf N^*W$, and $K=\bigoplus_iK_iI_2$.
Expanding two Frobenius squares gives the identity
\begin{align}\label{eq:squares}
 &\norm{K^{1/2}\mathsf M\Pi}_F^2
 +\norm{\mathsf C^{1/2}(\mathsf M-W)\Pi^\perp}_F^2
 +\Tr(D\mathsf T)-\Tr(W \mathsf N J^*\mathsf M\Pi)\notag\\
 &=\norm{K^{1/2}\mathsf M\Pi-\tfrac12K^{-1/2}Z_1\Pi}_F^2
 +\norm{\mathsf C^{1/2}(\mathsf M-W)\Pi^\perp
              -\tfrac12\mathsf C^{-1/2}B\Pi^\perp}_F^2\notag\\
 &\hspace{12mm}+\Tr\bigl((\Sym U+R-\tfrac14Z_1^*K^{-1}Z_1)\Pi\bigr)-\Tr R.
\end{align}
For the cross terms use $\mathsf M\Pi=P_0\mathsf M\Pi$, $WB=U^*$, and $\Tr(\mathsf M^*B)=\Tr D-\Tr(D\mathsf T)$.
The last condition in~\eqref{eq:completionconditions}, together with the $2\times2$ adjugate identity, says $\Sym U+R-Z_1^*K^{-1}Z_1/4\succeq I$. Thus the last line of~\eqref{eq:squares} is at least $m-\Tr R$, the normalized sum of the $h_i$.

The trace condition bounds $\Tr(W\mathsf EW)+\Tr(D\mathsf T)$ by that same sum. Also
$\Tr(W\mathsf EW\Pi^\perp)\ge\norm{\mathsf C^{1/2}(\mathsf M-W)\Pi^\perp}_F^2$.
Subtracting and discarding the nonnegative squares proves
$\Tr(W\mathsf EW\Pi)+\Tr(W \mathsf N J^*\mathsf M\Pi)\le\norm{K^{1/2}\mathsf M\Pi}_F^2$.
For an isotropic vector $g\in\R^{2m}$ take $y=W\Pi g$ and $\xi=J^*\mathsf M\Pi g$. These are exact responses, and the last inequality is~\eqref{eq:completionconclusion}. The covariance is nonzero since $\mathsf M$ is invertible and $\ran\Pi$ has dimension $m$.
\end{proof}

\section{The scalar certificate and the approximation constant}\label{sec:certificate}
This appendix specifies the weight function and metric used in Theorem~\ref{lem:core}.
Throughout, $c=567/200$ and $\varpi=1/50000$. The formulas for the weight function and metric are fixed before rounding starts; verifying the scalar
inequality they must satisfy is the computer-assisted part of the proof.

\subsection{The weight function and its shape}
For $z=x^2$, $r=1-z$, and $s=\sqrt r$, set $\psi(x)=sP_b(z)$, where
\begin{equation}\label{eq:profile}
 \begin{split}
 P_b(z)={}&1-0.04912z-0.05594z^2-0.02446z^3\\
          &-0.19169z^4+0.36085z^5-0.32665z^6.
 \end{split}
\end{equation}
Every terminating decimal specifying a coefficient in this paper is an exact rational. Define
\[
 D_b=P_b-2rP_b',\qquad
 N_b=P_b-2(1-3z)rP_b'-4zr^2P_b''.
\]
Then $\psi'=-xD_b/s$ and $-\psi''=N_b/s^3$. The exact univariate checks give, on $[0,1]$,
\begin{equation}\label{eq:shape}
 \begin{gathered}
 3/5\le P_b,D_b,N_b\le2,\quad P_b\le1,\quad P_b'\le0,\quad P_b^2\ge1-z,\\
 P_b^2\ge16zr^2(P_b')^2,\quad
 \tfrac32 V_b\ge N_bP_b,\quad
 2P_bN_b\ge zD_b^2,\quad
 8V_b^2\ge4N_bP_b^3,\qquad V_b=P_b^2-4zr^2(P_b')^2.
 \end{gathered}
\end{equation}
In particular $1-x^2\le\psi(x)\le\sqrt{1-x^2}$ and $\psi$ is strictly concave. We certify these statements by checking polynomial positivity. On an
interval, expand $Q(z)=\sum a_kz^k$ in the Bernstein basis, with coefficients
$b_j=\sum_{k\le j}a_k\binom jk/\binom nk$. Nonnegative coefficients imply
$Q\ge0$. The supplied verifier reconstructs the polynomials
from~\eqref{eq:profile}. Only the lower bound on $N_b$ and the bound involving
$zD_b^2$ require dyadic subdivision, into four and two intervals respectively.

Since $(1-x)\psi^{\prime}(x)\to0$ at the endpoint, integration by parts in Section~\ref{sec:potential} is valid.

\subsection{A single light coordinate}
Suppress its index and first assume $0\le x<1$. In this subsection $a,b$ denote $\Tr(M_iX),\Tr(M_iY)$; all other scalar quantities are local. Put
\begin{equation}\label{eq:lightbox}
 \begin{gathered}
 U=c\psi a,\quad V=c\psi b,\quad
 \delta=-2xrP_b'/P_b,\quad q=x+\delta=xD_b/P_b,\\
 A=\frac{U}{r+qU},\quad B=\frac{V}{r-qV},\quad
 \phi=\frac A{A+B},\quad {\bar\phi}=1-\phi,\quad\theta=\phi{\bar\phi}.
 \end{gathered}
\end{equation}
Lightness is exactly $0<A<(1+\delta)^{-1}$ and $0<B<(1-\delta)^{-1}$.
Also $\alpha=(1+qB)^{-1}$, $\beta=(1-qA)^{-1}$, and $\phi=\widetilde a/(\widetilde a+\widetilde b)$. The shape bounds imply $0\le\delta\le1/2$. Define
\begin{equation}\label{eq:certificatebase}
 \begin{gathered}
 p=(N_b/P_b)AB,\quad B_0=(N_b/P_b)(A+B)^2=p/\theta,\quad
 \zeta=zD_b^2/(P_bN_b),\quad J_0=q(A+B),\\
 \kappa=\frac{cN_b}{2s^3}(1-qA)(1+qB),\quad
 C=2\kappa/p=(c\widetilde a\widetilde b)^{-1},\quad
 m_0=2\sqrt{\zeta/p}.
 \end{gathered}
\end{equation}
Here $p\le3/2$, $B_0\le8$, $\zeta\le2$, and $J_0^2=\zeta B_0$. At $z_i=w_i$, the last two terms of~\eqref{eq:variation} sum to $-\kappa\xi_i^2$.

In the orthonormal forcing/mixed basis
$j=(\sqrt\phi,-\sqrt{{\bar\phi}})^*$, $k=(\sqrt{{\bar\phi}},\sqrt\phi)^*$,
the columns in Lemma~\ref{lem:variation} are $(1,0)^*$ and $(0,-\kappa m_0)^*$. Choose
\begin{equation}\label{eq:multipliercoefficients}
 \begin{gathered}
 u=1.91235+pz(-0.89868+0.31535z),\qquad
 v=1.00261+pz(-0.30669-0.03467z),\\
 e=0.60598+0.20181z,\quad \eta_0=1-pe,\quad
 \omega=1.10948-0.76493z,\quad w=\theta\omega,\\
 R_0=2u(\eta_0-1)-4\eta_0-u\eta_0p,\qquad
 \gamma=R_0/(4+vB_0/\omega),\\
 W_0=\diag(1,\sqrt w),\qquad
 D=\kappa\begin{pmatrix}u&m_0\gamma\\\eta_0m_0w&v\end{pmatrix},
 \qquad B_D=D^*W_0^{-1}.
 \end{gathered}
\end{equation}
All coefficients in this subsection are local. Before normalization, the matrices used in the completion lemma are $[j,k]W_0[j,k]^*$ and $[j,k]D[j,k]^*$.

\subsection{The local matrix inequality}
Using the support function $\mathcal H$ from Lemma~\ref{lem:density}, set
\begin{equation}\label{eq:tracecharge}
 \begin{aligned}
 A_f&=\phi+{\bar\phi}w,& A_g&={\bar\phi}+\phi w,\\
 e_f&=\tfrac12\phi(u-v)p-\phi^2J_0\gamma+\eta_0J_0\theta w,&
 e_g&=\tfrac12{\bar\phi}(u-v)p+{\bar\phi}^2J_0\gamma-\eta_0J_0\theta w,\\
 h&=\mathcal H(A_f,e_f)+\mathcal H(A_g,e_g).
 \end{aligned}
\end{equation}
At balance, the local interaction block in the original basis is
$C^{-1}\left(\begin{smallmatrix}0&\sqrt{{\bar\phi}/\phi}\,v_i\\
\sqrt{\phi/{\bar\phi}}\,u_i&0\end{smallmatrix}\right)$.
The local trace bound is
$A_ff_i+A_gg_i-e_fu_i-e_gv_i\le h$, by direct multiplication and
Lemma~\ref{lem:purity}. We use both support functions and do not assume that either purity attains
its rank-one value.

The mixed row is $d^*=\kappa(u,(\eta_0-1)m_0\sqrt w)$. With coefficient $K_i=\kappa$, the residual in~\eqref{eq:completionconditions} is diagonal:
\begin{equation}\label{eq:residual}
 \Sym(W_0^{-1}DW_0)-hI_2
 -\adj\!\left(\frac{B_D^*B_D}{4C}\right)-\frac{dd^*}{4\kappa}
 =\diag(\kappa F_A-h,\kappa F_B-h),
\end{equation}
where
\begin{equation}\label{eq:FAFB}
 \begin{split}
 F_A&=u-u^2/4-\eta_0^2\zeta w/2-v^2B_0/(8\omega),\\
 F_B&=v-u^2p/8-\zeta\gamma^2/2-e^2p\zeta w.
 \end{split}
\end{equation}
To verify the identity, expand the two quadratic terms
in~\eqref{eq:residual} and use $\kappa/C=p/2$ and $m_0^2=4\zeta/p$. The off-diagonal entry vanishes because $\gamma=wR_0/(4w+pv)$. The remaining diagonal entries are~\eqref{eq:FAFB}. A separate symbolic check verifies these identities for unrestricted
parameters. The interval verification does not enter that check.

\begin{lemma}[Uniform scalar inequality]\label{lem:scalar}
For the exact coefficients above, throughout the light domain,
\begin{equation}\label{eq:certified}
 cF_A-\frac{h}{\kappa/c}>\frac1{2000},\qquad
 cF_B-\frac{h}{\kappa/c}>\frac1{2000}.
\end{equation}
In particular $\kappa F_A-h,\kappa F_B-h>1/20000$.
Also $1/128\le h\le34$, $\kappa>1/4$, and $\norm d^2<7\kappa^2$.
\end{lemma}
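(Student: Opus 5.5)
The plan is to reduce the lemma to a statement about explicit rational functions on a compact box, and then verify that statement with an exact subdivision certificate. This is the computer-assisted step announced in the introduction.

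\textbf{Step 1: parametrize the light domain.} All quantities in the lemma are functions of three variables: $z\in[0,1)$ and the light coordinates $(A,B)$, which satisfy $0<A<(1+\delta)^{-1}$ and $0<B<(1-\delta)^{-1}$. Here $\delta,q,N_b,P_b,D_b$ depend only on $z$. Indeed, $\phi,\theta,p,B_0,\zeta,J_0,m_0$ follow from~\eqref{eq:certificatebase}, and $\kappa$ follows after we substitute $s=\sqrt{1-z}$.

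The quantities $\kappa$ and $m_0$ involve square roots. To control them, I would use the bounds in~\eqref{eq:shape}. These give $p\le3/2$, $B_0\le8$, $\zeta\le2$ and $0\le\delta\le1/2$, so the domain is bounded. The squared quantities $m_0^2=4\zeta/p$ and $J_0^2=\zeta B_0$ are rational.

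The expressions $F_A$ and $F_B$ in~\eqref{eq:FAFB} contain only $\zeta$, $\gamma^2$, $w$, $p$, $B_0$, $\omega$, $u$, $v$ and $e$. They are therefore rational in $(z,A,B)$, once we express everything through $p$ and $\theta$. The quantity $h$ is a sum of two values of $\mathcal H$. Each value is piecewise, either $a-e$ or $a^2/(4e)$, so the inequalities split into finitely many rational cases. On each piece, the sign of $2e_\bullet-A_\bullet$ selects the branch.

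The term $h/(\kappa/c)$ is the delicate one. It requires a lower bound on $\kappa$. Near $z\to1$ we have $\kappa\sim N_b/s^3\to\infty$, which helps. Near $A\to(1+\delta)^{-1}$ the factor $1-qA\ge1-q/(1+\delta)$ stays positive, because $q=x+\delta<1+\delta$.

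\textbf{Step 2: prove the auxiliary bounds.} I would first prove $1/128\le h\le34$, $\kappa>1/4$ and $\|d\|^2<7\kappa^2$ by direct monotone estimates on the box. For example, $h\ge\mathcal H(A_f,e_f)\ge$ the value at $f=u=1$ or at the interior optimum, and $A_f\ge\min(1,w)$. These estimates also yield compactness. Then $cF_A-ch/\kappa$ extends continuously to the closure, except possibly where $s\to0$. There, $c h/\kappa\to0$ and $F_A,F_B$ tend to explicit polynomial limits that can be checked separately.

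\textbf{Step 3: certify the main inequality, the expected obstacle.} This step verifies~\eqref{eq:certified} with margin $1/2000$ uniformly over the box. I would reparametrize to remove the moving boundary, setting $A=\hat A/(1+\delta)$, $B=\hat B/(1-\delta)$ and $\hat A,\hat B\in[0,1]$. I would also replace $s$ by a variable $s\in[0,1]$ with $z=1-s^2$, so that every quantity becomes rational in $(s,\hat A,\hat B)$. Then, on dyadic boxes, I would bound each rational expression using Bernstein expansions of numerator and denominator, exactly as for~\eqref{eq:shape}. The verification stops once the certified lower bound exceeds $1/2000$, and boxes where the $\mathcal H$-branch is ambiguous are handled by taking the minimum over both branches. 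This is valid because $\mathcal H$ is the maximum of a linear function and therefore equals the larger branch.

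The main difficulty is the corners where $\theta\to0$ ($\phi\to0$ or $1$) together with $p\to0$. There $\gamma$, $w$ and $J_0$ degenerate, and the margin is thinnest. I would treat these corners analytically by expanding to first order in $\theta$. This shows $F_A\to u-u^2/4-v^2B_0/(8\omega)$ and $F_B\to v-u^2p/8$, with $h$ bounded by $\max(1,w)$-type limits. The subdivision then only needs to cover a compact region away from them.

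Finally, since $\kappa/c>0$, multiplying~\eqref{eq:certified} through by $\kappa/c$ gives $\kappa F_\bullet-h>\kappa/2000c\cdot c/\ldots$. Concretely, $\kappa F_\bullet-h>\kappa/(2000c)>1/(4\cdot2000\cdot 2.835)>1/20000$, using $\kappa>1/4$. This yields the stated corollary.
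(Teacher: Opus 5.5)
Your plan is the paper's plan: rescale the light box to a cube via $\widehat A=A(1+\delta)$, $\widehat B=B(1-\delta)$, certify \eqref{eq:certified} by exact subdivision, and derive the auxiliary bounds by hand. However, several concrete steps fail as written.

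\textbf{The final deduction and the bound on $h$.} The closing deduction is numerically false: $\kappa>1/4$ only gives $\kappa/(2000c)>1/22680$, and $1/22680<1/20000$. You need the stronger bound $\kappa/c\ge1/10$, which follows from \eqref{eq:shape}. Use $N_b\ge3/5$, $1+qB\ge1$, $1-qA>(1-x)/(1+\delta)\ge\tfrac23(1-x)$, and $(1-x)/s^3=(1-x)^{-1/2}(1+x)^{-3/2}\ge3/4$. Then $\kappa F_\bullet-h=(\kappa/c)\bigl(cF_\bullet-h/(\kappa/c)\bigr)>1/20000$, and $\kappa>c/10>1/4$ comes for free. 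Your sketch of $h\ge1/128$ also fails, because $A_f\ge w$ degenerates as $\phi\to0$. One needs both terms, together with $A_f+A_g=1+w\ge1$, $|e_f|,|e_g|<16$, and the test point $f=A_f/32$, $u=f^2$ in \eqref{eq:support}; this gives $h\ge(A_f^2+A_g^2)/64\ge1/128$. In addition, $\mathcal H$ is not "the larger branch": for $0<2e\le a$ it equals $a-e\le a^2/(4e)$. Taking the worse of the two branches is still valid, but only because $\mathcal H$ equals one of them.

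\textbf{The parametrization.} More importantly, your variables do not produce functions a box certificate can handle.
\begin{itemize}
\item With $z=1-s^2$, the quantity $x=\sqrt{1-s^2}$ remains in $\delta$, in $q$, in $\widehat A,\widehat B$, and in $J_0=q(A+B)$. $J_0$ enters $e_f,e_g$ linearly, not squared. The paper rationalizes $x$ and $s$ together via $x=2q_0/(1+q_0^2)$, $s=(1-q_0^2)/(1+q_0^2)$.
\item Worse, $\phi=A/(A+B)$ is $0/0$ at $\widehat A=\widehat B=0$ and takes every value in $(0,1)$ arbitrarily close to that point. Bounding numerator and denominator on the corner box therefore fails at every subdivision depth. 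The paper blows up this point with the two charts $(\widehat A,\widehat B)=(M_0,M_0z_0)$ and $(M_0z_0,M_0)$ over $(q_0,M_0,z_0)\in[0,1]^3$, in which $M_0$ cancels from $\phi$.
\item The corners you single out ($\theta\to0$ with $\widehat A+\widehat B>0$) are not singular. Your limit for $F_B$ is also wrong: as $\theta\to0$, $\gamma\to-4/(4+vB_0/\omega)\ne0$, so the term $-\zeta\gamma^2/2$ survives.
\item The growth $\kappa\sim N_b/s^3$ is not uniform near $s=0$. As $\widehat A\to1$, $1-qA\to(1-x)/(1+\delta)\sim s^2/2$, so $\kappa$ grows only like $1/s$. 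Checking limits at $s=0$ therefore certifies no neighborhood. The paper instead uses the uniform bound \eqref{eq:endpointcertificate}, obtained from $d_A\ge(1-x)/(1+\delta)$. It vanishes linearly at $q_0=1$, so the same subdivision also covers the endpoint.
\end{itemize}
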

\begin{proof}[Exact certificate and elementary bounds]
The subdivision certificate and integer verifier (the files named \texttt{round2}) will be made available on GitHub~\cite{certificate}. Rationalize $x=2q_0/(1+q_0^2)$ and $s=(1-q_0^2)/(1+q_0^2)$. Write
$\widehat A=A(1+\delta)$, $\widehat B=B(1-\delta)$ and use the two charts
$(\widehat A,\widehat B)=(M_0,M_0z_0),(M_0z_0,M_0)$.
After cancelling the common factor in $\phi$, these formulas extend to $M_0=0$. The two cubes $(q_0,M_0,z_0)\in[0,1]^3$ cover the entire domain.

The verifier reconstructs every box from the two roots and checks~\eqref{eq:certified} with outward-rounded integer intervals. Its denominator is $2^{48}$; division and overflow checks are part of acceptance. Every leaf must satisfy the stated strict margin, and both roots must be exhausted. The generator's choices play no role in acceptance. The planned release~\cite{certificate} will also include verification statistics and cross-checks.

The verification also covers the endpoint. Put
$d_A=(1-\widehat A)+\widehat A(1-x)/(1+\delta)$, so
$\kappa/c=N_b(1+qB)d_A/(2s^3)$. The inequality $d_A\ge(1-x)/(1+\delta)$ gives
\begin{equation}\label{eq:endpointcertificate}
 \frac1{\kappa/c}\le
 \frac{2(1-q_0)(1+q_0)^3(1+\delta)}
 {(1+q_0^2)^2N_b(1+qB)}.
\end{equation}
This upper bound vanishes continuously at $q_0=1$. The verifier uses it together with the direct quotient where that quotient has a positive denominator. Midpoint-translated Horner evaluation bounds the polynomial ranges. The independent univariate verifier supplies~\eqref{eq:shape}.

For the remaining claims,~\eqref{eq:shape} gives $\kappa/c\ge1/10$, which converts~\eqref{eq:certified} to the stated residual margin. The explicit coefficients imply
$1/2\le u\le2$, $2/5\le v\le11/10$, $0<e<81/100$, and $1/4\le\omega\le5/4$.
Hence $w\le5/16$, $|\eta_0|\le1$, $|R_0|<12$, $|\gamma|<3$, and $|e_f|,|e_g|<16$. Since $A_f+A_g\ge1$ and $A_f,A_g\le1$, testing $f=A_f/32$ and $g=A_g/32$ in~\eqref{eq:support} gives $h\ge(A_f^2+A_g^2)/64\ge1/128$; $h\le34$ is immediate. Finally
$\norm d^2/\kappa^2=u^2+4e^2p\zeta w<7$.
\end{proof}

\subsection{From balance to the actual state}\label{sec:robustness}
The strict margin above lets us perturb away from exact balance. Lower
the coefficient to $K_i=\kappa_i-\varpi$ and increase the trace bound to $h_i=h+\varpi/10$. The mixed quadratic term increases by at most
$7\kappa_i\varpi/[4(\kappa_i-\varpi)]<2\varpi$.
The balanced residual therefore remains above $1/20000-2.1\varpi=8\cdot10^{-6}$. With $\chi_i=h_i+\norm{B_D}_F^2/(4C_i)$, the algorithm's metric is
\begin{equation}\label{eq:normalizedmetric}
 W_i=\chi_i^{-1/2}[j,k]W_0[j,k]^*.
\end{equation}
In the proof of the completion lemma, the multiplier $D_i$ is simultaneously
divided by $\chi_i$. This multiplier enters only the proof; the algorithm
does not use it to select directions.

To check robustness, fix a coordinate's primal data and local multipliers, and set $a=\widetilde a_i$, $b=\widetilde b_i$, $\varrho=w_i/z_i$, and $d_0=(\psi_i'/\psi_i)/t_i$. In the original pair basis the actual forcing and mixed columns, and the actual curvature coefficient, are
\begin{equation}\label{eq:actualratio}
 \begin{gathered}
 j(\varrho)=\frac{(\sqrt a,-\sqrt{b\varrho})^*}{\sqrt{a+b\varrho}},\qquad
 n(\varrho)=\frac{d_0(a^{-1/2},\sqrt\varrho\,b^{-1/2})^*}{\sqrt{a+b\varrho}},\\
 \kappa_b(\varrho)=\kappa\frac{b+a\varrho}{a+b\varrho}
                 +d_0\frac{1-\varrho}{a+b\varrho},\qquad
 \mathsf T_{ii}(\varrho)=\frac1C
 \begin{pmatrix}0&\sqrt{b/a}\,\varrho^{-1/2}v_i\\
 \sqrt{a/b}\,\varrho^{1/2}u_i&0\end{pmatrix}.
 \end{gathered}
\end{equation}
The last two terms of~\eqref{eq:variation} are $-\sum_i\kappa_b(\varrho_i)\xi_i^2$. Put $n_0=|d_0|/\sqrt{ab}$ and $T_0=C^{-1}\max(\sqrt{a/b},\sqrt{b/a})$. On $[1/2,3/2]$, direct differentiation gives
$\norm{j'}\le1$, $\norm{n'}\le2n_0$, $\norm{\mathsf T_{ii}'}_F\le2T_0$, and $|\kappa_b'|\le4\kappa+4n_0$.
Thus a common Lipschitz bound for the trace expression, matrix residual, and curvature coefficient is
\[
 L_i=1+4\norm{D}_FT_0+
 \frac{4(\norm{B_D}_F+2\norm{W_0}_Fn_0)^2}{K_i}
 +4\kappa+4n_0.
\]
With these fixed multipliers choose
$\zeta_i=\min(1/2,\varpi/(100L_i))$ and
$g_0(x)=\tfrac12\min_i t_i\min(z_i,w_i)\zeta_i$.
Since $\partial_iR=t_iz_i(1-\varrho_i)$, a gradient below $g_0(x)$ gives $|1-\varrho_i|\le\zeta_i/2$. The resulting perturbation is at most $\varpi/200$, leaving the completion inequalities valid and $\kappa_b(\varrho_i)-K_i\ge\varpi/2$. We use the purity bounds uniformly and do not differentiate the purities.

Apply Lemma~\ref{lem:completion} and normalize its covariance as
in~\eqref{eq:inducedcovariance}. Because $t_i^2\ge1/4$, we obtain
\begin{equation}\label{eq:actualcurvature}
 \frac1{2m}\sum_j\nabla^2R[h^{(j)},h^{(j)}]
 \le-\frac\varpi8\min_i(D_0)_{ii}
 \le-a_x,\qquad a_x=\frac\varpi{10}\min_i(D_0)_{ii}.
\end{equation}
For $x_i<0$ exchange its two response components and replace $(x_i,v_i)$ by $(-x_i,-v_i)$. The even weight function, response equation, and quadratic form are invariant under this orthogonal relabeling. The same certificate applies. The uniform bounds on $g_0,a_x$ in Appendix~\ref{sec:precision} complete Theorem~\ref{lem:core}.

\subsection{Parameter choices and the resulting constant}\label{sec:constantaccount}
The strict margin above is independent of the input dimension. We now choose
small parameters to cover the errors from preprocessing, endpoint tests, and
rational scaling. With $M$ fixed at its value immediately after dependence
elimination, take
\begin{equation}\label{eq:parameters}
 \rho=\frac1{10^4d},\qquad
 \tau=\sigma=\delta_E=\frac1{10^4M},\qquad
 V\le b^2\le(1+10^{-6})V.
\end{equation}
The three coordinate error allowances sum to at most $3\cdot10^{-4}$.
Their exact sizes serve only to make the following quantitative statement
reproducible.

\begin{theorem}[Sharper quantitative form]\label{thm:sharp}
Under the hypotheses of Theorem~\ref{thm:main}, when $V>0$ the algorithm can
return a signing with discrepancy less than $\Cstar\sqrt V$, where
\begin{equation}\label{eq:exactconstant}
 \Cstar=3.367912113,\qquad \Czero=\sqrt{567/50}.
\end{equation}
Any rational $0<\epsilon\le1$ permits coefficient $\Czero+\epsilon$ in time polynomial also in $\epsilon^{-1}$.
\end{theorem}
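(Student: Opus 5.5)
The plan is to derive Theorem~\ref{thm:sharp} from the rounding account~\eqref{eq:roundingaccount} by evaluating its right-hand side with the parameters~\eqref{eq:parameters}. The qualitative ingredients are already in place. Theorem~\ref{lem:core} gives descent at every light state with $R<4$. Section~\ref{sec:paired} gives acceptance at an inverse-polynomial scale and a polynomial number of moves. Theorem~\ref{thm:complexity} gives the bit bound. What remains is to check three things: that $R$ stays below $4$ along the run, that the numerical constant is below $\Cstar$, and that replacing the allowances by $\epsilon$-dependent ones yields $\Czero+\epsilon$.

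First, the invariant. By~\eqref{eq:initial} with $S(\bar x)=0$, the initial potential is at most $2\sqrt{c+2d\rho}=2\sqrt{567/200+2\cdot10^{-4}}<3.37$. Accepted local moves decrease $R$ strictly. Each endpoint operation (small-trace freezing, boundary freezing, endpoint test) raises $R$ by at most $\tau$, $\sigma$ or $\delta_E$. Each coordinate is frozen at most once, so the total increase is at most $M(\tau+\sigma+\delta_E)\le3\cdot10^{-4}$. Hence $R<4$ at every accepted state, and Theorem~\ref{lem:core} applies throughout. Since $\norm{S}\le R$ at termination, and frozen small-trace inputs and the normalization by $b$ are accounted for, the final discrepancy satisfies~\eqref{eq:roundingaccount}.

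Second, the constant. With $b^2\le(1+10^{-6})V$, the bound~\eqref{eq:roundingaccount} becomes
\[
 \sqrt{1+10^{-6}}\bigl(2\sqrt{567/200+2\cdot10^{-4}}+3\cdot10^{-4}\bigr)\sqrt V .
\]
I would bound this by exact rational arithmetic, squaring to avoid radicals. The cross-check is $\sqrt{567/50}\approx3.36749$. The regularization adds about $2\cdot10^{-4}/\sqrt{11.34}\approx 6\cdot10^{-5}$, the allowances add $3\cdot10^{-4}$, and the scale factor adds about $1.7\cdot10^{-6}$. Altogether the coefficient is about $3.36785$, which is below $3.367912113$. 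The main obstacle is making this margin rigorous, since it is tight at the level of $10^{-5}$. I would certify it by comparing rational upper bounds for each square root: replace $\sqrt{567/200+2\cdot10^{-4}}$ by a rational $r$ with $r^2\ge$ the argument, check the final inequality as a polynomial inequality in rationals, and choose $b$ as a rational square-root bracket of $V$, which costs polynomial bits.

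Third, the $\epsilon$-version. I would take $\rho=\epsilon^2/(100d)$, $\tau=\sigma=\delta_E=\epsilon/(10M)$ and $b^2\le(1+\epsilon^2/100)V$. Then the same account gives a coefficient of at most $\Czero+\epsilon$. The potential stays below $\Czero+1+3\cdot10^{-1}<4$, so Theorem~\ref{lem:core} still applies. Its constants $\lambda_*,g_*,a_*$ and the derivative bounds are polynomial in $M$, $d$ and $1/\rho$ (Appendix~\ref{sec:precision}), and the cutoffs are inverse-polynomial in $\epsilon$. Consequently the number of accepted moves, the backtracking depth $J_{\max}$ and the working precision are polynomial also in $\epsilon^{-1}$, and Theorem~\ref{thm:complexity} gives the stated running time.
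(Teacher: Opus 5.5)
Your route is the paper's. You evaluate the rounding account~\eqref{eq:roundingaccount} at the parameters~\eqref{eq:parameters} and certify~\eqref{eq:finalconstant} by an exact rational comparison. The same bookkeeping keeps the normalized potential below $4$. For the $\Czero+\epsilon$ form you shrink the allowances with $\epsilon$ and make the scaling error quadratic in $\epsilon$. The structure is sound. However, your numerical cross-check, which is the only real computation in this theorem, is wrong, and it misjudges the margin by about five orders of magnitude.

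The regularization contributes $2\sqrt{c+2\cdot10^{-4}}-2\sqrt c\approx 2\cdot10^{-4}/\sqrt c\approx1.19\cdot10^{-4}$, not $2\cdot10^{-4}/\sqrt{11.34}\approx5.9\cdot10^{-5}$. Since $\sqrt{11.34}=2\sqrt c$, you computed the increment of $\sqrt{c+2d\rho}$ rather than of $2\sqrt{c+2d\rho}$. Redoing the sum:
\begin{itemize}
\item $2\sqrt{2.8352}\approx3.3676104288$;
\item adding $3\cdot10^{-4}$ gives $3.3679104288$;
\item the factor $\sqrt{1+10^{-6}}$ raises this to about $3.36791211273$.
\end{itemize}
So the left side of~\eqref{eq:finalconstant} lies below $\Cstar=3.367912113$ by only about $3\cdot10^{-10}$. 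In effect, $\Cstar$ is that value rounded up in the ninth decimal. The inequality is still true, so your exact rational certification goes through. But the rational upper brackets for $\sqrt{2.8352}$ and $\sqrt{1+10^{-6}}$ (or the squared comparison) must be accurate to better than about $10^{-10}$. Brackets chosen for the $10^{-5}$ margin you anticipated would make the check fail.

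There is a smaller slip of the same kind in your $\epsilon$-version: $\Czero+1+3\cdot10^{-1}\approx4.67$ is not below $4$. What you need, and what holds with your choices $\rho=\epsilon^2/(100d)$ and $\tau=\sigma=\delta_E=\epsilon/(10M)$, is $2\sqrt{c+\epsilon^2/50}+0.3\epsilon\le2\sqrt{2.855}+0.3<3.7$.
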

\begin{proof}
The main proof gives~\eqref{eq:roundingaccount}. The choices above satisfy
\begin{equation}\label{eq:finalconstant}
 \left(2\sqrt{c+2d\rho}+M(\tau+\sigma+\delta_E)\right)
 \sqrt{1+10^{-6}}<\Cstar.
\end{equation}
This is an exact rational square comparison, checked by the supplementary programs~\cite{certificate}.
It also keeps every accepted normalized state below $4$.
For the limiting form, replace the fixed small error parameters by sufficiently
small rational multiples of $\epsilon$, and make the relative scaling error
quadratic in $\epsilon$. The initial coefficient tends to $2\sqrt c=\Czero$.
All conditioning and precision bounds in Appendix~\ref{sec:precision} remain
polynomial in the inverse parameters, giving the stated dependence.
\end{proof}

\Needspace{14\baselineskip}
\section{Optimization and finite precision}\label{sec:implementationdetails}
We give the quantitative estimates needed to prove that the main algorithm
runs in polynomial time. All constants here refer to the weight function of
Appendix~\ref{sec:certificate} and the parameters in~\eqref{eq:parameters}.

\subsection{Preprocessing and normalization}\label{sec:preprocessing}
We implement lines~\ref{line:preprocess}--\ref{line:initialize} of
Algorithm~\ref{alg:main}.
Identify each Hermitian input with a rational column in $\R^p$. Maintain an independent active list $B$ and its left Moore--Penrose inverse $D$. For a new column $a$, compute $u=Da$ and $r=a-Bu$. If $r\ne0$, append $a$ and update
\[
 D_{\rm new}=\binom{D-uw^*}{w^*},\qquad w^*=r^*/(r^*r).
\]
If $r=0$, the direction $(-u,1)$ preserves the matrix sum; move until some coordinate reaches an endpoint. If an old basis column $j$ freezes and the new one remains active, exchange them using
$D_{\rm new}=(I-(u-e_j)e_j^*/u_j)D$.
Deleting a frozen column changes every surviving row $d_i^*$ to
$d_i^*-\ip{d_i}{d_j}d_j^*/\norm{d_j}^2$.
These identities follow by checking the left-inverse and row-space
conditions. Each update costs $O(p^2)$ operations, and there are $O(N)$ events. The matrices in the active list remain original inputs. The dependence coefficients are rational functions of input submatrices.
Successive state updates increase the bit lengths of common denominators
additively. Fraction-free arithmetic gives polynomial bit complexity and $O(Np^2)$ field operations.

For scaling, form $F=\sum_i\Tr(A_i)A_i$ in $O(Nd^2)$ operations. If $F\ne0$, then $\Tr(F)/d\le V\le\Tr(F)$. Bisection with rational positive-semidefiniteness tests finds $V\le v\le(1+\xi)V$ in $O(\log(d/\xi))$ tests, each costing $O(d^3)$. Take $\xi=4\cdot10^{-7}$ and bracket $\sqrt v$ by a rational $b$ with $v\le b^2\le(1+\xi)v$. This proves the stated scale cost (counting scalar square roots as elementary operations) and the required relative error. Rational root bracketing has polynomial bit cost. For the limiting-constant variant, add $O(\log(1/\epsilon))$ bisections.

\subsection{The inverse recursion}\label{sec:oracle}
We prove the gap and approximation estimates for~\eqref{eq:inverseiteration}.
The notation $\beta,F,X_t,Y_t$ is as in Section~\ref{sec:complexity}.
Every feasible pair dominates the monotone iterates; their limit is therefore
the least feasible pair.

\begin{lemma}[Gap and root-resolvent approximation]\label{lem:oracle}
At an accepted state the minimizer has $t_*<4$ and
$t_*-\beta\ge g:=\rho^2/2048$.
For $\delta=t-\beta>0$, $\beta>0$, and
$q=(t-\sqrt{t^2-\beta^2})/\beta$, the iterates satisfy
\begin{equation}\label{eq:chebtails}
 \begin{aligned}
 \norm{X_t-X_h},\norm{Y_t-Y_h}&\le4\delta^{-1}q^{2h+2},\\
 \norm{U_t-U_h},\norm{V_t-V_h}&\le4(2h+4)\delta^{-2}q^{2h+2},
 \end{aligned}
\end{equation}
where $U_h=-\partial_tX_h$, $V_h=-\partial_tY_h$. Also
$F''(t)\ge4\rho d/(t+\beta)^3$. If $\beta=0$, depth zero is exact.
\end{lemma}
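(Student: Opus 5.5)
The proof has four parts, handled in this order: the structure of the least feasible pair, the gap estimate, the tail bounds \eqref{eq:chebtails}, and the convexity bound on $F''$. Throughout I use Lemma~\ref{lem:kkt}. Since $\Eop$ is positive, the map $(X,Y)\mapsto\bigl((tI-S-\Eop(Y))^{-1},(tI+S-\Eop(X))^{-1}\bigr)$ is order preserving wherever it is defined. So \eqref{eq:inverseiteration} is monotone and converges to the least feasible pair $(X_t,Y_t)$. That pair is real analytic in $t$ on $(\beta,\infty)$ and decreasing in $t$.

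Differentiating the tight constraints gives $U_t=-\partial_tX_t=X_t\bigl(I+\Eop(V_t)\bigr)X_t$ and the analogous formula for $V_t$. Equivalently, $\Lop(U_t,V_t)=(I,I)$ with $\Lop$ from \eqref{eq:L} evaluated at $(X_t,Y_t)$.
\begin{itemize}
\item \textbf{Degenerate case.} If $\beta=0$, the feasible levels extend down to $0$. I expect this forces $\Eop\equiv0$ on the relevant range (all active weights vanish). Then $X_0,Y_0$ already solve the constraints and depth zero is exact. I will check this directly from the definition of $\beta$.
\item \textbf{Convexity.} Here $F''(t)=\rho\Tr(\partial_t^2X_t+\partial_t^2Y_t)$. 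Differentiating $U_t=X_t(I+\Eop(V_t))X_t$ once more gives $\partial_t^2X_t\succeq2X_t^3$ (the remaining terms are positive because $\Eop$ and $\Lop^{-1}$ preserve order), and likewise for $Y_t$. Since $X_t\succeq(tI-S)^{-1}$ and $\norm S\le\beta$, every eigenvalue of $X_t$ and $Y_t$ is at least $1/(t+\beta)$. Hence $F''\ge4\rho d/(t+\beta)^3$.
\end{itemize}

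\textbf{Gap estimate.} At the minimizer, $F'(t_*)=0$ means $\rho\Tr(U_{t_*}+V_{t_*})=1$, and $t_*\le R<4$. Comparing with \eqref{eq:kkt}, $(U_{t_*},V_{t_*})$ is $\Lop^{-1}(I,I)$ at the optimizer. Lemma~\ref{lem:kkt} then gives $\norm{U_{t_*}},\norm{V_{t_*}}\le\rho^{-1}$ and $\norm{X_{t_*}},\norm{Y_{t_*}}\le\rho^{-1/2}$.

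Next I integrate backwards from $t_*$. The plan is a Riccati-type comparison: as $t$ decreases, $\norm{U_t}$ grows at most like $C\norm{U_t}^2$, with $C$ an absolute constant times the bounds above. Such a scalar comparison cannot blow up before $t_*-\bigl(C\norm{U_{t_*}}\bigr)^{-1}$. The pair stays finite and feasible down to that level, which yields $t_*-\beta\ge\rho^2/2048$ after bookkeeping of constants. \textbf{This is the step I expect to be hardest.} Controlling $\partial_t\Lop^{-1}$ needs the order-unit argument of Lemma~\ref{lem:kkt} to persist along the path, not only at $t_*$. I would first try to show that $(P,Q)$ remains a quantitative order unit for $\mathcal T$ on an interval of length about $\rho^2$.

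\textbf{Tail bounds.} I unfold the recursion as the root block of the resolvent $(tI-\mathcal K)^{-1}$ of a self-adjoint tree operator $\mathcal K$. Each level alternates $\pm S$ on the diagonal and couples to its children through square roots of $\Eop$. The level-$h$ truncation has the same closed-walk moments through length $2h+1$. Hence $X_t-X_h$ is a root compression of $p(\mathcal K)-p(\mathcal K_h)$, where $p(\lambda)=(t-\lambda)^{-1}-\pi(\lambda)$ and $\pi$ is any polynomial of degree $2h+1$. The spectra of $\mathcal K$ and $\mathcal K_h$ lie in $[-\beta,\beta]$. Taking $\pi$ to be the Chebyshev expansion of $(t-\lambda)^{-1}$ on that interval gives a sup error of at most $2\delta^{-1}q^{2h+2}$, where $q=(t-\sqrt{t^2-\beta^2})/\beta$ is the Bernstein-ellipse parameter. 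Adding the two errors gives the constant $4$.

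For the $t$-derivatives $U_h,V_h$, I apply the same argument to $(t-\lambda)^{-2}$. Differentiating the Chebyshev coefficients in $t$ contributes the factor $(2h+4)\delta^{-1}$, which gives the second line of \eqref{eq:chebtails}.
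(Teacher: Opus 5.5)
The tail bounds in your proposal follow the paper's argument: tree unfolding, matching root moments through degree $2h+1$, Chebyshev truncation on $[-\beta,\beta]$ using $\sqrt{t^2-\beta^2}\,(1-q)=\delta(1+q)\ge\delta$, then differentiating the tail. Your constants agree with the paper's.

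The genuine gap is the estimate $t_*-\beta\ge\rho^2/2048$. You leave it as a plan whose key inequality is never derived, and the growth law you posit is not the natural one. Differentiating $\Lop_t(U_t,V_t)=(I,I)$ gives $\Lop_t(-\partial_tU_t,-\partial_tV_t)=2\bigl(X_t^{-1}U_tX_t^{-1}U_tX_t^{-1},\,Y_t^{-1}V_tY_t^{-1}V_tY_t^{-1}\bigr)$. The only control on $\Lop_t^{-1}$ along the path is its order norm, which is $u(t)=\max(\norm{U_t},\norm{V_t})$ itself. The comparison is therefore cubic; a quadratic bound with a fixed constant would need a separate bootstrap.

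Your route can be finished:
\begin{itemize}
\item The order-unit argument of Lemma~\ref{lem:kkt} works at every $t>\beta$ with $(U_t,V_t)$ in place of $(P,Q)/\rho$.
\item The bound $\norm{X_t^{-1}}\le t+\norm{S}<8$ persists for $t\le t_*$.
\item Hence $-\dot u\le1024u^3$ with $u(t_*)\le\rho^{-1}$, so $u(t)^{-2}\ge\rho^2-2048(t_*-t)$ for $\beta<t\le t_*$.
\item You would still have to show that $u$ must blow up as $t\downarrow\beta$; otherwise the implicit function theorem continues the tight pair below $\beta$.
\end{itemize}
None of this is in the proposal, and it is unnecessary. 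The missing idea is to perturb the optimizer along its dual multipliers, $(X+aP,Y+aQ)$. Take $B=aX^{-1/2}PX^{-1/2}$ and use $(I+B)^{-1}\preceq I-B+B^2$, $\norm{X^{-1}}\le8$ and $P\preceq I$. This gives $(X+aP)^{-1}\preceq X^{-1}-aX^{-1}PX^{-1}+512a^2I$. Stationarity, $X^{-1}PX^{-1}=\Eop(Q)+\rho I$, exactly cancels the extra $a\Eop(Q)$ from the coupling term. Both constraints then hold at level $t_*-a\rho+512a^2$, and $a=\rho/1024$ gives $\beta\le t_*-\rho^2/2048$.

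Two smaller points.

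First, you assert that the spectra of $\mathcal K$ and $\mathcal K_h$ lie in $[-\beta,\beta]$; this needs an argument.
\begin{itemize}
\item For the upper edge: a feasible pair at level $t$ dominates the root blocks of every truncated resolvent. So all Schur complements stay positive, $tI-\mathcal K_h\succ0$, and hence $\lambda_{\max}(\mathcal K_h)\le\beta$.
\item For the lower edge: conjugating by the level-parity sign turns the $X$-rooted tree operator into minus the $Y$-rooted one.
\item The couplings must come from a Kraus decomposition $\Eop(Z)=\sum_jK_jZK_j^*$, with one child per $K_j$. A single square root of $\Eop$ does not work.
\end{itemize}

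Second, in the convexity step, $\partial_t^2X_t\succeq2X_t^3$ is false as a Loewner inequality in general. With $G=I+\Eop(V_t)$ the main term is $2X_tGX_tGX_t$. The difference $X_tGX_tGX_t-X_t^3$ contains $X_t\Eop(V_t)X_t^2+X_t^2\Eop(V_t)X_t$ and need not be positive semidefinite. Only the trace inequality holds, which is enough for the bound on $F''$. Alternatively, as in the paper, $\partial_t^2X_t$ is twice the root compression of $(t-\mathcal K)^{-3}$, hence $\succeq2(t+\beta)^{-3}I$.
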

\begin{proof}
For the gap, perturb the optimum to $(X+aP,Y+aQ)$. Since $\norm{X^{-1}},\norm{Y^{-1}}\le8$ and $P,Q\preceq I$,
\[
 (X+aP)^{-1}\preceq X^{-1}-aX^{-1}PX^{-1}+512a^2I.
\]
Stationarity cancels the first-order contribution of the trace map. Both constraints are then
feasible at level $t_*-a\rho+512a^2$. Taking $a=\rho/1024$ proves the gap.

To prove the tails, express the trace map as a sum $\Eop(Z)=\sum_jK_jZK_j^*$ with Hermitian Kraus matrices. For example take $\sqrt{c\psi_i}M_i^{1/2}H_\alpha M_i^{1/2}$, where the $H_\alpha$ form a Hermitian orthonormal basis. Build a rooted tree with $d$-dimensional fibers, alternating diagonal blocks $S,-S$, and edge blocks $K_j$. Schur complementation identifies~\eqref{eq:inverseiteration} with root blocks of its height-$h$ resolvents. The two parity operators are negatives of unitarily conjugate operators. Their common norm is $\beta$. To see this, feasible pairs bound all finite
truncations. Above the operator norms, the monotone root resolvents converge
to a feasible pair. Finite-support vectors show that the truncation norms have supremum $\beta$.

A closed walk from the root needs at least $2h+2$ edges to leave the truncation. Thus the finite and infinite root moments agree through degree $2h+1$. On $[-\beta,\beta]$ the Chebyshev expansion is
\[
 \frac1{t-z}=\frac1{\sqrt{t^2-\beta^2}}
 \left(1+2\sum_{k\ge1}q^kT_k(z/\beta)\right).
\]
Truncate at degree $2h+1$ and bound both compression errors. Since
$\sqrt{t^2-\beta^2}(1-q)=\delta(1+q)\ge\delta$, this gives the first bound in~\eqref{eq:chebtails}. Differentiate the tail, using $q'=-q/\sqrt{t^2-\beta^2}$, to obtain the second. Finally, the second derivative of a resolvent is twice its cube. At this
order, both root compressions are bounded below by $(t+\beta)^{-3}I$, which
proves the convexity estimate.
\end{proof}

The tree and Chebyshev polynomials are used only in the proof. The algorithm performs the elementary recursion~\eqref{eq:inverseiteration} and its derivative recursion
$U_{h+1}=X_{h+1}(I+\Eop(V_h))X_{h+1}$, with $U_0=X_0^2$, and the analogous formula for $V_h$.
Let $F_h=t+\rho\Tr(X_h+Y_h)$ and
$\alpha_h=\max(\norm{\Eop(X_h-X_{h-1})}_F,\norm{\Eop(Y_h-Y_{h-1})}_F)$.
The finite pair is feasible after raising $t$ by $\alpha_h$. Convexity gives the value interval
\begin{equation}\label{eq:valueinterval}
 F_h(t)-4|F_h'(t)|\le R(x)\le F_h(t)+\alpha_h,
 \qquad 0<t\le4.
\end{equation}
The lower inequality follows by evaluating the tangent line at the true minimizer $t_*\in(0,4)$.

Since $-\log q\ge\sqrt{\delta/t}$ and $g^{-1}=O(d^2)$, depth
$h=O(d\Lambda)$ gives any prescribed accuracy $\eta$ for the value and data.
Here $\Lambda=1+\log d+\log(1/\eta)$, and the internal tolerances are
polynomially smaller. For the scalar search, at $t=\beta+g/2$ the derivative
of $F$ is negative and bounded away from zero by a polynomial multiple of
$\rho dg$. The tail bounds transfer this sign to $F_h'$, so its minimizer
stays above this level. Guarded bisection uses $O(\Lambda)$ probes. If a Schur denominator is smaller than $g/16$, the queried level is below
the finite-tree norm plus $g/16$. It is therefore below the minimizer and
provides a lower bracket. All inverted denominators otherwise have a known positive bound. The derivative recursion and $F''$ bound recover the optimizer and $P=\rho U_t$, $Q=\rho V_t$.

\subsection{Response solves and endpoint tests}\label{sec:linearsolves}
With the current optimizer computed, we turn to the cached solves and
endpoint witnesses used in lines~\ref{line:endpoint}, \ref{line:frame},
and~\ref{line:trials}.
Cache $XM_iX$, $YM_iY$, and their trace Gramians. A solve $\Lop(U,V)=(F,G)$ reduces to the $2m\times2m$ coefficient system for $u_i=\Tr(M_iU)$, $v_i=\Tr(M_iV)$:
\[
 \begin{pmatrix}I&-cG_X\diag(\psi)\\-cG_Y\diag(\psi)&I\end{pmatrix}
 \binom u v=\binom{(\Tr(M_iXFX))_i}{(\Tr(M_iYGY))_i},
 \qquad (G_X)_{ij}=\Tr(M_iXM_jX).
\]
Recover $U=XFX+c\sum_i\psi_i v_iXM_iX$ and its companion. Preparation costs
$C_F(m,d)=O(md^3+m^2d^2+m^3)$; a further right side costs
$C_S(m,d)=O(md^2+m^2+d^3)$.

After this preparation, a solve with the KKT Jacobian~\eqref{eq:jacobian}
has the same order of cost. Cache
$B=\Lop^{-1}\mathbf e$, $D_1=\Lop^{-1}\mathscr HB$, and $s_0=\ip{\mathbf e}{D_1}\ge1/(d\sqrt\rho)$.
For new data $f_1,f_2,f_3$, set
\[
 A=\Lop^{-1}f_1,\quad C=\Lop^{-1}(f_2+\mathscr HA),\quad
 \lambda=(\ip{\mathbf e}{C}-f_3)/s_0,\quad U=A-\lambda B,\quad V=C-\lambda D_1.
\]
This uses two response solves, matrix products, and one scalar division.

Endpoint witnesses are cheaper: using $\norm{M_i}\le1$, their upper increases are
$[1-x_i-c\psi_i\Tr(M_iY)]_+$ and
$[1+x_i-c\psi_i\Tr(M_iX)]_+$.
Allocate half of $\delta_E$ to this test and half to current-data error. Every endpoint that preserves feasibility with the exact current data then
passes after refinement to polynomial precision. Failure therefore implies strict lightness. Neither an endpoint
test nor a local trial requires a new optimization.

\subsection{Uniform precision and progress}\label{sec:precision}
We fix the precision and sufficient scale for the acceptance test
\eqref{eq:acceptance} in line~\ref{line:accept}. Work on the sublevel $R\le K=16$ and the face $|x_i|\le1-\sigma/2$. The derivatives of the weight function through order three are bounded by fixed polynomials in $\sigma^{-1}$, directly from~\eqref{eq:profile}. On a light face one may use
\begin{equation}\label{eq:scalebounds}
 t_i^2\ge1/4,\quad t_i\le4/\sigma,\quad
 \ell_i,k_i\ge\sigma/10,\quad \ell_i^2,k_i^2\le16/\sigma.
\end{equation}
The sublevel bounds give $a_i,b_i\ge\tau/(2K)$ and $p_i,q_i\ge\rho\tau/(4K^2)$. Hence
$(D_F)_{ii},(D_G)_{ii}\ge F_-:=\sigma^2\rho\tau^2/(800K^3)$; their upper bounds and those of $z_i,w_i,\widetilde a_i,\widetilde b_i$ are polynomial. Equations~\eqref{eq:certificatebase}--\eqref{eq:normalizedmetric} then give polynomial bounds for $W,W^{-1}$ and $L_i$, and inverse-polynomial lower bounds for $g_0(x)$ and $a_x$. These quantities and~\eqref{eq:shape} bound all denominators away from zero.

Two cancellations give useful bounds on the response frame. With $S_D=\diag(D_F^{1/2},D_G^{1/2})$ and $\mathsf T_0=\left(\begin{smallmatrix}0&\mathsf B\\\mathsf A&0\end{smallmatrix}\right)$,
\begin{equation}\label{eq:dictionarycancellation}
 \Bcal=W^{-1}S_D(I-\mathsf T_0)^{-1}\binom I{-I}\diag(t_i),
 \qquad \norm{(I-\mathsf T_0)^{-1}}\le1+\frac{16c}{\rho\sigma}.
\end{equation}
In this identity, the diagonal changes of scale cancel before we take norms. For the inverse bound, apply~\eqref{eq:tracecontraction} to the maps $v\mapsto\sum_i\ell_iv_iM_i$ and $v\mapsto\sum_i k_iv_iM_i$, then use $\norm{\Lop^{-1}}\le\rho^{-1}$ as in~\eqref{eq:coeffresponse}. Also lightness and $\psi_i\ge1-x_i^2$ give $C_i\ge c$; thus~\eqref{eq:purities} yields $\norm{\mathsf T}_F^2\le2m/c$.

For explicit bounds, let $F_+$ bound $D_F,D_G$ above, $D_-=2F_-$, $t_-=1/2$, $t_+=4/\sigma$, and $R_+=1+16c/(\rho\sigma)$. Then
\begin{equation}\label{eq:grambounds}
 \Gamma_-I\preceq\Gamma\preceq\Gamma_+I,\qquad
 \Gamma_-:=\frac{t_-^2D_-}{2\norm W^2(1+2m/c)},\quad
 \Gamma_+:=2\norm{W^{-1}}^2F_+R_+^2t_+^2,
 \qquad \Sigma\succeq\frac{\Gamma_-}{m\Gamma_+}I.
\end{equation}
Replacing the norms by their uniform rational bounds gives $\lambda_*$.
Normalizing the covariance loses only one factor of $m$.

For derivatives,~\eqref{eq:jacobian} bounds the inverse KKT Jacobian by a polynomial $K_J$. Direct inverse differentiation and~\eqref{eq:tracecontraction} bound the joint derivatives of $\mathcal F$ through order three by a polynomial $A_0$ on a product neighborhood. For example, the bound $A_0=10^{12}(1+d)\rho^{-1}\sigma^{-4}$ is larger
than needed. The inverse terms contain at most five inverse factors.
Equation~\eqref{eq:tracecontraction} bounds the derivatives of the trace map,
and all remaining factors from the product rule are fixed. Implicit differentiation gives
\[
 b_1=K_JA_0,\quad b_2=K_JA_0(1+b_1)^2,\quad
 b_3=K_JA_0\bigl((1+b_1)^3+3(1+b_1)b_2\bigr).
\]
Multiplying by the norm of the objective's linear functional bounds the derivatives of $R$. The same calculation bounds the cubic constant $C$ and radius $d_0$ in Lemma~\ref{lem:witness}. Taking $H\ge1+\sqrt M$, $b_*=g_*\sqrt{\lambda_*}$, and an upper bound $\bar a\ge a_x$, a sufficient scale is
\begin{equation}\label{eq:stepfloor}
 s_c=\min\left\{\frac{d_0}{H},\frac{a_*}{4CH^3},
 \frac{b_*}{4(M_2H^2+\bar a)}\right\}.
\end{equation}
Decrease $d_0$ if necessary to keep all candidates within the truncated face. The paired Taylor argument of Section~\ref{sec:paired} now applies to feasible witnesses, with a decrease at least $a_xs^2/2$ for some candidate at every $s\le s_c$.

Use rational interval data and a fixed dyadic coordinate grid. Choose the
errors in the current value, directions, and grid so that the resulting
witness error is at most $a_*s_c^2/1024$. All maps used above have polynomial condition bounds, so inverse-polynomial input accuracies suffice. With $\widehat a(x)\in[a_x/2,a_x]$, use the acceptance test
\eqref{eq:acceptance}; these error bounds account for grid rounding in
line~\ref{line:trials}. Some candidate passes by scale $s_c/2$; every accepted local move decreases $R$ by at least
$\Delta_*=a_*s_c^2/128$.
Thus $L\le4/\Delta_*$ and $J_{\max}\le\lceil\log_2(2/s_c)\rceil+1$. These bounds prove the polynomial progress assertions.

Finite inversion recursions have polynomially many layers and denominators bounded away from zero. Although interval sensitivities may grow exponentially with the number of
layers, their logarithms remain polynomial. Polynomially many working bits
therefore suffice.
By~\eqref{eq:grambounds}, the same holds for coefficient solves and Cholesky
factorization. Frozen coordinates are stored exactly; active ones are rounded to the fixed grid. Together with rational preprocessing and norm scaling, this proves polynomial bit complexity in the input encoding. For the $\Czero+\epsilon$ variant, the same bounds are polynomial in $\epsilon^{-1}$.

\subsection{A matrix-free variant}
A randomized alternative constructs the response frame without forming a
dense matrix. A product by $\mathsf A$ or $\mathsf B$ costs $O(md^2+d^3)$ via
$\mathsf A u=c\diag(\ell_i)\mathcal A^*(X\mathcal A(\diag(\ell_i)u)X)$.
Let $J_\perp$ complete $J$ orthogonally and set $K_r=J_\perp^*(I-\mathsf T)W$. Projecting a random response onto $\ker K_r$ requires a solve with $K_rK_r^*$. At direction tolerance $\eta_h$, conjugate gradients costs
$O(\kappa_2(K_r)(md^2+d^3)\log(2/\eta_h))$ operations per probe, with polynomial conditioning and certified residual stopping. This construction avoids forming a dense Gram matrix, but its cost depends
on the displayed condition number. Its progress bound also requires an
additional normalization of the covariance. Theorem~\ref{thm:complexity} concerns the deterministic
frame construction.

\section{Further details of the applications}\label{sec:applicationdetails}
\subsection{Repeated halving}\label{sec:halvingproof}
We prove Proposition~\ref{prop:halving}. The half-partition error is below $1.684\sqrt{\varepsilon\norm{T_S}}$. In Weaver's normalization,
$\sum_iv_iv_i^*=16I$ and $\norm{v_i}^2\le1$, so each part has norm below
$8+4(1.684)=16-1.264$.

\begin{proof}
Use the recurrence~\eqref{eq:halvingrecurrence} from the main proof.
If $e_j\le C\sqrt{\varepsilon t_0}2^{-j/2}$, the next normalized error is at most
$C/\sqrt2+1.684\sqrt2\sqrt{1+C\sqrt{2^j\varepsilon/t_0}}$.
For $C=11.5$ and $j<k$, this is at most
$11.5/\sqrt2+1.684\sqrt2\sqrt{1+11.5/\sqrt{134}}<11.495<11.5$.
The numerical inequality is verified by rational square comparisons in the supplementary programs~\cite{certificate}. Empty or zero parts split trivially. The bound $2^k\le t_0/(67\varepsilon)\le N/67$ makes the output size and call count polynomial.
\end{proof}
\subsection{Paving and rational square roots}\label{sec:pavingproof}
We use the repeated-halving bound to prove Proposition~\ref{prop:paving}.
\begin{proof}
For factored input apply Proposition~\ref{prop:halving} to $A_i=(Ce_i)(Ce_i)^*$, whose traces are at most $\gamma$. Their sum is $CC^*$, of norm $t_0=\norm T$, and $\norm{\sum_{i\in S}A_i}=\norm{P_STP_S}$.
If $t_0\le\epsilon$ take one part. Otherwise enlarge the diagonal bound to $\gamma=\epsilon/268$ and take the largest $k$ with $2^k\le4t_0/\epsilon=t_0/(67\gamma)$. Then $2^{-k}t_0<134\gamma$, and~\eqref{eq:halving} gives
\[
 \norm{P_STP_S}<(134+11.5\sqrt{134})\gamma<268\gamma=\epsilon,
 \qquad r=2^k\le4t_0/\epsilon\le4/\epsilon.
\]
All comparisons involving $t_0$ are rational definiteness tests.

For unfactored input enlarge $\gamma$ to $\epsilon/269$. Obtain a rational Hermitian $C$ with $\norm{C-T^{1/2}}\le\gamma/10^4$, as justified below. Then $\widehat T=C^2$ has $\norm{\widehat T-T}\le3\gamma/10^4$, diagonal bound $\widehat\gamma=(1+3\cdot10^{-4})\gamma$, and norm $\widehat t_0\le1+3\gamma/10^4$. Unless the original matrix already has norm at most $\epsilon$, take the largest $k$ with $2^k\le\widehat t_0/(67\widehat\gamma)$. The same estimate, which does not require $\widehat t_0\le1$, gives
$\norm{P_STP_S}<268\widehat\gamma+3\gamma/10^4<269\gamma$.
Also $r\le\widehat t_0/(67\widehat\gamma)\le1/(67\gamma)=269/(67\epsilon)<4.02/\epsilon$.
The number of parts is at most a constant times the matrix dimension, since $\widehat t_0\le\Tr\widehat T\le d\widehat\gamma$.
\end{proof}

\paragraph{Rational square roots to prescribed accuracy.}
The unfactored case uses the following approximation: a rational PSD contraction has a rational Hermitian square-root approximation to error $\delta$ in time polynomial in $d$, the input length, and $\log(1/\delta)$. For $0<\delta<1$, set $a=\delta/4$, $H=T+a^2I$, and start Newton iteration $C_0=2I$, $C_{j+1}=(C_j+HC_j^{-1})/2$. The exact iterates commute with $H$. For each eigenvalue $\lambda\in[a^2,1+a^2]$, the ratio $(c_j-\sqrt\lambda)/(c_j+\sqrt\lambda)$ squares at every step. Thus $O(\log(1/a)+\log(1+\log(1/\delta)))$ iterations suffice for error $\delta/4$ relative to $H^{1/2}$. Symmetrize every computed iterate and round it to a prescribed Frobenius accuracy. As long as the accumulated error is below $a/2$, inverses have norm at most $2/a$, and a step amplifies error by at most $K_a=1+4/a^2$. A per-iteration rounding error below $\delta/(16kK_a^k)$ keeps accumulated error below $\delta/16<a/2$ for all $k$ iterations. With polynomial bit precision, we can use exact rational inversion followed
by this rounding. Finally $\norm{H^{1/2}-T^{1/2}}\le a$. This removes any factorization oracle from the second paving assertion.


\begin{thebibliography}{99}
\bibitem{BL} Y.~Bilu and N.~Linial.
Lifts, discrepancy and nearly optimal spectral gap.
\emph{Combinatorica} \textbf{26} (2006), 495--519.
\href{https://arxiv.org/abs/math/0312022}{arXiv:math/0312022}.
\bibitem{EJ} E.~Ezeunala and H.~Jiang.
Rank-one matrix discrepancy and algorithmic Kadison--Singer.
\href{https://arxiv.org/abs/2609.17266}{arXiv:2609.17266}, 2026.
\bibitem{KLS} R.~Kyng, K.~Luh, and Z.~Song.
Four deviations suffice for rank 1 matrices.
\emph{Advances in Mathematics} \textbf{375} (2020).
\href{https://arxiv.org/abs/1901.06731}{arXiv:1901.06731}.
\bibitem{MSS1} A.~W.~Marcus, D.~A.~Spielman, and N.~Srivastava.
Interlacing families I: Bipartite Ramanujan graphs of all degrees.
\emph{Annals of Mathematics} \textbf{182} (2015), 307--325.
\href{https://arxiv.org/abs/1304.4132}{arXiv:1304.4132}.
\bibitem{MSS2} A.~W.~Marcus, D.~A.~Spielman, and N.~Srivastava.
Interlacing families II: Mixed characteristic polynomials and the Kadison--Singer problem.
\emph{Annals of Mathematics} \textbf{182} (2015), 327--350.
\href{https://arxiv.org/abs/1306.3969}{arXiv:1306.3969}.
\bibitem{MSS4} A.~W.~Marcus, D.~A.~Spielman, and N.~Srivastava.
Interlacing families IV: Bipartite Ramanujan graphs of all sizes.
\emph{SIAM Journal on Computing} \textbf{47} (2018), 2488--2509.
\href{https://arxiv.org/abs/1505.08010}{arXiv:1505.08010}.
\bibitem{Cohen} M.~B.~Cohen.
Ramanujan graphs in polynomial time.
\emph{Proceedings of FOCS} (2016).
\href{https://arxiv.org/abs/1604.03544}{arXiv:1604.03544}.
\bibitem{MoserTardos} R.~A.~Moser and G.~Tardos.
A constructive proof of the general Lov\'asz Local Lemma.
\emph{Journal of the ACM} \textbf{57}(2), Article 11 (2010).
\href{https://arxiv.org/abs/0903.0544}{arXiv:0903.0544}.
\bibitem{Nilli} A.~Nilli.
On the second eigenvalue of a graph.
\emph{Discrete Mathematics} \textbf{91} (1991), 207--210.
\bibitem{Weaver} N.~Weaver.
The Kadison--Singer problem in discrepancy theory.
\emph{Discrete Mathematics} \textbf{278} (2004), 227--239.
\bibitem{BSB} E.~Akbas and S.~Sra.
\emph{Boolean small-ball inequalities for discrepancy theory}.
September 2026.
\bibitem{XXZ} J.~Xie, Z.~Xu, and Z.~Zhu.
Upper and lower bounds for matrix discrepancy.
\emph{Journal of Fourier Analysis and Applications} \textbf{28} (2022), Article 81.
\href{https://arxiv.org/abs/2006.12083}{arXiv:2006.12083}.
\bibitem{SongYue} Z.~Song and S.~Yue.
Polynomial time algorithms for the Kadison--Singer problem.
\href{https://arxiv.org/abs/2609.19794}{arXiv:2609.19794}, 2026.
\bibitem{KathuriaKS} T.~Kathuria.
A walk from free probability to matrix discrepancy II: Weaver's problem and the Kadison--Singer conjecture.
\href{https://arxiv.org/abs/2609.18913}{arXiv:2609.18913}, 2026.
\bibitem{RS} M.~Ravichandran and N.~Srivastava.
Asymptotically optimal multi-paving.
\emph{International Mathematics Research Notices} \textbf{2021}(14), 10908--10940.
\href{https://arxiv.org/abs/1706.03737}{arXiv:1706.03737}.
\bibitem{certificate} A.~Jadbabaie, A.~Saberi, and S.~Sra.
\emph{Exact scalar certificates and implementation analyses for the Gram-response rounding method}.
Supplementary code and certificates, 2026. To be made available on GitHub.
Exact parameters, verifiers, subdivision certificates, and verification records,
including exhaustive local repair checks, prefix checks on $K_{2,2}$, the
one-sided bipartite implementation, and the rational two-sided implementation.
\end{thebibliography}
\end{document}